\newcommand{\arxiv}[1]{\href{https://arxiv.org/abs/#1}{\texttt{arXiv:#1}}}
\newtheorem{theorem}{Theorem}
\newtheorem{lemma}[theorem]{Lemma}
\newtheorem{proposition}[theorem]{Proposition}
\newtheorem{corollary}[theorem]{Corollary}
\theoremstyle{definition}
\newtheorem{definition}[theorem]{Definition}
\newtheorem{example}[theorem]{Example}
\newtheorem{remark}[theorem]{Remark}
\newcommand{\Z}{{\mathbb Z}}
\newcommand{\am}{{\rm\aaa}^{\!-} }
\newcommand{\ap}{{\rm\aaa}^{\!+} }
\newcommand{\apm}{{\rm\aaa}^{\!\pm} }
\newcommand{\eb}{\rm{\bf e}}
\newcommand{\F}{\mathcal{F}}
\newcommand{\ot}{\otimes}
\newcommand{\Q}{{\mathbb Q}}
\newcommand{\ket}[1]{\lvert #1 \rangle}
\newcommand{\abs}[1]{\lvert #1 \rvert}
\newcommand{\prob}{P}
\newcommand{\stprob}{\mathbb{P}}  
\newcommand{\aaa}{\mathbf{a}}
\newcommand{\bb}{\mathbf{b}}
\newcommand{\cc}{\mathbf{c}}
\newcommand{\hh}{\mathbf{h}}
\newcommand{\ii}{\mathbf{i}}
\newcommand{\jj}{\mathbf{j}}
\newcommand{\kk}{\mathbf{k}}
\newcommand{\mm}{\mathbf{m}}
\newcommand{\qoa}{\mathcal{A}}  
\DeclareMathOperator{\wt}{wt}
\DeclareMathOperator{\tr}{Tr}
\definecolor{darkred}{rgb}{0.7,0,0} 
\newcommand{\defn}[1]{{\color{blue}\emph{#1}}} 
\begin{document}

\title[Strange five vertex model and ASEP]
{A strange five vertex model and multispecies ASEP on a ring}

\author[Atsuo Kuniba]{Atsuo Kuniba}
\address{Atsuo Kuniba, Institute of Physics, Graduate School
of Arts and Sciences, University of Tokyo, Komaba, Tokyo, 153-8902, Japan}
\email{atsuo.s.kuniba@gmail.com}

\author[Masato Okado]{Masato Okado}
\address{Masato Okado, Osaka Central Advanced Mathematical Institute \& 
Department of Mathematics, Osaka Metropolitan University, Osaka, 558-8585, Japan}
\email{okado@omu.ac.jp}

\author[Travis Scrimshaw]{Travis Scrimshaw}
\address{Travis Scrimshaw, Department of Mathematics, Hokkaido University, Sapporo  060-0808, Japan}
\email{tcscrims@gmail.com}

\maketitle

\begin{abstract}
We revisit the problem of constructing the stationary states of the multispecies asymmetric simple exclusion process on a one-dimensional periodic lattice. 
Central to our approach is a quantum oscillator weighted five vertex model 
which features a strange weight conservation distinct from the conventional one. 
Our results clarify the interrelations among several known results and refine their derivations. 
For instance, the stationary probability derived from the multiline queue construction by Martin (2020) and Corteel--Mandelshtam--Williams (2022) is identified with the partition function of a three-dimensional system.
 The matrix product operators by Prolhac--Evans--Mallick (2009) acquire a natural diagrammatic 
 interpretation as corner transfer matrices (CTM).
The origin of their recursive tensor structure, as questioned by Aggarwal--Nicoletti--Petrov (2023), 
is revealed through the CTM diagrams.
Finally, the derivation of the Zamolodchikov--Faddeev algebra by Cantini--de Gier--Wheeler (2015) is made intrinsic 
by elucidating its precise connection to a solution to the Yang--Baxter equation originating from quantum group representations.
\end{abstract}

\section{Introduction}

The asymmetric simple exclusion process (ASEP)~\cite{MGP,Sp} is a fundamental model of non-equi\-lib\-ri\-um stochastic dynamics with many applications in physics, biology, probability theory, and other scientific fields. 
In recent years, it has been extensively studied, particularly in one dimension, leading to a variety of generalizations and a wealth of results that intersect with statistical mechanics, algebraic combinatorics, special functions, integrable systems, 
representation theory, etc.
See for example~\cite{AKSS, ANP, BW, CDW, CMW, CRV, Martin20, PEM} and the references therein.

In this paper, we consider the standard continuous time $n$-species ASEP on a periodic lattice of length $L$. 
Each local state is selected from $\{0,1,\ldots, n\}$, where $1,\ldots, n$ represent the presence of one of the $n$ species of particles, and $0$ corresponds to an empty site. 
The model includes a parameter $t$ that determines the asymmetry of the nearest-neighbor hopping rates. 
The first significant problem is to construct a stationary state, which is unique within each sector specified by the particle content.
The problem is trivial for $n=1$, where all the possible states are equally probable.

The multispecies case $n\ge 2$ is non-trivial and has been solved in two intriguing ways: combinatorially and algebraically. 
The combinatorial approach is known 
as the multiline queue (MLQ) construction~\cite{CMW,Martin20}, 
while the algebraic method is based on the 
Zamolodchikov--Faddeev (ZF) algebra~\cite{CDW,PEM}.
The latter directly leads to the matrix product formula for the (unnormalized) 
stationary probability of the configuration $(\sigma_1,\ldots, \sigma_L) \in \{0,\ldots, n\}^L$:
\[
\mathbb{P}(\sigma_1,\ldots, \sigma_L) = \tr(X_{\sigma_1} \cdots X_{\sigma_L}).
\]
Here $X_0,\ldots, X_n$ are operators acting on some auxiliary space over which the trace is taken.
It is well known that the above formula is valid if there are (not necessarily unique)
spectral parameter dependent versions
$X_0(z), \ldots, X_n(z)$ satisfying the ZF algebra whose structure function 
is a stochastic $R$ matrix related to the Markov matrix of the $n$-ASEP.

Central in our approach is a certain five vertex model 
on the two-dimensional square lattice whose ``Boltzmann weights" 
take values in a $t$-deformed quantum oscillator algebra acting on its bosonic Fock space. 
We call it the $t$-oscillator weighted five vertex model. See (\ref{t:s5V}). 
A curious feature is that it does {\em not} satisfy the usual weight conservation
or the so-called  ``ice condition" as in the six vertex model~\cite{Bax}, nor does it satisfy the Yang--Baxter equation.
However, the model can also be interpreted as a three-dimensional system, where the Fock space is attached to the edges in the third direction.

The strange five vertex model plays a pivotal role, refining many known results~\cite{CDW,CMW,Martin20,PEM} and synthesizing their techniques together.
Despite not satisfying the Yang--Baxter equation, it still presents many beneficial aspects:
\begin{itemize}
\item The operators $X_0(z),\ldots, X_n(z)$ are formulated as \emph{corner transfer matrices} (CTM)  \`a la Baxter (cf.~\cite[Chap.13]{Bax}).
\item These CTM diagrams immediately lead to the recursion relation of these operators with respect to~$n$.
\item Stationary probabilities are identified with the partition functions of a three-dimensional system.
\item The generating sums of the combinatorial weights in MLQs are readily identified with a simple trace of the $t$-oscillators.
\item The rank-reducing $RTT=TTR$ relation for proving the ZF algebra is linked with the standard solution of the Yang--Baxter equation constructed from the symmetric tensor representations of the quantum group (see Remark~\ref{a:re:L}).
\end{itemize}

The second item above clarifies the origin of the tensor structure of 
$X_\alpha$ questioned at the end of~\cite[Sec.4.2]{ANP}. 
The conservation law in the strange five vertex model is designed 
around the MLQ pairing, providing a natural framework for such analysis. 
This paper presents these results concisely, without the need for heavy 
machinery from integrable probability, making it both accessible and efficient to read.

The outline of the paper is as follows.
In Section~\ref{sec:asep}, we recall the $n$-ASEP and a general relation between 
a matrix product formula $\tr(X_{\sigma_1}\cdots X_{\sigma_L})$ 
for the stationary probabilities and the ZF algebra among the 
spectral parameter-dependent operators 
$X_0(z),\ldots, X_n(z)$.

In Section~\ref{sec:mlq}, we reexamine  the combinatorial approach to stationary states 
by the MLQ construction~\cite{CMW,Martin20}. 
We demonstrate that the result can be expressed 
as a certain composition $\mathbb{M}(q,t)$ of linear operators 
which naturally lends itself to a diagrammatic representation as a CTM of size $n$ 
(Proposition~\ref{a:pr:P} and (\ref{eq:f8})).

In Section~\ref{sec:5v}, we introduce the strange five vertex model, whose 
statistical weights take values in a $t$-oscillator algebra $\qoa$.
This model can be interpreted as a three dimensional (3D) system, 
where $\qoa$ acts on Fock spaces in the third dimension.
We establish that the stationary probabilities correspond to the partition function of this 3D system, 
with boundary conditions derived from the $n$-ASEP configuration 
(Theorem~\ref{t:th:ms} and (\ref{a:fig10})).
The matrix product operators $X_0(z),\ldots, X_n(z)$ play the  role of  
the layer transfer matrices of the system (\ref{a:ef12}).
They obey a recursion relation with respect to the rank $n$,
 which follows directly from the CTM diagrams.

In Section~\ref{sec:zf}, we provide a new proof of 
the ZF algebra relation among $X_0(z),\ldots, X_n(z)$.
It is the most natural one from the perspective of quantum integrable systems, 
elucidating a precise relationship (\ref{a:cL}) with the Yang--Baxter equation of the relevant quantum $R$ matrices.

Section~\ref{sec:cr} is devoted to concluding remarks.

After the basic definitions of the model in Section~\ref{sec:asep}, 
the text can also be read in the following order: 
Section~\ref{sec:5v} and  Section~\ref{sec:zf}, 
to establish the matrix product formula first before proceeding 
to Section~\ref{sec:mlq}, where the connection with the MLQ method is explained.

\section{Multispecies ASEP}\label{sec:asep}

\subsection{Definition of $n$-ASEP}
Consider the periodic 1D lattice with $L$ sites, which will be denoted by $\Z_L$.
Each site $i \in \Z_L$ is assigned with a variable
$\sigma_i \in \{0,1,\ldots, n\}$, where 
$\sigma_i = \alpha$ is interpreted that the site $i$ is occupied by a particle of type $\alpha$
if $\alpha \neq 0$ and vacant if $\alpha=0$.
We assume $1 \le n <  L$ throughout.
The space of states is given by
\begin{align}\label{W}
(\mathbb{C}^{n})^{\otimes L} \simeq 
\bigoplus_{(\sigma_1,\ldots, \sigma_L) \in\{0,\ldots, n\}^L} 
\mathbb{C}|\sigma_1,\ldots, \sigma_L\rangle.
\end{align}
Consider a stochastic process
in which neighboring pairs of local states
$(\sigma_i, \sigma_{i+1}) = (\sigma, \sigma')$ 
are interchanged as $(\sigma,\sigma') \rightarrow (\sigma', \sigma)$
with the transition rate $t^{\theta(\sigma<\sigma')}$ with some parameter $t\ge 0$.
Here and in what follows we use the notation $\theta(\text{true}) = 1$ and $\theta(\text{false}) = 0$.
Let $P(\sigma_1,\ldots, \sigma_L; T)$ be the probability of finding 
the state $\ket{\sigma_1,\ldots, \sigma_L}$ at time $T$, and set
\begin{align}
|P(T)\rangle
= \sum_{(\sigma_1,\ldots, \sigma_L) \in\{0,\ldots, n\}^L}
P(\sigma_1,\ldots, \sigma_L; T)|\sigma_1,\ldots, \sigma_L\rangle.
\end{align}
By $n$-ASEP we mean a Markov process
governed by the continuous-time master equation
\begin{align}
\frac{d}{dT}|P(T)\rangle
= H |P(T)\rangle,
\end{align}
where the Markov matrix\footnote{Also called a ``Hamiltonian'' by abuse of terminology despite it not being Hermitian in general.}
has the form
\begin{align}
H &= \sum_{i \in \Z_L}H^{loc}_{i,i+1},
\qquad 
H^{loc} \colon \ket{\alpha, \beta} \mapsto 
(\ket{\beta,\alpha} - \ket{\alpha, \beta}) t^{\theta(\alpha< \beta)},
\end{align}
where $H^{loc}_{i,i+1}$ acts on the $i$th and the $(i\!+\!1)$th components
as $H^{loc}$ and as the identity elsewhere.
By the definition $H^{loc}$ is expressed as
\begin{align}\label{hloc}
H^{loc} &= \sum_{0\le \alpha < \beta \le n}
(t E_{\beta \alpha}\otimes E_{\alpha \beta} -t E_{\alpha \alpha}\otimes E_{\beta \beta}
+ E_{\alpha \beta} \otimes E_{\beta \alpha} - E_{\beta \beta}\otimes E_{\alpha \alpha})
\end{align}
in terms of the matrix unit $E_{\alpha \beta}$ acting as
$E_{\alpha \beta} \ket{\gamma} = \delta_{\beta \gamma} \ket{\alpha}$.

\begin{example}
Consider $n = 1$ and $L = 3$, then we have
\[
H^{loc}_{1,2} = \begin{pmatrix}
0
\\ & 0 & 0 & 0
\\ & 0 & -t & 1
\\ & 0 & t & -1
\\ &&&& -t & 1 & 0
\\ &&&& t & -1 & 0
\\ &&&& 0 & 0 & 0
\\ &&&&&&&0
\end{pmatrix}
\begin{matrix}
\ket{000} \\ \ket{001} \\ \ket{010} \\ \ket{100} \\ \ket{011} \\ \ket{101} \\ \ket{110} \\ \ket{111}
\end{matrix},
\quad
H = \begin{pmatrix}
0
\\ & A & 1 & t
\\ & t & A & 1
\\ & 1 & t & A
\\ &&&& A & 1 & t
\\ &&&& t & A & 1
\\ &&&& 1 & t & A
\\ &&&&&&&0
\end{pmatrix},
\]
where $A = -t-1$ and $H = H_{1,2}^{loc} + H_{2,3}^{loc} + H_{3,1}^{loc}$ with noting $H_{3,1}^{loc} = H_{3,4}^{loc}$ by convention.
\end{example}

As $H$ preserves the particle content, 
it acts on each \defn{sector} labeled with 
the \defn{multiplicity} $\mm=(m_0, \ldots, m_n) \in (\Z_{\ge 0})^{n+1}$ of the particles:
\begin{align}\label{a:VP}
W(\mm) =\!\!
\sum_{{\boldsymbol \sigma} \in 
\Sigma(\mm)}\! \mathbb{C}\ket{{\boldsymbol \sigma}}, \quad
\Sigma(\mm) = 
\{{\boldsymbol \sigma}=(\sigma_1, \ldots, \sigma_L) \in \{0, \ldots, n\}^L\; \mid \;
\sum_{j=1}^L \delta_{\alpha,\sigma_j} = m_\alpha,\forall \alpha\}. 
\end{align}
Note that $m_0 + \cdots + m_n = L$ holds and $\dim W(\mm) = \frac{L!}{m_0! \cdots m_n!}$.
A sector $W(m_0,\ldots, m_n)$ such that $m_\alpha \ge 1$ for all $0 \le \alpha \le n$ is called \defn{basic}.
Non-basic sectors are equivalent to a basic sector for $n'$-ASEP with some $n' < n$ by a suitable relabeling of species.
Thus we shall exclusively deal with basic sectors in this paper
(hence $n \le L$ as mentioned before).

\begin{example}
\label{ex:markov_matrix}
Consider $n = 2$ and $L = 4$.
Then the matrix $H$ restricted to the sector $\mm = (2,1,1)$ is
\[
\setcounter{MaxMatrixCols}{20}
\begin{pmatrix}
A & 1 & {\color{black!20}0} & {\color{black!20}0} & {\color{black!20}0} & 1 & {\color{black!20}0} & {\color{black!20}0} & {\color{black!20}0} & {\color{black!20}0} & t & {\color{black!20}0}  \\
t & B & 1 & 1 & {\color{black!20}0} & {\color{black!20}0} & {\color{black!20}0} & {\color{black!20}0} & {\color{black!20}0} & {\color{black!20}0} & {\color{black!20}0} & t \\
{\color{black!20}0} & t & C & {\color{black!20}0} & 1 & {\color{black!20}0} & {\color{black!20}0} & {\color{black!20}0} & {\color{black!20}0} & t & {\color{black!20}0} & {\color{black!20}0} \\
{\color{black!20}0} & t & {\color{black!20}0} & A & 1 & {\color{black!20}0} & {\color{black!20}0} & {\color{black!20}0} & 1 & {\color{black!20}0} & {\color{black!20}0} & {\color{black!20}0} \\
{\color{black!20}0} & {\color{black!20}0} & t & t & B & 1 & 1 & {\color{black!20}0} & {\color{black!20}0} & {\color{black!20}0} & {\color{black!20}0} & {\color{black!20}0} \\
t & {\color{black!20}0} & {\color{black!20}0} & {\color{black!20}0} & t & C & {\color{black!20}0} & 1 & {\color{black!20}0} & {\color{black!20}0} & {\color{black!20}0} & {\color{black!20}0} \\
{\color{black!20}0} & {\color{black!20}0} & {\color{black!20}0} & {\color{black!20}0} & t & {\color{black!20}0} & A & 1 & {\color{black!20}0} & {\color{black!20}0} & {\color{black!20}0} & 1 \\
{\color{black!20}0} & {\color{black!20}0} & {\color{black!20}0} & {\color{black!20}0} & {\color{black!20}0} & t & t & B & 1 & 1 & {\color{black!20}0} & {\color{black!20}0} \\
{\color{black!20}0} & {\color{black!20}0} & {\color{black!20}0} & t & {\color{black!20}0} & {\color{black!20}0} & {\color{black!20}0} & t & C & {\color{black!20}0} & 1 & {\color{black!20}0} \\
{\color{black!20}0} & {\color{black!20}0} & 1 & {\color{black!20}0} & {\color{black!20}0} & {\color{black!20}0} & {\color{black!20}0} & t & {\color{black!20}0} & A & 1 & {\color{black!20}0} \\
1 & {\color{black!20}0} & {\color{black!20}0} & {\color{black!20}0} & {\color{black!20}0} & {\color{black!20}0} & {\color{black!20}0} & {\color{black!20}0} & t & t & B & 1 \\
{\color{black!20}0} & 1 & {\color{black!20}0} & {\color{black!20}0} & {\color{black!20}0} & {\color{black!20}0} & t& {\color{black!20}0} & {\color{black!20}0} & {\color{black!20}0} & t & C
\end{pmatrix}
\begin{matrix}
\ket{0012} \\ \ket{0102} \\ \ket{1002} \\
\ket{0120} \\ \ket{1020} \\ \ket{0021} \\
\ket{1200} \\ \ket{0201} \\ \ket{0210} \\
\ket{2001} \\ \ket{2010} \\ \ket{2100}
\end{matrix}
\]
where $A = -2t-1$, $B = -2t-2$, and $C = -t-2$.
Note that the left null eigenvector of the Markov matrix $H$ is $(1, \ldots, 1)$ reflecting the total probability conservation.
\end{example}

\subsection{Stationary states}

In each sector $W(\mm)$ there is a unique state 
$\ket{\overline{P}(\mm)}$ up to a normalization, called the \defn{stationary state},  
satisfying $H\ket{\overline{P}(\mm)} = 0$.

The stationary state for $1$-TASEP is uniform in that all the configurations are realized with an equal probability.

\begin{example}\label{ex:pbar}
We present (unnormalized) steady states in small sectors of 
$2$-ASEP and $3$-ASEP in the form
\begin{align*}
 \ket{\overline{P}(\mm)} = \ket{\xi(\mm)} + C\ket{\xi(\mm)}
 + \cdots + C^{L-1}\ket{\xi(\mm)}
 \end{align*}
 where $C$ denotes a cyclic shift 
 $C\ket{\sigma_1,\ldots, \sigma_L} = \ket{\sigma_L, \sigma_1, \ldots, \sigma_{L-1}}$.
 Note that the choice of $\ket{\xi(\mm)}$ is not unique.
\begin{align*}
\ket{\xi(1,1,1)} &= (2+t)\ket{012} + (1+2t)\ket{021},
\\
\ket{\xi(2,1,1)} &=(3+t)\ket{0012} + 2(1+t)\ket{0102}+(1+3t)\ket{1002},
\\
\ket{\xi(1,2,1)} &=(2+t+t^2)\ket{0112}+(1+t)^2\ket{1012} + (1+t+2t^2)\ket{1102},
\\
\ket{\xi(1,1,2)} &= (3+t)\ket{1220} +2(1+t) \ket{2120} + (1+3t)\ket{2210},
\\
\ket{\xi(1,2,2)} & =(3+t+t^2) \ket{11220} + (2+t+2t^2) \ket{12120} + 
(1+3t+t^2)\ket{12210} 
\\
&+ (2+t+2t^2) \ket{21120} +(1+2t+2t^2) \ket{21210} + (1+t+3t^2)\ket{22110},\\
 \ket{\xi(2,1,2)} & =(1 + 6 t + 7 t^2 + 6 t^3)\ket{ 00221} 
 + (2 + 7 t + 6 t^2 + 5 t^3) \ket{ 02021} + (1 + t) (3 + 4 t + 3 t^2)\ket{ 02201} 
 \\
 &+ (1 + t) (3 + 4 t + 3 t^2) \ket{ 20021} + (5 + 6 t + 7 t^2 + 2 t^3) \ket{ 20201} 
 + (6 + 7 t + 6 t^2 + t^3) \ket{ 22001},\\
 \ket{\xi(2,2,1)} & =(3 + t + t^2) \ket{ 00112} + (2 + 2 t + t^2) \ket{ 01012} 
 + (2 + t + 2 t^2) \ket{ 01102}
 \\
 & + (1 + 3 t + t^2)\ket{ 10012} +(1 + 2 t + 2 t^2) \ket{10102} + 
 (1 + t + 3 t^2)\ket{ 11002},\\
 \ket{\xi(1,1,1,1)} & =  (9 + 7 t + 7 t^2 + t^3) \ket{0123}  
 + (3 + 11 t + 5 t^2 + 5 t^3) \ket{0213}  
 \\
& + 3 (1 + t)^3\ket{1023}  + 
 (5 + 5 t + 11 t^2 + 3 t^3) \ket{1203}  
 \\
 &+ 3 (1 + t)^3\ket{2013}  
 + (1 + 7 t + 7 t^2 + 9 t^3)\ket{2103}.
\end{align*}
These formulas reduce to~\cite[Ex.2.1]{KMO2} at $t=0$.
The result $ \ket{\xi(1,1,1,1)}$ agrees with the anti-clockwise reading of ~\cite[Fig.1.3]{Martin20} with $q\rightarrow t$, if the local states $1,2,3,\cdot$ therein are replaced by $3,2,1,0$ here, respectively.
Moreover, according to $\ket{\xi(2,1,1)}$ in the above, the right null eigenvector of the Markov matrix $H$ from Example~\ref{ex:markov_matrix}, the nontrivial stationary state up to normalization, is equal to the (column) vector
\[
[3 + t, 2 (1 + t), 1 + 3 t, 3 + t, 2 (1 + t), 1 + 3 t, 3 + t, 2 (1 + t), 1 + 3 t, 3 + t, 2 (1 + t), 1 + 3 t]^T.
\]
 
\end{example}

\subsection{Matrix product construction}\label{a:subsec:mp}

Consider the stationary state
\begin{align}\label{a:srb}
\ket{\overline{P}(\mm)} = \sum_{{\boldsymbol \sigma} \in \Sigma(\mm)}
\stprob({\boldsymbol \sigma}) \ket{ {\boldsymbol \sigma} }
\end{align}
and suppose that the stationary probability $\stprob({\boldsymbol \sigma})$ 
is expressed in the matrix product form
\begin{align}\label{a:mho}
\stprob(\sigma_1,\ldots, \sigma_L) 
= \tr (X_{\sigma_1}\cdots X_{\sigma_L})
\end{align}
in terms of some (not necessarily unique) operators $X_0, \ldots, X_n$.
Introduce the notations for the matrix elements of the local Markov matrix (\ref{hloc}) and 
the associated product of $X_i$'s as
\begin{align}\label{a:hdef}
H^{loc}\ket{\alpha, \beta } = \sum_{\gamma,\delta}h^{\gamma,\delta}_{\alpha,\beta}
\ket{\gamma, \delta},\qquad
(hXX)_{\alpha, \beta} := \sum_{\gamma,\delta}h^{\alpha,\beta}_{\gamma,\delta}
X_\gamma X_\delta.
\end{align}
Then we have
\begin{align*}
H  \ket{\overline{P}(\mm)} 
&= \sum_{i \in \Z_L}
\sum_{\boldsymbol \sigma \in \Sigma(\mm)} \stprob(\ldots, \sigma_i, \sigma_{i+1},\ldots) H^{loc}_{i,i+1}
\ket{\ldots, \sigma_i, \sigma_{i+1},\ldots}\\
&= \sum_{i \in \Z_L} \sum_{\boldsymbol \sigma \in \Sigma(\mm)} \sum_{\sigma'_i, \sigma'_{i+1}}
\tr(\cdots X_{\sigma_i}X_{\sigma_{i+1}}\cdots )
h^{\sigma'_i, \sigma'_{i+1}}_{\sigma_i, \sigma_{i+1}}
\ket{\ldots, \sigma'_i, \sigma'_{i+1},\ldots}\\
&= \sum_{\boldsymbol \sigma \in \Sigma(\mm)}\sum_{i \in \Z_L}
\tr(\cdots (hXX)_{\sigma_i, \sigma_{i+1}}\cdots )
\ket{\ldots, \sigma_i, \sigma_{i+1},\ldots}.
\end{align*}
Therefore if there are another set of operators 
${\widehat X}_1, \ldots, {\widehat X}_n$ obeying the \defn{hat relation}
\begin{align}\label{a:hrel}
(hXX)_{\alpha, \beta} = X_\alpha {\widehat X}_\beta - {\widehat X}_\alpha X_\beta,
\end{align}
the vector (\ref{a:srb})  satisfies $H\ket{\overline{P}(\mm)} = 0$ 
thanks to the cyclicity of the trace.
Then (\ref{a:mho}), assuming it is non-zero and finite, must coincide with the actual stationary probability 
up to an overall normalization due to the uniqueness of the stationary state.
Note on the other hand that ${\widehat X}_i$ satisfying the hat relation (\ref{a:hrel})
for a given $X_i$ is not unique.
For example
${\widehat X}_i \rightarrow {\widehat X}_i+ c X_i$ keeps (\ref{a:hrel}) valid.

From~\eqref{hloc} we find the explicit form of~\eqref{a:hrel} as
\begin{align}\label{a:hr1}
t^{\theta(\alpha>\beta)}X_\beta X_\alpha 
-t^{\theta(\alpha<\beta)}X_\alpha X_\beta = 
X_\alpha \widehat{X}_\beta - \widehat{X}_\alpha X_\beta \quad (0 \le \alpha, \beta \le n).
\end{align}
It is easily seen that~\eqref{a:hr1} is satisfied by setting 
\begin{align}
X_\alpha = X_\alpha(1),\qquad 
\widehat{X}_\alpha = (1-t)\left. \frac{dX_\alpha(z)}{dz}\right\rvert_{z=1}
\end{align}
for the operator 
$X_0(z), \ldots, X_n(z)$ involving a spectral parameter $z$ 
provided that they obey the relations
\begin{align}
(x-ty)X_\alpha(y)X_\beta(x) &= (1-t) xX_\alpha(x)X_\beta(y) + (x-y) X_\beta(x)X_\alpha(y)
\quad (0 \le \alpha < \beta \le n),
\label{a:hr21}
\\
[X_\alpha(x), X_\beta(y)] &= [X_\alpha(y), X_\beta(x)]\quad (0 \le \alpha, \beta \le n).
\label{a:hr22}
\end{align}
The relation~\eqref{a:hr21} allows one to interchange the order of 
the spectral parameters $y, x$ into $x,y$ for $\alpha<\beta$.
An analogous relation for $\alpha>\beta$ can be derived 
by combining~\eqref{a:hr22} and~\eqref{a:hr21} as 
\begin{align*}
X_\alpha(y)X_\beta(x) &= X_\beta(x)X_\alpha(y)-X_\beta(y)X_\alpha(x) 
+ X_\alpha(x)X_\beta(y)
\\
&=\left(1-\frac{(1-t)x}{x-ty}\right)X_\beta(x)X_\alpha(y) + 
\left(1-\frac{x-y}{x-ty}\right)X_\alpha(x)X_\beta(y).
\end{align*} 
In this way, one finds that~\eqref{a:hr21} and~\eqref{a:hr22} are presented in the form of a 
\defn{Zamolodchikov--Faddeev (ZF) algebra}:
\begin{align}\label{a:zf}
X_\alpha(y)X_\beta(x) = \sum_{\gamma, \delta=0}^n
R\bigl(y/x\bigr)^{\beta,\alpha}_{\gamma,\delta}X_\gamma(x)X_\delta(y).
\end{align}
Here the structure function is given by 
\begin{align}\label{a:rmat}
R(z)^{\alpha, \alpha}_{\alpha, \alpha} = 1,
\qquad
R(z)^{\alpha, \beta}_{\alpha, \beta} = \frac{(1-z)t^{\theta(\alpha<\beta)}}{1-tz},
\qquad
R(z)^{\beta,\alpha}_{\alpha, \beta} = \frac{(1-t)z^{\theta(\alpha>\beta)}}{1-tz},
\end{align}
for $\alpha \neq \beta$.
The other elements are zero.
This is known as a quantum $R$ matrix for the vector representation of $U_t(\widehat{sl}_{n+1})$~\cite{D86,J1}.
Set  $\widecheck{R}(z) = P R(z)$ with $P$ being the transposition $P(u \otimes v) = v \otimes u$.
This satisfies the Yang--Baxter relation (cf.~\cite{Bax})
\begin{align}\label{a:ybR}
\widecheck{R}_{23}(y)\widecheck{R}_{12}(xy)\widecheck{R}_{23}(x) 
= \widecheck{R}_{12}(x)\widecheck{R}_{23}(xy)\widecheck{R}_{12}(y),
\end{align}
which is the associativity of (\ref{a:zf}).
The $R$ matrix is stochastic in the sense that 
$\sum_{\alpha, \beta}R^{\alpha, \beta}_{\gamma,\delta}(z)=1$ for any $0 \le \gamma, \delta \le n$.

\section{Multiline queue construction}\label{sec:mlq}

We recapitulate the multiline queue (MLQ) construction of ASEP states by~\cite{Martin20} in a form adapted to our conventions. 
See also~\cite{CMW}.
We also reformulate it as a composition of the matrix $\widecheck{M}(z,t)$, which we introduce as the building block of the construction.

\subsection{Ball system}
We use the following notations:
\begin{gather}
\ii = (i_1, \ldots, i_L) \in \{0,1\}^L, \qquad \abs{\ii} = i_1 + \cdots + i_L, \qquad 
\ii \le \jj \overset{\text{def}}{\Longleftrightarrow}
\jj - \ii \in (\Z_{\ge 0})^L, 
\label{a:ii}\\
B_l = \{ \ii = (i_1, \ldots, i_L) \in \{0,1\}^L \mid \abs{\ii} = l \}.
\label{a:B}
\end{gather}

Consider a basic sector $W(\mm)$ for $\mm = (m_0, \ldots, m_n)$, and set
\begin{align}
l_i &= m_i + m_{i+1} + \cdots + m_n \quad (0 \le i \le n), \label{a:lk} \\
B(\mm) &= B_{l_n} \otimes B_{l_{n-1}} \otimes \cdots \otimes B_{l_1}. \label{a:bk}
\end{align} 
We prefer to use $\otimes$ to denote the product of sets rather than $\times$ as we are treating these sets as basis elements (in the sense of Kashiwara's crystal bases~\cite{Kashiwara90}).
Note that $L = l_0 > l_1 > \cdots > l_n \ge 1$ since the sector $W(\mm)$ is assumed to be basic.
Elements of $B(\mm)$ will be referred to as a \defn{ball system}.

Consider a ball system given as $\bb = \bb_n \otimes \cdots \otimes \bb_1 \in B(\mm)$, where 
$\bb_i = (b_{i1}, \ldots, b_{iL}) \in B_{l_i}$.
We identify $\bb$ with a \defn{ball diagram}, which is an $n \times L$ rectangular tableau
in which the box at the $i$th row and the $j$th column contains a ball if $b_{ij} = 1$ and is empty if $b_{ij} = 0$.
Here and in what follows, the rows (resp.\ columns) are numbered from the top 
(resp.\ the left) of the diagram.
The $r$-th row corresponding to $\bb_r$ will simply be called Row $r$.
A ball is understood as carrying the information of its location in the tableau.

\begin{example}\label{a:ex1}
Consider the $3$-ASEP on the length $L = 9$ lattice in the sector 
$W(\mm)$ with  ${\bf m}= (2,3,2,2)$. We have $(l_1, l_2, l_3) = (7,4,2)$.
Consider a ball system
\begin{subequations}
\begin{gather}
\bb = \bb_3 \otimes \bb_2 \otimes \bb_1 \in B_2 \otimes B_4 \otimes B_7, \\
\bb_3 = (001010000), \quad
\bb_2 = (110100010), \quad
\bb_1 = (011111101).
\end{gather}
\end{subequations}
Its corresponding ball diagram looks as follows:
\[
\begin{tikzpicture}[scale=0.8]
\foreach \x in {1,2,3}
  \draw (-.5,-\x) node {Row \x};
  \foreach \x in {2,3,4,5,6,7,9}
  \draw (\x, -1) circle (.25);
\foreach \x in {1,8}
  \fill[black] (\x, -1) circle (.1);
\foreach \x in {1,2,4,8}
  \draw (\x, -2) circle (.25);
\foreach \x in {3,5,6,7,9}
  \fill[black] (\x, -2) circle (.1);
\foreach \x in {3,5}
  \draw (\x, -3) circle (.25);
  \foreach \x in {1,2,4,6,7,8,9}
  \fill[black] (\x, -3) circle (.1);
\end{tikzpicture}
\]

\end{example}

\subsection{Multiline queue}\label{subsec:mq}
Let us introduce a \defn{pairing} of a ball system.
Given a ball system
$\bb = \bb_n \otimes \cdots \otimes \bb_1 \in B(\mm)$,
we set $\bb^{(n)} = \bb, \mm^{(n)}= \mm$ and 
construct a family of smaller ball systems 
\begin{align}\label{a:bmr}
\bb^{(r)} \in B({\bf m}^{(r)}) = B_{l^{(r)}_r} \otimes  \cdots \otimes 
B_{l^{(r)}_1}\qquad  (l^{(r)}_i=m_i+m_{i+1}+\cdots + m_r)
\end{align}
 in the order $r=n-1, n-2,\ldots, 1$.
The component $B_{l^{(r)}_i}$ corresponds to the Row $i$ of the ball system $\bb^{(r)}$.
Note that $l^{(r)}_r=m_r$.
Let $d_1,\ldots, d_{m_r}$ be the balls of Row $r$ in the ball diagram of (\ref{a:bmr}) from left to right in this order.
We pair the ball $d_\alpha$ to a ball $d'_\alpha$ in Row $(r-1)$ in the order $\alpha=1,2,\ldots, m_r$.
The choice of $d'_\alpha$ is arbitrary except that if there is a yet unpaired ball just above $d_\alpha$,
that must be selected.  (Such a case is called trivial pairing.)
The partners $d'_1,\ldots, d'_{m_r}$ are not necessarily aligned from left to right 
in Row $(r-1)$.
Let $d''_1,\ldots, d''_{m_r}$ be these balls read from left to right.
They are paired to the balls in Row $(r-2)$ in this order similarly.
Repeating this process we get a sequence of injections 
\begin{align}\label{a:cr}
 \{ \text{balls in Row $r$ of $\bb^{(r)}$} \} \hookrightarrow 
 \{ \text{balls in Row $(r-1)$ of $\bb^{(r)}$} \}
 \hookrightarrow \cdots \hookrightarrow 
 \{ \text{balls in Row $1$ of $\bb^{(r)}$} \}.
 \end{align}
Denote the final image by $\cc_r \in B_{m_r}$.
This procedure will be referred to as the $(n+1-r)$ th round  of the whole pairing.
After completion of it, we assign a \defn{color} $r$ to all the balls captured in the round, and 
eliminate them. The resulting ball system defines $\bb^{(r-1)}$.
A pairing of $\bb \in B(\mm)$ is obtained by doing the round  $1, 2, \ldots, n$ in this order,
where the  last round $n$ actually does not introduce an injection but only 
assigns the color $1$ to the remaining balls in the Row 1.

A pairing of a ball system  $\bb \in B(\mm)$ may be regarded as a collection 
$\phi=(\phi_{1,2},\ldots, \phi_{n-1,n})$ of injections
\begin{align}\label{a:inj}
\phi_{s-1,s} \colon 
 \{ \text{balls in Row $s$ of $\bb$} \} \hookrightarrow \{ \text{balls in Row $(s-1)$ of $\bb$} \}
 \end{align}
satisfying a certain condition.
A ball system assigned with a pairing is called a \defn{multiline queue} (MLQ) and 
denoted by $Q = (\phi, \bb)$.
We identify it with a MLQ diagram, which is the ball diagram endowed with an arrow $d' \leftarrow d$ assigned to 
each pair of balls $d, d'$ such that $d'=\phi_{r-1,r}(d)$ for some 
$2 \le r \le n$. 
The arrow starts from $d$ and proceeds to the left, cyclically wrapping if necessary, until it 
reaches $d'$ upstairs. 
There are in general many MLQs $Q = (\phi,\bb)$ for a given ball system $\bb$. 
The set of MLQs built in this way based on the set of ball systems $B(\mm)$ in 
(\ref{a:bk}) will be denoted by MLQ$(\mm)$.

Each ball $d$ in $Q = (\phi,\bb) \in \text{MLQ}(\mm)$ is 
uniquely colored as $\mathcal{C}(d) \in \{1,2,\ldots, n\}$ by the rule explained in the above.
By construction, the colors of balls in Row $r$ range over $\{r, r+1, \ldots, n\}$.
In particular, there are exactly $m_\alpha$ balls of color $\alpha$ for $1 \le \alpha \le n$
in Row 1 of any MLQ from $\text{MLQ}(\mm)$.
We will always understand that balls in MLQs have been colored.
  
\begin{example}\label{a:ex2}
The following is a MLQ for the ball system in Example~\ref{a:ex1}.
Non-trivial pairings are labeled as $p_1,\ldots, p_4$ for later convenience.
\[
\begin{tikzpicture}[scale=0.8]
\foreach \x in {1,2,3}
  \draw (-.5,-\x) node {Row \x};
\foreach \x in {3,5}
  \draw (\x, -3) circle (.25) node[color=red] {$3$};
\foreach \x in {1,2,4,6,7,8,9}
  \fill[black] (\x, -3) circle (.1);
\draw[rounded corners,thick,->,color=red] (5,-2.75) -- (5,-2.5) -- (4,-2.5) -- (4,-2.25);
\draw (4.5,-2.75) node {$p_2$};
\draw[rounded corners,thick,color=red] (3,-2.75) -- (3,-2.5) -- (.5,-2.5);
\draw (2.5,-2.75) node {$p_1$};
\draw[dotted,thick,color=red] (.5,-2.5) -- (0,-2.5);
\draw[dotted,thick,color=red] (10,-2.5) -- (9.5,-2.5);
\draw[rounded corners,thick,->,color=red] (9.5,-2.5) -- (8,-2.5) -- (8,-2.25);
\foreach \x in {1,2,4,8}
  \draw (\x, -2) circle (.25);
\draw[color=red] (8,-2) node {$3$};
\draw[color=red] (4,-2) node {$3$};
\draw[color=blue] (2,-2) node {$2$};
\draw[color=blue] (1,-2) node {$2$};
\foreach \x in {3,5,6,7,9}
  \fill[black] (\x, -2) circle (.1);
\draw[rounded corners,thick,->,color=red] (4,-1.75) -- (4,-1.25);
\draw[rounded corners,thick,->,color=red] (8,-1.75) -- (8,-1.45) -- (6,-1.45) -- (6,-1.25);
\draw (7.6,-1.2) node {$p_3$};
\draw[thick,->,color=blue] (2,-1.75) -- (2,-1.25);
\draw[rounded corners,thick,color=blue] (1,-1.75) -- (1,-1.55) -- (.5,-1.55);
\draw (0.55,-1.3) node {$p_4$};
\draw[dotted,thick,color=blue] (.5,-1.55) -- (0,-1.55);
\draw[dotted,thick,color=blue] (10,-1.55) -- (9.5,-1.55);
\draw[rounded corners,thick,->,color=blue] (9.5,-1.55) -- (5,-1.55) -- (5,-1.25);
\foreach \x in {2,3,4,5,6,7,9}
  \draw (\x, -1) circle (.25);
\draw[color=red] (4,-1) node {$3$};
\draw[color=red] (6,-1) node {$3$};
\draw[color=blue] (2,-1) node {$2$};
\draw[color=blue] (5,-1) node {$2$};
\draw (3,-1) node {$1$};
\draw (7,-1) node {$1$};
\draw (9,-1) node {$1$};
\foreach \x in {1,8}
  \fill[black] (\x, -1) circle (.1);
\end{tikzpicture}
\]
\end{example}

\subsection{ASEP state $\ket{P_{\text{MLQ}}(\mm)_q}$}
 
Here we construct an ASEP state  $\ket{P_{\text{MLQ}}(\mm)_q} \in W(\mm)$
in three steps.
It is known as the MLQ construction of the stationary state~\cite{CMW,Martin20}.
For ASEP, only the $q = 1$ case is necessary.
However, we explain a generalization including generic $q$ that was introduced in~\cite{CMW} for applications to Macdonald polynomials.
We shall focus on the construction process here.
The connection to the matrix product method in Section~\ref{a:subsec:mp} and a proof that 
$\ket{P_{\text{MLQ}}(\mm)_{q=1}}$ is indeed the stationary state will be presented in later sections.
\begin{description}
\item[Step 1]
We define a map from MLQs to ASEP configurations (see~\eqref{a:VP} for the 
definition of $\Sigma(\mm)$)
\begin{align}\label{a:pi}
\pi\colon \text{MLQ}(\mm) \rightarrow \Sigma(\mm); \qquad Q \mapsto (\sigma_1,\ldots, \sigma_L)
\end{align}
by stating that the image is the configuration of colored balls in Row 1 of $Q$, where 
empty boxes are regarded as 0.
Concretely, $\sigma_j = \mathcal{C}(d) \in \{1,\ldots, n\}$ if $d$ is 
the ball corresponding to $b_{1,j} = 1$ and 
$\sigma_j = 0$ if $b_{1,j} = 0$. 
For $Q$ in Example~\ref{a:ex2}, 
$\bb_1=(b_{1,1},\ldots, b_{1,8})$ 
is given in Example~\ref{a:ex1} and $\pi(Q) = (0,2,1,3,2,3,1,0,1)$.

\item[Step 2]
We assign a \defn{weight} ${\wt}_{q,t}$ to a MLQ $Q = (\phi, \bb)$ as
\begin{align}\label{a:wt1}
\wt_{q,t}(Q) = \prod_p \wt_{q,t}(p),
\end{align}
where the product is taken over all the pairs of balls $p = (d' \leftarrow d)$ in $Q$
specified by $\phi$ as $d' = \phi_{r-1,r}(d)$ for some $2 \le r \le n$.
Let $\mathcal{B}_{c,r}$ be the set of the balls in Row $r$ and Row $(r\!-\!1)$ having the color $c$.
There are $m_c$ such balls in both rows and they are paired by $\phi_{r-1,r}$.
We determine the weights in the following order according to the pairing procedure explained  in the previous subsection:
\begin{equation}\label{a:BB}
\begin{split}
&\mathcal{B}_{n,n}, \mathcal{B}_{n,n-1},\ldots, \mathcal{B}_{n,2},
\\
&\mathcal{B}_{n-1,n-1},\ldots, \mathcal{B}_{n-1,2},
\\
& \qquad \cdots \\
&\mathcal{B}_{2,2}.
\end{split}
\end{equation}
The $i$ th row here corresponds to the $i$ th round of the pairing.
Following the previous subsection,  
we consider the balls in Row $r$ from left to right within each $\mathcal{B}_{c,r}$.\footnote{
The sum of the resulting weights  (\ref{a:m1})  is actually independent of this order as shown in~\cite[Lem.2.1]{CMW}.}
When seeking a pairing partner of 
a ball $d$ in Row $r$, the balls $d'$ in Row $(r-1)$ that are not yet paired are called \defn{free}.
As mentioned after (\ref{a:inj}), the pairing $p = (d' \leftarrow d)$ is depicted as an arrow
going from $d$ to the left cyclically until it ends at $d'$ upstairs.
Suppose that $d$ and $d'$ are in the $j$-th and the $j'$-th columns of the 
MLQ diagram from the left.
If $j'=j$, it is a trivial pairing and we set $\wt_{q,t}(p)=1$.
Suppose $j'\neq j$.
Set $\delta_{\text{wrap}}=1$ if the arrow is wrapping, i.e.,  $j<j'$.
Otherwise we set $\delta_{\text{wrap}}=0$.
The free balls in Row $(r-1)$ in the columns $j'+1, j'+2,\ldots, j-1$
(indices regarded as elements in $ \Z_L$ here) are called \defn{skipped}.
Now the weight is given as
\begin{align}
\wt_{q,t}(p) =
\dfrac{(1-t)t^{\#\text{skipped}}q^{(c-r+1)\delta_{\text{wrap}}}}{1-q^{c-r+1}t^{\#\text{free}}}
=
 \begin{cases}
\dfrac{(1-t)t^{\#\text{skipped}}}{1-q^{c-r+1}t^{\#\text{free}}}
& \text{if $j'<j$}, \\[.7em]
\dfrac{(1-t)t^{\#\text{skipped}}q^{c-r+1}}{1-q^{c-r+1}t^{\#\text{free}}}
& \text{if $j<j'$}.
\end{cases}
\label{a:wt2}
\end{align}
In Example~\ref{a:ex2}, the weights of the non-trivial pairings are
\begin{equation}
\begin{split}
&\wt_{q,t}(p_1) = \frac{qt^2(1-t)}{1-qt^4},
\;\;
\wt_{q,t}(p_2) = \frac{1-t}{1-qt^3},
\;\;
\wt_{q,t}(p_3) = \frac{t(1-t)}{1-q^2t^6},
\;\;
\wt_{q,t}(p_4) = \frac{qt^2(1-t)}{1-qt^5},
\end{split}
\end{equation}
and $\wt_{q,t}(Q)$ is the product of them. 

\item[Step 3] By using $\pi$ (\ref{a:pi}) and the weight  (\ref{a:wt1}), 
the state $\ket{P_{\text{MLQ}}(\mm)_q}$ is constructed as 
\begin{equation}\label{Smlq}
\ket{P_{\text{MLQ}}(\mm)_q}
= \sum_{Q \in \text{MLQ}(\mm)}
\wt_{q,t}(Q)\ket{\pi(Q)}.
\end{equation}
\end{description}

We note that when $q=1$, the weight $\wt_{q,t}(Q)$ is invariant under (horizontal) $\Z_L$ cyclic shifts of $Q$.
Therefore $\ket{P_{\text{MLQ}}(\mm)_{q=1}}$ is translationally invariant.

\begin{example}
Consider the $(L,n) = (4,2)$ case.
The MLQs contributing to the states $|1012\rangle, |1021\rangle, |2011\rangle 
\in W(1,2,1)$ 
and their weights are given as
\[
\begin{array}{c@{\hspace{30pt}}c@{\hspace{20pt}}c@{\hspace{30pt}}c@{\hspace{20pt}}c}
\ket{1012}: &
\begin{tikzpicture}[scale=.8,baseline=-.5cm,rotate=180]
\draw (1, 1) circle (.25) node[color=red] {$2$};
\draw (1, 0) circle (.25) node[color=red] {$2$};
\draw (2, 0) circle (.25) node[color=blue] {$1$};
\draw (4, 0) circle (.25) node[color=blue] {$1$};
\foreach \x in {2,3,4}
  \fill[black] (\x, 1) circle (.1);
\fill[black] (3, 0) circle (.1);
\draw[rounded corners,thick,->,color=red] (1,.75) -- (1,.25);
\end{tikzpicture}
&
1,
&
\begin{tikzpicture}[scale=.8,baseline=-.5cm,rotate=180]
\draw (3, 1) circle (.25) node[color=red] {$2$};
\draw (1, 0) circle (.25) node[color=red] {$2$};
\draw (2, 0) circle (.25) node[color=blue] {$1$};
\draw (4, 0) circle (.25) node[color=blue] {$1$};
\foreach \x in {1,2,4}
  \fill[black] (\x, 1) circle (.1);
\fill[black] (3, 0) circle (.1);
\draw[rounded corners,thick,->,color=red] (3,.75) -- (3,.5) -- (4.5,.5);
\draw[dotted,thick,color=red] (4.5,.5) -- (5,.5);
\draw[dotted,thick,color=red] (0,.5) -- (.5,.5);
\draw[rounded corners,thick,->,color=red] (.5,.5) -- (1,.5) -- (1,.25);
\end{tikzpicture}
&
\dfrac{qt(1-t)}{1-qt^3};
\allowdisplaybreaks \\[1cm]
\ket{1021}: &
\begin{tikzpicture}[scale=.8,baseline=-.5cm,rotate=180]
\draw (2, 1) circle (.25) node[color=red] {$2$};
\draw (2, 0) circle (.25) node[color=red] {$2$};
\draw (1, 0) circle (.25) node[color=blue] {$1$};
\draw (4, 0) circle (.25) node[color=blue] {$1$};
\foreach \x in {1,3,4}
  \fill[black] (\x, 1) circle (.1);
\fill[black] (3, 0) circle (.1);
\draw[rounded corners,thick,->,color=red] (2,.75) -- (2,.25);
\end{tikzpicture}
&
1,
&
\begin{tikzpicture}[scale=.8,baseline=-.5cm,rotate=180]
\draw (3, 1) circle (.25) node[color=red] {$2$};
\draw (2, 0) circle (.25) node[color=red] {$2$};
\draw (1, 0) circle (.25) node[color=blue] {$1$};
\draw (4, 0) circle (.25) node[color=blue] {$1$};
\foreach \x in {1,2,4}
  \fill[black] (\x, 1) circle (.1);
\fill[black] (3, 0) circle (.1);
\draw[rounded corners,thick,->,color=red] (3,.75) -- (3,.5) -- (4.5,.5);
\draw[dotted,thick,color=red] (4.5,.5) -- (5,.5);
\draw[dotted,thick,color=red] (0,.5) -- (.5,.5);
\draw[rounded corners,thick,->,color=red] (.5,.5) -- (2,.5) -- (2,.25);
\end{tikzpicture}
&
\dfrac{qt^2(1-t)}{1-qt^3};
\allowdisplaybreaks \\[1cm]
\ket{2011}: &
\begin{tikzpicture}[scale=.8,baseline=-.5cm,rotate=180]
\draw (4, 1) circle (.25) node[color=red] {$2$};
\draw (4, 0) circle (.25) node[color=red] {$2$};
\draw (1, 0) circle (.25) node[color=blue] {$1$};
\draw (2, 0) circle (.25) node[color=blue] {$1$};
\foreach \x in {1,2,3}
  \fill[black] (\x, 1) circle (.1);
\fill[black] (3, 0) circle (.1);
\draw[rounded corners,thick,->,color=red] (4,.75) -- (4,.25);
\end{tikzpicture}
&
1,
&
\begin{tikzpicture}[scale=.8,baseline=-.5cm,rotate=180]
\draw (3, 1) circle (.25) node[color=red] {$2$};
\draw (4, 0) circle (.25) node[color=red] {$2$};
\draw (1, 0) circle (.25) node[color=blue] {$1$};
\draw (2, 0) circle (.25) node[color=blue] {$1$};
\foreach \x in {1,2,4}
  \fill[black] (\x, 1) circle (.1);
\fill[black] (3, 0) circle (.1);
\draw[rounded corners,thick,->,color=red] (3,.75) -- (3,.5) -- (4,.5) -- (4,.25);
\end{tikzpicture}
&
\dfrac{1-t}{1-qt^3}.
\end{array}
\]
Summing them at $q=1$ yields
\begin{align}
(1+t+t^2) \ket{P_{\text{MLQ}}(1,2,1)_{q=1}}
= (1+t)^2 \ket{1012} + (1+t+2t^2) \ket{1021} + (2+t+t^2) \ket{2011} + \cdots,
\end{align}
which agrees with $\ket{\xi(1,2,1)}$ in Example~\ref{ex:pbar} in view of the cyclic $\Z_L$ symmetry. 
\end{example}

\subsection{Matrix $\widecheck{M}$}

We introduce a matrix $\widecheck{M}$ 
which describes the interaction between neighboring rows in MLQs. 
Consider a MLQ for $n=2$ case with $(l_1,l_2)=(m,l)\, (l<m)$.
It contains two rows and have the form
$Q=(\phi,{\bf i} \otimes {\bf j})$ with ${\bf i} \otimes {\bf j} \in B_l \otimes B_m$.
Here ${\bf i} =(i_1,\ldots, i_L) \in B_l$ 
means that there is $i_k(=0,1)$ ball in the $k$~th column from the left in the lower Row 2.
Similarly ${\bf j} = (j_1,\ldots, j_L) \in B_m$ specifies the positions of balls in the upper Row 1.

For any 
$\aaa \otimes \bb\in B_l \otimes B_{m-l}$ and 
${\bf i} \otimes {\bf j} \in B_l \otimes B_m$ with $l<m$, define 
a generating function of the weights by
\begin{equation}
M(q,t)^{\aaa, \bb}_{{\bf i}, {\bf j}}
= \delta^{\aaa+\bb}_{\bf j}\sum_{\phi} \text{wt}_{q,t}(Q),
\label{a:m1}
\end{equation}
where $\delta^{\bf x}_{\bf y} = \theta({\bf x}={\bf y})$.
The sum is taken over the pairings 
$\phi$ satisfying the condition
\begin{align}
\phi(\{\text{balls in ${\bf i}$}\}) = \{\text{balls in $\aaa$}\}.
\label{a:m2}
\end{align}
To summarize, the indices $\aaa, \bb,\ii,\jj$ of $M(q,t)^{\aaa, \bb}_{{\bf i}, {\bf j}}$ have the following meaning,
where the last column is an interpretation in the language of the queuing processes in~\cite{Martin20}:
\begin{equation}
\begin{array}{c|l|l}
\text{Indices} & \text{ball picture} & \text{queuing process} \\
\hline
B_l \ni \mathbf{i} & \text{balls in the lower row} & \text{arrival} \\
B_m \ni \mathbf{j} & \text{balls in the upper row} & \text{service} \\
B_l \ni \mathbf{a} & \text{paired balls in the upper row} & \text{departure} \\
B_{m-l} \ni \mathbf{b} & \text{unpaired balls } \mathbf{j}-\mathbf{a} \text{ in the upper row} & \text{unused service} \\
\end{array}
\label{a:abij}
\end{equation}

\noindent
The constraint $\aaa + \bb={\bf j}$ in  (\ref{a:m1}) is natural from the queuing process interpretation.
In general, there are numerous choices for $\phi$ contributing to the sum (\ref{a:m1}), with a maximum of $l!$.

\begin{example}\label{a:ex:M}
Consider the two MLQs for $(l_1,l_2)=(\alpha+\beta+1, 2)$ as follows:
\[
\begin{tikzpicture}[scale=0.7,rotate=180]
\foreach \x in {1,3,6,9,11}
  \draw (\x, 0) circle (.25);
\foreach \x in {2,4,5,7,8,10}
  \draw (\x, 0) node {$\cdots$};
\draw (4.5, 1.15) circle (.25);
\draw (7.5, 1.15) circle (.25);
\draw[rounded corners,thick,->,color=red] (4.5,.9) -- (4.5,.6) -- (6,.6) -- (6,.25);
\draw[rounded corners,thick,->,color=blue] (7.5,.9) -- (7.5,.6) -- (11,.6) -- (11,.25);
\draw (2,-.6) node {$\overbrace{\hspace{55pt}}^{\alpha}$};
\draw (10,-.6) node {$\overbrace{\hspace{55pt}}^{\beta}$};
\end{tikzpicture}
\qquad
\begin{tikzpicture}[scale=0.7,rotate=180]
\foreach \x in {1,3,6,9,11}
  \draw (\x, 0) circle (.25);
\foreach \x in {2,4,5,7,8,10}
  \draw (\x, 0) node {$\cdots$};
\draw (4.5, 1.15) circle (.25);
\draw (7.5, 1.15) circle (.25);
\draw[rounded corners,thick,->,color=red] (4.5,.9) -- (4.5,.65) -- (11,.48) -- (11,.25);
\draw[rounded corners,thick,color=blue] (7.5,.9) -- (7.5,.65) -- (11,.65);
\draw[dotted,thick,color=blue] (11,.65) -- (11.5,.65);
\draw[dotted,thick,color=blue] (.5,.65) -- (1,.65);
\draw[rounded corners,thick,->,color=blue] (1,.65) -- (3,.65)-- (6,.48) -- (6,.25);
\draw (2,-.6) node {$\overbrace{\hspace{55pt}}^{\alpha}$};
\draw (10,-.6) node {$\overbrace{\hspace{55pt}}^{\beta}$};
\end{tikzpicture}
\]
For the corresponding $\aaa, \bb, {\bf i}, {\bf j}$ one has 
\begin{align}\label{a:Mex}
M(q,t)^{\aaa, \bb}_{{\bf i}, {\bf j}} 
= \frac{t^{\beta-1}(1-t)}{1-qt^{\alpha+\beta+1}}
\frac{1-t}{1-qt^{\alpha+\beta}}
+
\frac{qt^{\alpha+\beta}(1-t)}{1-qt^{\alpha+\beta+1}}
\frac{t^{\beta-1}(1-t)}{1-qt^{\alpha+\beta}}
=
\frac{t^{\beta-1}(1-t)^2(1+q t^{\alpha+\beta})}
{(1-qt^{\alpha+\beta})(1-qt^{\alpha+\beta+1})}.
\end{align}
\end{example}

For $l \in  \Z_{\ge 1}$, let $V_l$ be the vector space having a basis $\{v_\bb\}$ labeled by $B_l$ from~\ref{a:B}:
\begin{align}
V_l = \bigoplus_{\bb \in B_l}\mathbb{C}(q,t)v_\bb.
\end{align}
For $l<m$,  we define a linear operator 
$\widecheck{M}(z,t)$ depending on $t$ and another variable $z$ by
\begin{subequations}
\label{eq:MVV}
\begin{align}
\widecheck{M}(z,t)\colon V_l \otimes V_m & \rightarrow V_{m-l} \otimes V_l
\label{a:Mvv0} \\
v_\ii \otimes v_\jj \;&\mapsto
\sum_{\aaa \otimes \bb \in B_l \otimes B_{m-l}}
M(z,t)^{\aaa, \bb}_{\ii, \jj} \,v_\bb\otimes v_\aaa
\qquad 
(\ii \otimes \jj \in B_l \otimes B_m),
\label{a:Mvv}
\end{align}
\end{subequations}
where the double sum in the RHS is actually the single one 
$\sum_{\aaa  \in B_l, \aaa \le \jj} M(z,t)^{\aaa, \jj-\aaa}_{\ii, \jj} \,v_{\jj-\aaa}\otimes v_\aaa$
since $M(z,t)^{\aaa, \bb}_{\ii, \jj}=0$ unless $\bb=\jj-\aaa$ by (\ref{a:m1}).

\begin{remark}
One might think that by introducing $M'(q,t)^{\aaa, \bb}_{{\bf i}, {\bf j}} := \delta^{\aaa+\bb}_{{\bf i} + {\bf j}}\sum_{\phi} \text{wt}_{q,t}(Q)$,  setting
\begin{align*}
\widecheck{M}'(z,t) \colon V_l \otimes V_m & \rightarrow V_m \otimes V_l
\\
v_{\ii} \otimes v_{\jj} & \mapsto \sum_{\aaa \otimes \bb \in B_l \otimes B_m} 
M'(z,t)^{\aaa, \bb}_{\ii, \jj} \,v_\bb\otimes v_\aaa
\end{align*}
is more natural rather than (\ref{a:m1}) and (\ref{eq:MVV}) since it possesses the standard ``weight conservation''' property common in quantum $R$ matrices.
The reason we employ the strange $\widecheck{M}(z,t)$ in (\ref{eq:MVV}) 
is to make it fit with the queuing process interpretation in (\ref{a:abij}) 
and will further be detailed in the next subsection.
We additionally note that the Yang--Baxter equation 
\[
\widecheck{M}'(x,t)_{1,2}\widecheck{M}'(x y,t)_{2,3}\widecheck{M}'(y,t)_{1,2} 
= \widecheck{M}'(y,t)_{2,3}\widecheck{M}'(x y,t)_{1,2}\widecheck{M}'(x,t)_{2,3}
\]
is not valid for generic $x$ and $y$.
\end{remark}

\subsection{ASEP state $|P_{\text{MLQ}}({\bf m})_q\rangle$ from $\widecheck{M}$}

Let us depict~\eqref{a:Mvv} in a conventional diagram for vertex models (see, \textit{e.g.},~\cite{Bax}):

\begin{equation}
\label{t:vm}
\begin{tikzpicture}[scale=.6,>=latex,baseline=0]
\draw[->,line width=.15em] (-1,0) node[anchor=east] {$\ii$} -- (1,0) node[anchor=west] {$\aaa$};
\draw[->,line width=.15em] (0,-1) node[anchor=north] {$\jj$} -- (0,1) node[anchor=south] {$\bb$};
\draw[-] (-.3, -0.1) arc (180:270:.2) node[anchor=north east,scale=0.9] {$z$};
\end{tikzpicture}
\quad
\longleftrightarrow
\quad
M(z,t)^{\aaa, \bb}_{\ii, \jj},
\end{equation}
where we are taking $\bb = \jj - \aaa$.
The arrows here and in the rest of this section, as seen in (\ref{eq:f7}) 
and (\ref{eq:f8}), correspond to the indices from $B_l$ in (\ref{a:B}). 
In contrast, the arrows in the next section, except those in (\ref{a:bd0})--(\ref{a:bd2}), 
carry a single integer.  In order to distinguish them, 
we use thick arrows for the former and thin arrows for the latter.

In what follows $\widecheck{M}(z,t)$ will simply be denoted by $\widecheck{M}(z)$ as the parameter $t$ is fixed everywhere.
Let $\widecheck{M}(z)_{j,j-1}$ be the operator acting on the $(j,j\!-\!1)$~th components 
$V_{s_j} \otimes V_{s_{j-1}}$ in $V_{s_n} \otimes \cdots \otimes V_{s_1}$ as $\widecheck{M}(z)$ and as the identity elsewhere ($s_{\alpha}$'s are arbitrary positive integers).
 
A key role in our work is played by the operator 
 \begin{align}
 \mathbb{M}(q,t) &: V_{l_n} \otimes V_{l_{n-1}} \otimes \cdots  \otimes V_{l_1} 
 \rightarrow V_{m_1} \otimes V_{m_2} \otimes \cdots \otimes V_{m_n},
 \end{align}
 where $m_i$'s and $l_i$'s are related by~\eqref{a:lk}.
 It is given as a composition of $\frac{1}{2}n(n-1)$ $\widecheck{M}$'s as
 \begin{align}
 \mathbb{M}(q,t) &=  A_{n-1}A_{n-2}\cdots A_1,
 \\
 A_j &= \widecheck{M}(q^{n-j})_{j+1,j} \widecheck{M}(q^{n-j-1})_{j+2,j+1} \cdots \widecheck{M}(q)_{n,n-1}.
 \label{a:A}
 \end{align}
Explicitly, it reads
 \begin{equation}\label{a:mod}
 \begin{split}
 \mathbb{M}(q,t)=\qquad \qquad \qquad \qquad \qquad \qquad \quad 
  \widecheck{M}(q)_{n,n-1}& \\
 \times  \widecheck{M}(q^2)_{n-1,n-2} \widecheck{M}(q)_{n,n-1} & \\
 \cdots \qquad \\
 \times \widecheck{M}(q^{n-1})_{2,1}  \widecheck{M}(q^{n-2})_{3,2} 
 \cdots \widecheck{M}(q)_{n,n-1},
 \end{split}
 \end{equation} 
 where the $r$th row from the bottom is $A_r$.
 Let us illustrate the $n=3$ case:
\begin{align}
 \mathbb{M}(q,t) &=
 \underbrace{\widecheck{M}(q)_{3,2}}_{A_2}\,
\underbrace{\widecheck{M}(q^2)_{2,1}\widecheck{M}(q)_{3,2}}_{A_1}.
\end{align}
 The matrix elements of $A_1$ for the transition
 $v_{\bb_3} \otimes v_{\bb_2} \otimes v_{\bb_1} \rightarrow
v_{\bb'_2} \otimes v_{\bb'_1} \otimes  v_{{\bf c}_3}$ 
and those of $A_2$ for 
$v_{\bb'_2} \otimes v_{\bb'_1} \otimes v_{{\bf c}_3}  \rightarrow 
v_{{\bf c}_1} \otimes v_{{\bf c}_2} \otimes v_{{\bf c}_3}$ are depicted as 
\begin{equation}
\begin{aligned}
\begin{tikzpicture}[scale=0.9,>=latex]
\foreach \x in {1,2}
  \draw[->,line width=.15em] (-\x,0.3) node[anchor=north] {$\bb_{\x}$} -- (-\x,1.7) node[anchor=south] {$\bb'_{\x}$};
\draw[->,line width=.15em] (-2.7,1) node[anchor=east] {$\bb_3$} -- (-0.3,1) node[anchor=west] {$\cc_3$};
\draw[-] (-2-.2, 1) arc (180:270:.2) node[anchor=north east,scale=.8] {$q$};
\draw[-] (-1-.2, 1) arc (180:270:.2) node[anchor=north east,scale=.8] {$q^2$};
\end{tikzpicture}
\qquad \qquad 
\begin{tikzpicture}[scale=0.9,>=latex]
\draw[->,line width=.15em] (-1,0.3) node[anchor=north] {$\bb'_1$} -- (-1,1.7) node[anchor=south] {$\cc_1$};
\draw[->,line width=.15em] (-1.7,1) node[anchor=east] {$\bb'_2$} -- (-0.3,1) node[anchor=west] {$\cc_2$};
\draw[-] (-1-.2, 1) arc (180:270:.2) node[anchor=north east,scale=.8] {$q$};
\end{tikzpicture}
\end{aligned}
\label{eq:f7}
\end{equation}
Let us explain the meaning of these diagrams along with Example~\ref{a:ex2}. 
The left diagram in (\ref{eq:f7}) for $A_1$ 
shows the first round of the pairing process corresponding to the red arrows in Example~\ref{a:ex2}. 
One lets $\bb_3$ on Row 3 ``penetrate" Row 2 and then Row 1, obtaining the image ${\bf c}_3$ 
which specifies the location of color 3 balls at the top. See the definition of ${\bf c}_r$ given after (\ref{a:cr}).
The elements $\bb'_2$ and $\bb'_1$ represent the free balls left intact in Row 2 and Row 1 
in the first round, respectively.
In the second round of the pairing, one lets $\bb'_2$ interact with $\bb'_1$ 
as indicated by the blue arrows in Example~\ref{a:ex2}. 
This is depicted in the right diagram in (\ref{eq:f7}) for $A_2$, 
where ${\bf c}_2$ and ${\bf c}_1$ correspond to the color 2 and 1 balls in Row 1, respectively. 
In this way, ${\bf c}_1, {\bf c}_2, {\bf c}_3$ give the final list of color 1, 2, and 3 balls in Row 1. 
The weight $\text{wt}(Q)_{q,t}$ of the MLQ is equal to the element of $\mathbb{M}(q,t)$ for the transition 
$v_{\bb_3} \otimes v_{\bb_2} \otimes v_{\bb_1} 
\rightarrow v_{{\bf c}_1} \otimes v_{{\bf c}_2} \otimes v_{{\bf c}_3}$.
 Its diagram is obtained by combining the two in (\ref{eq:f7}).
 It results in a \textit{single} diagram in the left of the following:
 
\begin{equation}
\begin{minipage}{0.45\textwidth}
\centering
\text{$n=3$:} \\
\begin{tikzpicture}[scale=0.9,>=latex]
\foreach \x in {1,2,3}
  \draw[->, rounded corners, line width=.15em] (-\x,0.3) node[anchor=north] {$\mathbf{b}_{\x}$} -- (-\x,4-\x) -- (-0.3,4-\x) node[anchor=west] {$\mathbf{c}_{\x}$};
\foreach \x in {1,2}
  \draw[-] (\x-3-.2, \x) arc (180:270:.2) node[anchor=north east,scale=.8] {$q$};
\draw[-] (-1-.2, 1) arc (180:270:.2) node[anchor=north east,scale=.8] {$q^2$};
\end{tikzpicture}
\end{minipage}
\hfill
\begin{minipage}{0.45\textwidth}
\centering
\text{$n=4$:} \\
\begin{tikzpicture}[scale=0.9,>=latex]
\foreach \x in {1,2,3,4}
  \draw[->, rounded corners, line width=.15em] (-\x,0.3) node[anchor=north] {$\mathbf{b}_{\x}$} -- (-\x,5-\x) -- (-0.3,5-\x) node[anchor=west] {$\mathbf{c}_{\x}$};
\foreach \x in {1,2,3}
  \draw[-] (\x-4-.2, \x) arc (180:270:.2) node[anchor=north east,scale=.8] {$q$};
\foreach \x in {1,2}
  \draw[-] (\x-3-.2, \x) arc (180:270:.2) node[anchor=north east,scale=.8] {$q^2$};
\draw[-] (-1-.2, 1) arc (180:270:.2) node[anchor=north east,scale=.8] {$q^3$};
\end{tikzpicture}
\end{minipage}
\label{eq:f8}
\end{equation}

 \noindent
Note that $v_{\cc_1} \otimes v_{\cc_2} \otimes v_{\cc_3} 
 \in V_{m_1} \otimes V_{m_2} \otimes V_{m_3}$ from \eqref{a:Mvv0} and (\ref{a:lk}).
The right one in (\ref{eq:f8}) is the $n=4$ case.
 From (\ref{eq:f8}), the general $n$ case is clear. 
 The diagram for $\mathbb{M}(q,t)$ has the form of the \defn{corner transfer matrix} (CTM) 
 of the NW quadrant in~\cite[Fig.~13.2]{Bax}.\footnote{It coincides with a wiring diagram of the longest element 
 of the symmetric group $\mathfrak{S}_n$.}
The operator $A_j$ (\ref{a:A}) corresponds to the $j$~th row from the bottom in the diagram.
It encodes the weights in the $j$~th round of the pairing process, which concerns the balls in the $j$~th line of~\eqref{a:BB} from the top, and  they are colored in $n+1-j$.
The reason we employ the unusual $\bb = \jj - \aaa$ weight conservation in~\eqref{a:Mvv} and~\eqref{t:vm} 
is that it is necessary to describe the unpaired balls, which remain active and will be the relevant players in the 
subsequent rounds. See~\eqref{a:abij}.

The outputs of $\mathbb{M}(q,t)$ are superpositions of data of the form $v_{\cc_1}\otimes \cdots \otimes v_{\cc_n} \in V_{m_1} \otimes \cdots \otimes V_{m_n}$.
They are transformed to the ASEP states in $W(\mm)$~\eqref{a:VP} by a simple ``projection'':
\begin{equation}\label{a:Pi}
 \begin{aligned}
 \Pi \colon V_{m_1} \otimes \cdots \otimes V_{m_n} &\rightarrow 
 W(\mm)
 \\
 v_{\cc_1} \otimes \cdots \otimes v_{\cc_n} & \mapsto
 \ket{\cc_1+2\cc_2+\cdots + n\cc_n},
 \end{aligned}
\end{equation}
where the $0$~th component of $\mm = (m_0, m_1, \ldots, m_n)$ is determined by the condition $\abs{\mm} = L$.
To summarize the argument so far, we have explained that the state 
$\ket{P_{\text{MLQ}}(\mm)_q}$ constructed from the 
MLQ approach is expressed as follows.
 
 \begin{proposition}[CTM interpretation of the MLQ construction]\label{a:pr:P}
 \begin{align}
 \ket{P_{\text{MLQ}}(\mm)_q} = 
 \Pi\Bigl(
 \mathbb{M}(q,t)\sum_{\bb_n \otimes \cdots \otimes \bb_1 \in B(\mm)}
 v_{\bb_n} \otimes \cdots \otimes v_{\bb_1} 
 \Bigr).
 \end{align}
 \end{proposition}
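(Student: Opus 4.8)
The plan is to unwind the definitions of $\mathbb{M}(q,t)$, $\Pi$, and $\ket{P_{\text{MLQ}}(\mm)_q}$ and match terms on both sides. The key observation is that the composition $\mathbb{M}(q,t) = A_{n-1}\cdots A_1$ has been deliberately arranged so that each factor $\widecheck{M}(q^{n-j})_{j+1,j}$ inside $A_j$ accounts for exactly one of the injections $\phi_{s-1,s}$ appearing in a round of the pairing process, with its spectral parameter $q^{n-j}$ supplying the power of $q$ that matches $q^{c-r+1}$ in the weight formula \eqref{a:wt2}. So the skeleton of the argument is: (i) expand $\mathbb{M}(q,t)$ applied to $\sum_{\bb_n\otimes\cdots\otimes\bb_1} v_{\bb_n}\otimes\cdots\otimes v_{\bb_1}$ as a sum over intermediate ball configurations, (ii) identify each such intermediate configuration with a choice of pairing $\phi$ of some ball system $\bb \in B(\mm)$, (iii) check that the product of matrix entries $M(q^\bullet,t)^{\aaa,\bb}_{\ii,\jj}$ collected along the CTM diagram equals $\wt_{q,t}(Q)$ for $Q=(\phi,\bb)$, and (iv) check that $\Pi$ maps the output label $v_{\cc_1}\otimes\cdots\otimes v_{\cc_n}$ to $\ket{\pi(Q)}$.

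First I would set up the correspondence between a path through the CTM diagram \eqref{eq:f8} and an MLQ. Reading the diagram as in the discussion after \eqref{eq:f7}: the bottom row $A_1$ is the first round, in which $\bb_n$ penetrates Rows $n-1, n-2, \ldots, 1$; the factor $\widecheck{M}(q)_{n,n-1}$ handles the injection into Row $n-1$, the factor $\widecheck{M}(q^2)_{n-1,n-2}$ the subsequent injection into Row $n-2$, and so on, matching $\mathcal{B}_{n,n},\mathcal{B}_{n,n-1},\ldots,\mathcal{B}_{n,2}$ of \eqref{a:BB}. The output $\aaa$-leg of each $\widecheck{M}$ records the paired (departed) balls, which become color-$n$ balls eventually collected in $\cc_n$, while the $\bb$-leg records the free (unpaired) balls that survive to the next round, exactly as in the dictionary \eqref{a:abij}. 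Iterating, the $r$-th round is row $A_{n+1-r}$ of the diagram; the spectral parameters $q, q^2, \ldots$ along that row produce precisely the exponents $c-r+1$ for $c = r, r+1, \ldots$ demanded by \eqref{a:wt2}. Then the multiplicativity $\wt_{q,t}(Q) = \prod_p \wt_{q,t}(p)$ in \eqref{a:wt1} matches the fact that a matrix element of a composition of $\widecheck{M}$'s is a sum of products of matrix elements $M(q^\bullet,t)^{\aaa,\bb}_{\ii,\jj}$, and by \eqref{a:m1}–\eqref{a:m2} each such $M$ already packages the sum over the sub-pairings within a single pair of adjacent rows.

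Next I would verify the bookkeeping that makes the two sums agree set-theoretically. On the right side, summing over $\bb\in B(\mm)$ and then over all admissible path-labelings of the diagram is, after the identifications above, a sum over all pairs $(\phi,\bb)$, i.e.\ over $\text{MLQ}(\mm)$; the constraints $\bb = \jj - \aaa$ built into each $\widecheck{M}$ (the "strange" weight conservation) are exactly what enforces that the free balls passed along are the correct ones, so no spurious terms appear and none are missing. Finally, the projection $\Pi$ in \eqref{a:Pi} sends $v_{\cc_1}\otimes\cdots\otimes v_{\cc_n}$ to $\ket{\cc_1 + 2\cc_2 + \cdots + n\cc_n}$, and since $\cc_r$ (defined after \eqref{a:cr}) is precisely the indicator of the columns in Row $1$ holding a color-$r$ ball, this sum of vectors is the configuration $\pi(Q)$ of colored balls in Row $1$ with empty boxes read as $0$ — which is the definition of $\pi$ in Step 1. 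Comparing with \eqref{Smlq} gives the claim.

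The main obstacle I expect is step (iii): carefully confirming that the "$\#\text{skipped}$" and "$\#\text{free}$" counts in the weight \eqref{a:wt2}, together with the wrap factor $q^{(c-r+1)\delta_{\text{wrap}}}$, are reproduced correctly by a single matrix element $M(q^{n-j},t)^{\aaa,\bb}_{\ii,\jj}$ for the appropriate $j$. This amounts to checking that the generating function \eqref{a:m1}, defined via the two-row pairing of \eqref{a:m2}, genuinely equals the product over the individual pairs $p$ of \eqref{a:wt2} with the right value of $c - r + 1$; the relation $c - r + 1 = n - j$ must be tracked through the indexing of rows versus the indexing of $\widecheck{M}$-factors, and the denominators $1 - q^{c-r+1}t^{\#\text{free}}$ must be seen to telescope correctly as free balls are consumed. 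Example~\ref{a:ex:M} is the prototype of this computation and the general case is a routine but notation-heavy extension of it, so in the write-up I would state it as a lemma, prove it by the same induction on the number of balls in the upper row as in \cite[Lem.~2.1]{CMW}, and then assemble the pieces.
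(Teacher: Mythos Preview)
Your approach is essentially the same as the paper's: the proposition is stated there as a summary of the preceding discussion (``To summarize the argument so far\ldots''), and your steps (i)--(iv) faithfully recapitulate that explanation of how the factors $\widecheck{M}(q^{\bullet})$ in $A_j$ encode the successive rounds of pairing and how $\Pi$ recovers $\pi(Q)$. One small slip to fix: you write ``the $r$-th round is row $A_{n+1-r}$,'' which contradicts your own earlier (and correct) identification of $A_1$ with the first round; the paper's convention is that $A_j$ encodes the $j$-th round, assigning color $n+1-j$, and the spectral parameters $q,\ldots,q^{n-j}$ along $A_j$ match the exponents $c-r+1$ as $r$ ranges over the \emph{MLQ rows} $n+1-j,\ldots,2$ (not the round index).
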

  It is known to yield the actual stationary states at $q=1$~\cite{Martin20}.

\section{$t$-oscillator weighted five vertex model}\label{sec:5v}

We will compute the stationary probabilities by introducing a five vertex model.
It is different from those considered in~\cite{KMO1, KMO2} and 
does not satisfy the usual weight conservation property.
Instead, it uses the same strange weight conservation as in the previous section.

\subsection{$t$-deformed quantum oscillator}
Our five vertex model will be weighted using a $t$-oscillator algebra $\qoa$ 
generated by $\langle \aaa^+, \aaa^-, \kk \rangle$ satisfying the relations
\begin{align}\label{a:aak}
\kk \,\aaa^{\pm} = t^{\pm 1} \aaa^{\pm} \kk,
\qquad
\aaa^- \aaa^+ = 1 - t \kk,
\qquad
\aaa^+ \aaa^- = 1 - \kk.
\end{align}
It has a natural representation from its triangular decomposition, which we refer to as the \defn{bosonic Fock space}:
\begin{align}\label{a:F}
F := \bigoplus_{d=0}^{\infty} \Q(t) \ket{d};
\qquad
\kk \ket{d} = t^d \ket{d},
\quad
\aaa^+ \ket{d} = \ket{d+1},
\quad
\aaa^- \ket{d} = (1 - t^d) \ket{d-1},
\end{align}
extended by linearity with the convention $\ket{-1} := 0$.
We will also use the ``number" operator $\bf{h}$ defined by 
\begin{equation}\label{a:h}
\hh|d\rangle = d | d\rangle
\end{equation}
so that ${\bf k}= t^\hh$.

\begin{remark}\label{a:re:tau}
Consider a $\tau$-twist quantum oscillator algebra  
$\qoa_{\tau}$ having the relations
\begin{align}\label{a:aakt}
\kk \,\aaa^{\pm} = t^{\pm 1} \aaa^{\pm} \kk,
\qquad
\aaa^- \aaa^+ = 1 - t^{\tau} \kk^{\tau},
\qquad
\aaa^+ \aaa^- = 1 - \kk^{\tau}
\end{align}
and the representation of $F$:
\begin{align}\label{a:Ft}
\kk \ket{d} = t^d \ket{d},
\qquad
\aaa^+ \ket{d} = \ket{d+1},
\qquad
\aaa^- \ket{d} = (1 - t^{\tau d}) \ket{d-1},
\end{align}
in place of~\eqref{a:aak} and~\eqref{a:F}.
In this paper, we will use $\qoa=\qoa_{\tau = 1}$ only, but let us make a few remarks about the general case.
The representations of the quantum oscillator algebra can be used 
to construct representations of a (Drinfel'd--Jimbo) quantum group (see, \textit{e.g.},~\cite{Hayashi90,K22}), 
where the twist parameter $\tau$ is related to the deformation parameter in parallel to~\cite{AS24}.
When $\tau = 2$, this is the quantum oscillator algebra that has appeared in relation to the tetrahedron equation; see, \textit{e.g.},~\cite{KMO2,K22}.
The idea of realizing Lie algebra generators using bosons, i.e., 
creation and annihilation operators, in physics 
dates back to~\cite{HP}, which is referred to as the Holstein--Primakoff transformation.
\end{remark}

\subsection{$t$-oscillator weighted five vertex model}\label{subsec:5v}
We define our vertex model weights $S_{ij}^{ab}$ as follows:
\begin{equation}
\newcommand{\fvm}[4]{
\begin{tikzpicture}[scale=.35,>=latex]
\draw[->] (-1,0) node[anchor=east]{$#1$} -- (1,0) node[anchor=west]{$#2$};
\draw[->] (0,-1) node[anchor=north]{$#3$} -- (0,1) node[anchor=south]{$#4$};
\end{tikzpicture}
}%
\begin{array}{c*{5}{@{\hspace{20pt}}c}}
\fvm{i}{a}{j}{b}\quad &\fvm{0}{0}{0}{0} & \fvm{1}{1}{1}{0} & \fvm{0}{0}{1}{1} & \fvm{0}{1}{1}{0} & \fvm{1}{0}{0}{0} \\
S_{ij}^{ab} & 1 & 1 & \kk & \aaa^- & \aaa^+
\end{array}
\label{t:s5V}
\end{equation}
We set $S_{ij}^{ab}=0$ for all the other configurations.
Note that ${\bf k}$ corresponds to the unique configuration such that $j-a>0$.
The weight conservation holds in an unusual form:
\begin{align}\label{a:S0}
 S^{ab}_{ij}=0 \;\;\text{unless}\;\; \; a+b=j.
\end{align}
Compare~\eqref{a:S0} with~\eqref{a:m1}.
We let the number operator ${\bf h}$ also act on the $\{0,1\}$ states.
Then (\ref{t:s5V}) satisfies
\begin{equation}
\label{a:c23}
z^{{\bf h}+a}S^{ab}_{ij} = S^{ab}_{ij} z^{{\bf h}+i}, 
\end{equation} 
which implies that the total weight of the first and the $t$-oscillator components is conserved.

\begin{remark}\label{a:re:st}
The multispecies totally asymmetric simple exclusion process (TASEP) is the special case 
 corresponding to $t=0$ in the ASEP.
The five vertex model utilized in~\cite[Eq.(2.20)]{KMO1} 
for TASEP has the weights $\widetilde{S}^{ab}_{ij}|_{t=0}$,
where $\widetilde{S}^{ab}_{ij}$ is given by 
\newcommand{\fvm}[4]{
\begin{tikzpicture}[scale=.35,>=latex]
\draw[->] (-1,0) node[anchor=east]{$#1$} -- (1,0) node[anchor=west]{$#2$};
\draw[->] (0,-1) node[anchor=north]{$#3$} -- (0,1) node[anchor=south]{$#4$};
\end{tikzpicture}
}
\begin{equation}\label{a:stm}
\begin{array}{c*{5}{@{\hspace{20pt}}c}}
\fvm{i}{a}{j}{b}\quad &\fvm{0}{0}{0}{0} & \fvm{1}{1}{1}{1} & \fvm{0}{0}{1}{1} & \fvm{0}{1}{1}{0} & \fvm{1}{0}{0}{1}
\\
\widetilde{S}_{ij}^{ab} & 1 & 1 &  \kk & \aaa^- & \aaa^+
\end{array}
\end{equation}
It satisfies the usual weight conservation, i.e.,  $\widetilde{S}_{ij}^{ab}=0$ unless $a+b=i+j$.
The two models are related by $S^{ab}_{ij} = \widetilde{S}^{a,b+i}_{ij}$.
Both can be interpreted as three-dimensional vertex models in which the $t$-oscillator acts in the third direction.
The model (\ref{a:stm}) works efficiently for TASEP 
revealing the crystal theoretical nature~\cite{KMO1} of the MLQ method~\cite{FM}
and reducing the relevant ZF algebra to the tetrahedron equation at $t=0$~\cite{KMO2, K22}.
Although the ZF algebras for TASEP and ASEP are smoothly connected via the parameter $t$, 
we have not found a method to formulate the results in the present  paper based on the model 
(\ref{a:stm}).
\end{remark}

Next, for any 
$\aaa \otimes \bb\in B_l \otimes B_{m-l}$ and 
${\bf i} \otimes {\bf j} \in B_l \otimes B_m$ with $l<m$, we define 
$S(q,t)^{\aaa, \bb}_{\ii, \jj} $ 
 by a matrix product formula
\begin{equation}
\label{t:Sop}
S(q,t) _{\ii,\jj}^{\aaa,\bb}= (1 - q t^{m-\ell}) 
\tr_F \bigl(q^\hh S_{i_1j_1}^{a_1b_1} S_{i_2j_2}^{a_2b_2} \cdots S_{i_Lj_L}^{a_Lb_L} \bigr),
\end{equation}
where the trace is taken over $F$.
When the trace space is clear, we will omit it from the notation.
From (\ref{a:S0}) it follows that 
\begin{align}\label{a:S1}
S(q,t) _{\ii,\jj}^{\aaa,\bb}=0\;\; \text{ unless}\;\; \aaa + \bb = \jj.
\end{align}
Moreover from $|\jj |-|\aaa |=m-l>0$ and the comment after (\ref{t:s5V}), 
there is at least one ${\bf k}$ in the operators 
$S_{i_1 j_1}^{a_1 b_1}, \ldots, S_{i_L j_L}^{a_L b_L}$.
Therefore the trace (\ref{t:Sop}) is a valid formal power series in $t$ even at $q=1$.
The definition (\ref{t:Sop}) is depicted as~\cite[Fig.11.3]{K22}$|_{z\rightarrow q}$
if the vertices therein are regarded as $S^{a_r b_r}_{i_r j_r}$.
See also (\ref{a:fig9}) below.

\begin{example}\label{a:ex:S}
Suppose  $(l,m)=(2,\alpha+\beta+1)$ with $\alpha, \beta \ge 1$ and take
$\aaa, {\bf i} \in B_l, {\bf j} \in B_m$ and $\bb \in B_{m-l}$ as
\[
\aaa = (1,0^{\beta-1},0,1,0,0^\alpha),\;
\bb = (0,1^{\beta-1},0,0,0,1^\alpha),\;
\ii = (0,0^{\beta-1},1,0,1,0^\alpha),\;
\jj = (1,1^{\beta-1},0,1,0,1^\alpha),
\]
which satisfies $\aaa + \bb={\bf j}$. 
Now (\ref{t:Sop}) is calculated as
\begin{align*}
S(q,t) _{\ii,\jj}^{\aaa,\bb}&= (1 - q t^{\alpha+\beta-1}) 
\tr \bigl(q^\hh S^{10}_{01}(S^{01}_{01})^{\beta-1} S^{00}_{10}
S^{10}_{01}S^{00}_{10}(S^{01}_{01})^\alpha \bigr)
\\
&= (1 - q t^{\alpha+\beta-1}) 
\tr \bigl(q^\hh \aaa^- \kk^{\beta-1} \aaa^+ \aaa^- \aaa^+ \kk^\alpha \bigr)
\\
&=(1 - q t^{\alpha+\beta-1}) \sum_{d\ge 0}q^d(1-t^{d+1})t^{(\beta-1)(d+1)}(1-t^{d+1})t^{\alpha d}
= \frac{t^{\beta-1}(1-t)^2(1+qt^{\alpha+\beta})}{(1-qt^{\alpha+\beta})(1-qt^{\alpha+\beta+1})}.
\end{align*}
Note that this coincides with  $M(q,t) _{\ii,\jj}^{\aaa,\bb}$ in (\ref{a:Mex}).
The equality holds in general as shown in Theorem~\ref{t:th:ms} below, which is 
an essential result quantifying the algorithmic MLQ construction into the $t$-oscillator algebra.
\end{example}

\begin{theorem}\label{t:th:ms}
For any $\aaa, \ii \in B_l, \ii \in B_m$ and $\jj \in B_{m-l}$ with $l<m$, the following equality is valid:
\begin{equation}
\label{t:ms}
M(q,t)_{\ii,\jj}^{\aaa,\bb} = S(q,t)_{\ii,\jj}^{\aaa,\bb}.
\end{equation}
\end{theorem}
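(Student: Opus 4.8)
The plan is to expand both sides and then match them through a weight‑preserving reorganisation of sums.

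First, a reduction. If $\aaa+\bb\neq\jj$ both sides vanish by~(\ref{a:m1}) and~(\ref{a:S1}), so assume $\bb=\jj-\aaa$. The constraint $a_k+b_k=j_k$ then forces each quadruple $(i_k,a_k,j_k,b_k)$ to be one of the five vertices of~(\ref{t:s5V}) or else the pattern $(1,0,1,1)$, a lower ball directly below an \emph{unpaired} upper ball. In the latter case $S_{i_kj_k}^{a_kb_k}=0$; and on the MLQ side, when the pairing of Section~\ref{subsec:mq} reaches the lower ball in column $k$, the upper ball in column $k$ is either still free --- whence the ``directly above'' rule forces it to be used, contradicting $b_k=1$ --- or already used, contradicting $a_k=0$; so no $\phi$ contributes and the left side vanishes as well. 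Henceforth every column carries one of the five vertices.

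Next, expand the trace. Reading the product in~(\ref{t:Sop}) from the right and using the Fock action~(\ref{a:F}) gives
\[
S(q,t)_{\ii,\jj}^{\aaa,\bb}
=(1-qt^{\,m-\ell})\sum_{d\ge0}q^{d}\prod_{\kk\text{-cols }k} t^{\,d_k}\prod_{\aaa^-\text{-cols }k}\bigl(1-t^{\,d_k}\bigr),
\]
where $d_k=d+\sum_{j>k}(i_j-a_j)$ is the Fock level just before column $k$ (so $d_L=d$ and, since $|\ii|=|\aaa|=\ell$, also $d_0=d$), and the apparent positivity constraint $d_k\ge0$ is automatic because a violation kills some factor $1-t^{0}$. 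I would read $d_k$ as the number of ``open strings'' at the interface between columns $k$ and $k+1$: a string is created at each $\aaa^+$-vertex (a lower ball with nothing above), destroyed at each $\aaa^-$-vertex (a paired upper ball with nothing below), passes freely over the remaining vertices, and $d$ strings cross the periodic seam. On the MLQ side, the number of free upper balls at the moment the $\alpha$-th lower ball (from the left) is paired equals $m-\alpha+1$ regardless of $\phi$ and of which earlier pairings were trivial; hence the denominator attached to a non-trivial pairing depends only on $\alpha$, and expanding each $\tfrac{1}{1-qt^{\#\mathrm{free}}}=\sum_{s\ge0}(qt^{\#\mathrm{free}})^{s}$ turns $M(q,t)_{\ii,\jj}^{\aaa,\bb}$ into a sum over pairs $(\phi,\vec s)$. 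The core of the proof is to identify these two sums, via three ingredients: (i) for a fixed $\phi$, $\sum_{p}\#\mathrm{skipped}(p)=\sum_{f}\#\{\text{arcs of }\phi\text{ over the free upper ball }f\}$, and the inner count, as $(\phi,\vec s)$ vary, is precisely what is resummed into the exponent $d_k$ of a $\kk$-vertex; (ii) the wrapping of arcs across the seam together with the variables $\vec s$ produce the winding of the string configuration, matching the powers of $q$; (iii) the prefactor $1-qt^{\,m-\ell}=1-qt^{|\bb|}$ cancels the one geometric denominator generated by the $|\bb|$ many $\kk$-vertices whose string count is $d$ to leading order, correcting the overcounting of the winding label.

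The step I expect to be the main obstacle is precisely this matching: the left side ranges over \emph{all} reroutings allowed by the MLQ rule --- a lower ball may reach a distant upper ball by wrapping around the ring, and a lower ball sitting directly below an already‑paired upper ball is free to pair elsewhere --- and each rerouting alters several local vertices at once, whereas the right side is a \emph{single}, rigidly determined word in $\aaa^{\pm},\kk$, so one must show the rerouting sum telescopes into that word. The mechanism is already visible in the small cases, e.g.\ the telescoping $\tfrac{1}{1-qt^{3}}\bigl(1+\tfrac{qt^{2}(1-t)}{1-qt^{2}}\bigr)=\tfrac{1}{1-qt^{2}}$ for $L=4$, $\ell=2$, $m=3$, an instance of $\tfrac{1}{1-qt^{a}}-\tfrac{1}{1-qt^{a+1}}=\tfrac{qt^{a}(1-t)}{(1-qt^{a})(1-qt^{a+1})}$ --- which is exactly what $\aaa^{+}\aaa^{-}=1-\kk$ produces under $\tr_F(q^{\hh}\,\cdot\,)$. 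Organising the resummation along the rounds of the pairing algorithm (peeling off the leftmost lower ball, which is forced to pair trivially whenever an upper ball sits directly above it) reduces the identity to this telescoping together with the commutations $\kk\,\aaa^{\pm}=t^{\pm1}\aaa^{\pm}\kk$ of~(\ref{a:aak}) and $q^{-\hh}\aaa^{\pm}q^{\hh}=q^{\mp1}\aaa^{\pm}$; the base case $\ell=0$, where the product reduces to $\kk^{m}$ and $(1-qt^{m})\tr_F(q^{\hh}\kk^{m})=1=M$, is immediate.
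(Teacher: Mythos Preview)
Your inductive closing—peel off the leftmost lower ball, with base case $\ell=0$—is the paper's approach. The paper makes the non-trivial step explicit by taking the leftmost $\aaa^+$ in the operator word and commuting it cyclically leftward through the trace using
\[
\aaa^-\aaa^+ = t\,\aaa^+\aaa^- + (1-t),\qquad \kk\,\aaa^+ = t\,\aaa^+\kk,\qquad q^{\hh}\aaa^+ = q\,\aaa^+ q^{\hh},
\]
together with the bookkeeping $1=(1-t)+t$ at each $S_{11}^{10}$ column. Every time $\aaa^+$ meets an upper ball (an $\aaa^-$, a $\kk$, or an $S_{11}^{10}$) one branch pairs there with factor $(1-t)$ and the other skips with factor $t$; crossing $q^{\hh}$ gives $q$. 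A full revolution returns the word to itself times $qt^{m}$, so $(1-qt^m)$ times the trace equals the sum of pairing branches, each a reduced trace with the MLQ prefactor $(1-t)t^{\#\mathrm{skipped}}q^{\delta_{\mathrm{wrap}}}$; dividing recovers exactly the MLQ recursion for the first-processed ball (using the order-freedom of the MLQ sum to process this particular lower ball first with $\#\mathrm{free}=m$). Note that the operative relation is $\aaa^-\aaa^+ = t\aaa^+\aaa^- + (1-t)$, not your $\aaa^+\aaa^- = 1-\kk$; the latter is only its shadow under $\tr_F(q^{\hh}\,\cdot\,)$ and does not by itself drive the commutation that produces the pairing sum.

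Your string picture and the proposed $(\phi,\vec s)\leftrightarrow(\text{winding }d)$ correspondence would be a genuinely different, non-inductive route, and the heuristics (i)--(iii) are suggestive. But you correctly flag this as the main obstacle and do not carry it out: no bijection is exhibited, the origin of the $\aaa^-$-column factors $(1-t^{d_k})$ on the MLQ side is left open, and the rerouting freedom (a lower ball below an already-used upper ball may pair elsewhere) makes the combinatorics delicate. So as written the proposal is a correct outline of the paper's inductive proof, preceded by an interesting but incomplete alternative. Your clean disposal of the forbidden pattern $(i,j,a,b)=(1,1,0,1)$ is a nice point the paper leaves implicit.
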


\begin{proof}
In view of (\ref{a:m1}) and (\ref{a:S1}), we shall exclusively consider the non-trivial situation $\aaa+\bb=\jj$.
We prove~\eqref{t:ms} by comparing the recursion relations with respect to $l$ on both hand sides.
Let us illustrate the derivation for $S(q,t)_{\ii,\jj}^{\aaa,\bb}$ using the example
$\ii=(000000110100), \aaa=(100010000100) \in B_{l=3}$,
$\jj=(100110001101) \in B_{m=6}$ with $\bb=\jj-\aaa$ and $L=12$.
According to (\ref{a:abij}), the ball diagram relevant to 
$M(q,t)_{\ii,\jj}^{\aaa,\bb}$ looks as
\begin{align}\label{a:bd}
\begin{tikzpicture}[scale=0.7]
\foreach \x in {4,9,12}
  \draw (\x, 0) circle (.25);
  \foreach \x in {1,5,10}
  \draw[fill=red!70] (\x,0) circle (0.25);
\foreach \x in {2,3,6,7,8,11}
  \fill[black] (\x, 0) circle (.1);
  \foreach \x in {7,8,10}
  \draw (\x, -1) circle (.25);
\foreach \x in {1,2,3,4,5,6,9,11,12}
  \fill[black] (\x, -1) circle (.1);
  \draw(1,-2) node{$\aaa^-$};
\draw(4,-2) node{$\kk$};
\draw(5,-2) node{$\aaa^-$};
\draw(7,-2) node{$\aaa^+$}; 
\draw(8,-2) node{$\aaa^+$};
\draw(9,-2) node{$\kk$};
\draw(10,-2) node{$1$};
\draw(12,-2) node{$\kk$};
\end{tikzpicture}
\end{align}
where the upper and lower rows stand for $\jj$ and $\ii$, respectively, and 
$\aaa$ is depicted as the red shaded balls.\footnote{This red shading has nothing to do with the color $\mathcal{C}$ of balls 
assigned in the MLQs explained in Section \ref{subsec:mq}.}
We have also presented $S^{a_1 b_1}_{i_1 j_1}, \ldots, S^{a_L b_L}_{i_L j_L}$ that are not 1 at the bottom,
except for the third column from the right, for later convenience.
Thus (\ref{t:Sop}) reads 
\begin{align*}
S(q,t)_{\ii,\jj}^{\aaa,\bb} = (1-qt^3)\Theta_0, \qquad 
\Theta_0 = \tr(q^\hh \aaa^- \kk\, \aaa^- \aaa^+ \aaa^+ \kk\, 1 \kk).
\end{align*}
What we do is to move the leftmost $\aaa^+$  here to the left 
cyclically to go around the trace once by using the 
following form of the relations (\ref{a:aak}) for $\mathcal{A}$:
\begin{subequations}
\label{a:qoa}
\begin{align}
\aaa^- \aaa^+ & = t \aaa^+ \aaa^- + (1 - t), 
\label{a:aat}\\
\kk \,\aaa^+ & = t \aaa^+ \kk, \\
q^\hh \aaa^+ & = q\,\aaa^+ q^\hh.
\end{align}
\end{subequations}
By applying (\ref{a:aat}), the trace $\Theta_0$ is decomposed as
\begin{equation}
\Theta_0=\Theta'_0+\Theta_1,
\qquad 
\Theta'_0 = (1-t) \tr(q^\hh \aaa^- \kk\, \aaa^+ \kk\, 1 \kk) ,
\qquad
\Theta_1 = t \tr(q^\hh \aaa^- \kk\, \aaa^+ \aaa^- \aaa^+ \kk\, 1 \kk).
\end{equation}
The term $\Theta'_0$ corresponds to the following diagram where the connected balls should be understood absent: 
\begin{align}\label{a:bd0}
\begin{tikzpicture}[scale=0.7]
\draw(5,1) node {$1\!-\!t$};
\foreach \x in {4,9,12}
  \draw (\x, 0) circle (.25);
  \foreach \x in {1,5,10}
  \draw[fill=red!70](\x,0) circle (0.25);
\foreach \x in {2,3,6,7,8,11}
  \fill[black] (\x, 0) circle (.1);
  \foreach \x in {7,8,10}
  \draw (\x, -1) circle (.25);
\foreach \x in {1,2,3,4,5,6,9,11,12}
  \fill[black] (\x, -1) circle (.1);
  \draw(1,-2) node{$\aaa^-$};
\draw(4,-2) node{$\kk$};
\draw(8,-2) node{$\aaa^+$};
\draw(9,-2) node{$\kk$};
\draw(10,-2) node{$1$};
\draw(12,-2) node{$\kk$};
\draw[rounded corners,thick,->,color=black] (7,-0.76) -- (7,-0.5) -- (5,-0.5)--(5,-0.24);
\end{tikzpicture}
\end{align}
The trace
$\Theta_1$ is decomposed as
\begin{equation}
\begin{split}
\Theta_1 &= t^2 \tr(q^\hh \aaa^- \aaa^+ \kk\, \aaa^- \aaa^+ \kk\, 1 \kk) = \Theta'_1+\Theta_2,
\\
\Theta'_1&= t^2(1-t) \tr(q^\hh \kk\,  \aaa^- \aaa^+ \kk\, 1 \kk), \quad 
\Theta_2 = t^3\tr(q^\hh \aaa^+ \aaa^-\kk\, \aaa^- \aaa^+ \kk\, 1\kk).
\end{split}
\end{equation}
The term $\Theta'_1$ corresponds to the following diagram where the connected balls should be understood absent: 
\begin{align}\label{a:bd1}
\begin{tikzpicture}[scale=0.7]
\draw(1,1) node {$1\!-\!t$};
\draw(4,1) node {$t$};
\draw(5,1) node {$t$};
\foreach \x in {4,9,12}
  \draw (\x, 0) circle (.25);
  \foreach \x in {1,5,10}
  \draw[fill=red!70](\x,0) circle (0.25);
\foreach \x in {2,3,6,7,8,11}
  \fill[black] (\x, 0) circle (.1);
  \foreach \x in {7,8,10}
  \draw (\x, -1) circle (.25);
\foreach \x in {1,2,3,4,5,6,9,11,12}
  \fill[black] (\x, -1) circle (.1);
\draw(4,-2) node{$\kk$};
\draw(5,-2) node{$\aaa^-$};
\draw(8,-2) node{$\aaa^+$};
\draw(9,-2) node{$\kk$};
\draw(10,-2) node{$1$};
\draw(12,-2) node{$\kk$};
\draw[rounded corners,thick,->,color=black] (7,-0.76) -- (7,-0.5) -- (1,-0.5)--(1,-0.24);
\end{tikzpicture}
\end{align}
The trace  $\Theta_2$ is decomposed by rewriting the operator $1$ as $(1-t)+t$ as 
\begin{equation}
\begin{split}
\Theta_2 &= qt^4 \tr(q^\hh \aaa^- \kk\, \aaa^- \aaa^+ \kk \, 1 \ap \kk) = \Theta'_2+\Theta_3,
\\
\Theta'_2 &= qt^4(1-t) \tr(q^\hh\aaa^-  \kk\,   \aaa^- \aaa^+ \kk\, \aaa^+ \kk),
\quad
\Theta_3 = qt^6 \tr(q^\hh \aaa^- \kk\, \aaa^- \aaa^+ \aaa^+ \kk\,\kk) = qt^6 \Theta_0.
\end{split}
\end{equation}
The term $\Theta'_2$ corresponds to the following diagram where the connected balls should be understood absent: 
\begin{align}\label{a:bd2}
\begin{tikzpicture}[scale=0.7]
\draw(0,1) node {$q$};
\draw(1,1) node {$t$};
\draw(4,1) node {$t$};
\draw(5,1) node {$t$};
\draw(10,1) node {$1\!-\!t$};
\draw(12,1) node {$t$};
\foreach \x in {4,9,12}
  \draw (\x, 0) circle (.25);
  \foreach \x in {1,5,10}
  \draw[fill=red!70](\x,0) circle (0.25);
\foreach \x in {2,3,6,7,8,11}
  \fill[black] (\x, 0) circle (.1);
  \foreach \x in {7,8,10}
  \draw (\x, -1) circle (.25);
\foreach \x in {1,2,3,4,5,6,9,11,12}
  \fill[black] (\x, -1) circle (.1);
 \draw(1,-2) node{$\aaa^-$};
\draw(4,-2) node{$\kk$};
\draw(5,-2) node{$\aaa^-$};
\draw(8,-2) node{$\aaa^+$};
\draw(9,-2) node{$\kk$};
\draw(10,-2) node{$\aaa^+$};
\draw(12,-2) node{$\kk$};
\draw[rounded corners,thick,-] (7,-0.76) -- (7,-0.5) -- (0.5,-0.5);
\draw[dotted,thick] (0.5,-0.5) -- (0,-0.5);
\draw[dotted,thick] (13,-0.5) -- (12.5,-0.5);
\draw[rounded corners,thick,->] (12.5,-0.5) -- (10,.-0.5) -- (10,-0.24);
\end{tikzpicture}
\end{align}
Thus we have the relation
\begin{equation*}
(1-qt^6) \tr(q^\hh \aaa^- \kk\, \aaa^- \aaa^+ \aaa^+ \kk\, \kk)
= (1-qt^6) \Theta_0 
=\Theta'_0 + \Theta'_1+\Theta'_2.
\end{equation*}
The traces in $\Theta'_\alpha$ involve operators obtained by eliminating 
$\ap \am$, with (\ref{a:aat}) replaced by $1-t$ if the target of the pairing 
is free from a ball directly below it. If the target has a ball underneath, 
the corresponding operator $1$ is replaced by $\ap$.
The coefficient $t$ (resp.\ $q$) in the RHS of (\ref{a:qoa}) counts the number of skipped balls
(resp.\ wrapping). 
In general an analogous manipulation yields
\begin{align}\label{a:LP}
(1-qt^m)\tr \bigl(q^\hh S_{i_1j_1}^{a_1b_1} S_{i_2j_2}^{a_2b_2} \cdots S_{i_Lj_L}^{a_Lb_L} \bigr)
= \sum_{\text{leftmost pairings}}\!\!
(1-t) q^{\#\text{wrapped}}t^{\#\text{skipped}}
\tr(\text{unpaired parts}),
\end{align}
where the sum extends over the leftmost pairings, which means those between the leftmost $\aaa^+$ in the downstairs and 
the red shaded balls, i.e., $\aaa \in B_l$, in the upstairs. Thus there are $\abs{\aaa} = l$ summands.
Let $\aaa', \ii'  \in B_{l-1},  \jj'  \in B_{m-1}$ and $\jj' \in B_{(m-1)-(l-1)}=B_{m-l}$ 
be the arrays corresponding to the unpaired parts.
Then from the definition (\ref{t:Sop}),   the  recursion relation (\ref{a:LP}) 
is translated as
\begin{align}
S(q,t)^{\aaa, \bb}_{\ii,\jj} 
= \frac{1-t}{1-qt^m}\sum_{\text{leftmost pairings}}
q^{\#\text{wrapped}}t^{\#\text{skipped}}
S(q,t)^{\aaa', \bb'}_{\ii',\jj'}.
\end{align}
On the other hand,  from the queuing algorithm explained in the previous section,
it is clear that the same recursion relation holds also for $M(q,t)^{\aaa, \bb}_{\ii,\jj}$.
In particular, the factor $(1-qt^m)^{-1}$ emerges from the fact that 
the denominators $1-qt^{\#\text{free}}$ in (\ref{a:wt2}) lead to 
$\prod_{r=1}^l(1-qt^{m+1-r})$ for $M(q,t)^{\aaa, \bb}_{\ii,\jj}$ 
whereas $\prod_{r=1}^{l-1}(1-qt^{m-r})$ for $M(q,t)^{\aaa', \bb'}_{\ii',\jj'}$.
It follows, by induction on $l$, the proof of (\ref{t:ms}) reduces formally to $l=0$.
When $l=0$, one has $M(q,t)^{\aaa, \bb}_{\ii,\jj}=\delta^\bb_\jj$ from (\ref{a:m1}), 
and $S(q,t)^{\aaa, \bb}_{\ii,\jj} = (1-qt^m)\tr(q^\hh \kk^m)\delta^\bb_\jj=\delta^\bb_\jj$ from
(\ref{t:s5V}) and (\ref{t:Sop}).
This completes the proof.
\end{proof}

Recall from~\eqref{a:m1} that $M(q,t)^{\aaa, \bb}_{\ii,\jj}$ consists of 
numerous summands, 
which have to be calculated resorting to pairing diagrams by counting the skipped/free balls and wrapping.
Theorem~\ref{t:th:ms} identifies it with a single trace that is free from such pairing details.
Put in the other way, $M(q,t)^{\aaa, \bb}_{\ii,\jj}$ was giving a combinatorial description for an expansion 
of the trace $S(q,t)^{\aaa, \bb}_{\ii,\jj}$. 

In view of Theorem~\ref{t:th:ms}, the interpretation of the multi-indices $\aaa, \bb, {\bf i}, {\bf j}$ in (\ref{a:abij}) 
and the layer structure in (\ref{a:fig10}),  the five vertex weights in (\ref{t:s5V}) describes 
 a \emph{local} event in the queuing process with respect to the ``time" $\alpha \in \Z_L$,
 where the number of the custmors in the queue is increased or decreased by $\ap_\alpha$ or $\am_\alpha$. 

\subsection{Operator $\mathbb{S}(q,t)$}\label{subsec:S}
A key ingredient in Proposition~\ref{a:pr:P} is the operator $\mathbb{M}(q,t)$ whose 
building block was $\widecheck{M}(q,t)$ in (\ref{eq:MVV}).
From Theorem~\ref{t:th:ms}, it is natural to reformulate them in terms of 
$S(q,t)^{\aaa, \bb}_{\ii, \jj} $.
This is what we do in this subsection as a preparation for the introduction of the operator $X_\alpha(z)$ 
in the next subsection.

For $l < m$, define a linear operator $\widecheck{S}(q,t)$ in parallel with $\widecheck{M}(q,t)$ in (\ref{eq:MVV}):
\begin{subequations}
\label{eq:SVV}
\begin{align}
\widecheck{S}(q,t)\colon V_l \otimes V_m & \rightarrow V_{m-l} \otimes V_l
\label{a:Svv0} \\
v_\ii \otimes v_\jj \;&\mapsto
\sum_{\aaa \otimes \bb \in B_l\otimes B_{m-l}}
S(q,t)^{\aaa, \bb}_{\ii, \jj} \,v_{\bb} \otimes v_\aaa = \sum_{\aaa \in B_l, \aaa\le \jj} S(q,t)^{\aaa, \jj-\aaa}_{\ii, \jj} \,v_{\jj-\aaa} \otimes v_\aaa
\qquad 
(\ii \otimes \jj \in B_l \otimes B_m),
\label{a:Svv}
\end{align}
\end{subequations}
where the last equality in~\eqref{a:Svv} is due to (\ref{a:S1}).
By Theorem~\ref{t:th:ms} we know
\begin{align}
\widecheck{M}(q,t) = \widecheck{S}(q,t).
\end{align}
Substituting this into~\eqref{a:mod} we have
 \begin{equation}\label{a:mod2}
 \begin{split}
 \mathbb{M}(q,t) = \mathbb{S}(q,t):=\qquad \qquad \qquad \qquad \qquad \qquad 
  \widecheck{S}(q)_{n,n-1}& \\
 \times  \widecheck{S}(q^2)_{n-1,n-2} \widecheck{S}(q)_{n,n-1} & \\
 \cdots \qquad \\
 \times \widecheck{S}(q^{n-1})_{2,1}  \widecheck{S}(q^{n-2})_{3,2} 
 \cdots \widecheck{S}(q)_{n,n-1}.
 \end{split}
 \end{equation} 
From these definitions,  Proposition~\ref{a:pr:P} can be restated as 
\begin{corollary}\label{t:co:P}
\begin{align}\label{a:pm}
 \ket{P_{\text{MLQ}}(\mm)_q} = 
 \Pi\Bigl(
 \mathbb{S}(q,t)\sum_{\bb_n \otimes \cdots \otimes \bb_1 \in B(\mm)}
 v_{\bb_n} \otimes \cdots \otimes v_{\bb_1} 
 \Bigr).
\end{align}
\end{corollary}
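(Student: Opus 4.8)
The plan is to deduce Corollary~\ref{t:co:P} directly from Proposition~\ref{a:pr:P} by substituting the operator identity $\mathbb{M}(q,t) = \mathbb{S}(q,t)$, so that the only real work is establishing that identity from Theorem~\ref{t:th:ms}.

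First I would observe that $\widecheck{M}(q,t)$ and $\widecheck{S}(q,t)$ are defined by exactly the same recipe: both are the linear map $V_l \otimes V_m \to V_{m-l} \otimes V_l$ whose matrix elements in the distinguished bases $\{v_\bb\}$ are $M(q,t)^{\aaa,\bb}_{\ii,\jj}$ and $S(q,t)^{\aaa,\bb}_{\ii,\jj}$, respectively (compare (\ref{eq:MVV}) with (\ref{eq:SVV}), noting in both cases that the element vanishes unless $\aaa+\bb=\jj$). Since Theorem~\ref{t:th:ms} asserts $M(q,t)^{\aaa,\bb}_{\ii,\jj} = S(q,t)^{\aaa,\bb}_{\ii,\jj}$ for every quadruple with $\aaa,\ii \in B_l$, $\jj \in B_m$, $\bb \in B_{m-l}$ and $l<m$, the two operators coincide: $\widecheck{M}(q,t) = \widecheck{S}(q,t)$.

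Next I would note that $\mathbb{M}(q,t)$ in (\ref{a:mod}) and $\mathbb{S}(q,t)$ in (\ref{a:mod2}) are literally the same composition --- the same ordered product of $\frac{1}{2}n(n-1)$ building blocks, with identical powers of $q$ as spectral parameters, acting on identical tensor slots of $V_{l_n}\otimes\cdots\otimes V_{l_1}$. One must check that each factor $\widecheck{M}(q^k)_{j+1,j}$ (resp.\ $\widecheck{S}(q^k)_{j+1,j}$) indeed acts on a pair $V_{s}\otimes V_{s'}$ with $s<s'$, so that the definition (\ref{eq:MVV})/(\ref{eq:SVV}) applies; this is immediate from $l_n<l_{n-1}<\cdots<l_1$ together with the fact, already used in the construction of Section~\ref{sec:mlq}, that the intermediate output sizes are $m_1,\ldots,m_n$ and each arrow records a subset of the balls in the row below it. Granted this bookkeeping, replacing each factor $\widecheck{M}$ by the equal operator $\widecheck{S}$ leaves the product unchanged, giving $\mathbb{M}(q,t)=\mathbb{S}(q,t)$, which is precisely the content of (\ref{a:mod2}).

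Finally, substituting $\mathbb{M}(q,t)=\mathbb{S}(q,t)$ into the formula of Proposition~\ref{a:pr:P} yields the assertion. There is no genuine obstacle: the mathematical content is entirely carried by Theorem~\ref{t:th:ms}, and what remains is the routine verification that the two composite operators are assembled identically. If anything deserves care it is matching the tensor-slot labels and spectral parameters between (\ref{a:mod}) and (\ref{a:mod2}) and confirming the hypothesis $l<m$ all along the composition, but this has effectively been carried out already in the definition of $\mathbb{M}(q,t)$.
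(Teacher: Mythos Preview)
Your proposal is correct and follows exactly the paper's own reasoning: the corollary is obtained from Proposition~\ref{a:pr:P} by the substitution $\mathbb{M}(q,t)=\mathbb{S}(q,t)$, which in turn follows from $\widecheck{M}(q,t)=\widecheck{S}(q,t)$ via Theorem~\ref{t:th:ms}. The extra bookkeeping you spell out (matching tensor slots and the $l<m$ hypothesis) is implicit in the paper but does no harm.
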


\subsection{Matrix product operators $X_\alpha(z)$}

In the remainder of this section we concentrate on the $q=1$ case, which is relevant to the actual stationary states on the $n$-ASEP.
We do not exhibit $z$ in the vertex diagram~\eqref{t:vm} and $q$ in the diagrams like~\eqref{eq:f7} and~\eqref{eq:f8}
assuming that they are all set to $1$.
We depict \eqref{t:Sop} as
\begin{equation}\label{a:fig9}
S(1,t)^{\aaa,\bb}_{\ii,\jj} = (1 - t^{m-\ell}) {\color{blue}\tr_F} \left(
\begin{tikzpicture}[scale=.65,baseline=-5]
\draw[<-] (-30:.9) node[anchor=north west] {$a_1$} -- (150:.9) node[anchor=south east] {$i_1$};
\draw[->] (0,-1) node[anchor=north]{$j_1$} -- (0,1) node[anchor=south] {$b_1$};
\draw[<-] ++(2,0) + (-30:.9) node[anchor=north west] {$a_2$} -- +(150:.9) node[anchor=south east] {$i_2$};
\draw[->] (2,-1) node[anchor=north]{$j_2$} -- (2,1) node[anchor=south] {$b_2$};
\draw[<-] ++(6,0) + (-30:.9) node[anchor=north west] {$a_L$} -- +(150:.9) node[anchor=south east] {$i_L$};
\draw[->] (6,-1) node[anchor=north]{$j_L$} -- (6,1) node[anchor=south] {$b_L$};
\draw[-,blue] (7,0) -- (5,0);
\draw[dashed,blue] (5,0) -- (3,0);
\draw[->,blue] (3,0) -- (-1,0);
\end{tikzpicture}
\right).
\end{equation}
Here each vertex signifies $S^{a_r b_r}_{i_r j_r}$ taking values in the $t$-oscillator algebra as in~\eqref{t:s5V}.
Blue arrows are added to signify the Fock space $F$ (\ref{a:F}) on which it acts and the trace is taken.
The operator $\widecheck{S}(1,t)$ from~\eqref{eq:SVV} will similarly be depicted by suppressing the indices.

The operator $\mathbb{S}(1,t)$ shares the same corner transfer matrix (CTM) diagram representation 
as $\mathbb{M}(1,t)$ in (\ref{eq:f8}), with each vertex (\ref{t:vm}) of thick arrows 
now possessing the structure in the ``third dimension" as illustrated in (\ref{a:fig9}).
As the result, the stationary state in Corollary~\ref{t:co:P} is given as  
\begin{align}
 \ket{P_{\text{MLQ}}(\mm)_{q=1}} 
= \sum_{{\boldsymbol \sigma} \in \Sigma(\mm)}
\stprob({\boldsymbol \sigma}) \ket{ {\boldsymbol \sigma} },
\end{align}
where the (unnormalized) stationary probability $\stprob({\boldsymbol \sigma})$, for example for $n=3$, 
has the diagram representation:

\begin{equation}
\label{a:fig10}
\prob({\boldsymbol\sigma}) = {\color{blue}\tr_{F^{\otimes 3}}} \left(\,
\begin{tikzpicture}[scale=.8,baseline=1.2cm]
\draw[->,rounded corners] (0,0.3) -- ++(0,2.7) -- ++(-30:.7);
\draw[->,rounded corners] ++(150:.9) + (0,0.3) -- ++(0,2) -- ++(-30:1.6);
\draw[->,rounded corners] ++(150:1.8) + (0,0.3) -- ++(0,1) -- ++(-30:2.5);
\draw[->,rounded corners] ++(5,0.3) -- ++(0,2.7) -- ++(-30:.7);
\draw[->,rounded corners] ++(5,0) ++(150:.9) + (0,0.2) -- ++(0,2) -- ++(-30:1.6);
\draw[->,rounded corners] ++(5,0) ++(150:1.8) + (0,0.2) -- ++(0,1) -- ++(-30:2.5);
\draw[-,blue] (5.6,1) -- (3.5,1);
\draw[dashed,blue] (3.5,1) -- (1.5,1);
\draw[->,blue] (1.5,1) -- (-.6,1);
\draw[-,blue] (5.6,2) -- (3.5,2);
\draw[dashed,blue] (3.5,2) -- (1.5,2);
\draw[->,blue] (1.5,2) -- (-.6,2);
\draw[-,blue] ++(0,2) ++ (150:.9) + (5.6,-1) -- +(3.5,-1);
\draw[dashed,blue] ++(0,2) ++ (150:.9) + (3.5,-1) -- +(1.5,-1);
\draw[->,blue] ++(0,2) ++ (150:.9) + (1.5,-1) -- +(-.6,-1);
\end{tikzpicture}
\;\right),
\end{equation}
where there are $L$ layers of the CTM.

The dependence on ${\boldsymbol \sigma}=(\sigma_1, \ldots, \sigma_L)$ 
is reflected in the boundary condition.
In fact, comparison of (\ref{a:fig10}) and (\ref{a:mho}) amounts to identifying the 
$i$th layer with $X_{\sigma_i}$.
Moreover, inspecting the projection $\Pi$ in (\ref{a:Pi}) leads to the constraint that 
$X_\alpha\, (\alpha = 0,\ldots, n)$ is a partition function of the five vertex model 
in the NW quadrant 
with a free boundary condition at the bottom and a fixed one on the right, 
where the edge states $s_1,\ldots, s_n$ on the outgoing arrows (numbered from the top) 
are specified as $s_i=\delta_{i,\alpha}$. 
Such a CTM diagram representation of the stationary probabilities was first obtained for TASEP 
(the $t=0$ case)  in~\cite{KMO1,KMO2} using the five vertex model in Remark~\ref{a:re:st}.
See the recent work~\cite{IMO} for a new application.

Recall from Section~\ref{a:subsec:mp} that one also needs the spectral parameter dependent version 
$X_\alpha(z)$ of $X_\alpha$ in order to establish the stationary condition 
by resorting to the ZF algebra (\ref{a:zf})  independently from the MLQ construction.

\begin{definition}[The operator $X_{\alpha}(z)$]
We introduce the operator $X_\alpha(z)\, (\alpha = 0,\ldots, n)$ 
as the CTM with each vertex having a $t$-oscillator valued weight specified in~\eqref{t:s5V} and depicted as
\begin{equation}\label{a:ef12}
X_{\alpha}(z) = 
\begin{tikzpicture}[scale=0.8,>=latex,baseline=-2.5cm,rotate=270]
\foreach \x in {1,2,3} {
	\draw[->, rounded corners] (3.5,-\x)  -- (\x,-\x) -- (\x,-0.2) node[anchor=west] {$\delta_{\alpha\x}$};
	\draw (3.8, -\x) node {$\vdots$};
	\draw[-] (4.5,-\x) -- (5.8,-\x);
	\fill[black] (5.45,-\x) circle (.1) node[anchor=west, scale=.8] {$z^{\hh}$};
}
\draw[->, rounded corners] (5.8,-5)  -- (5,-5) -- (5,-0.2) node[anchor=west] {$\delta_{\alpha n}$};
\fill[black] (5.45,-5) circle (.1) node[anchor=west, scale=.8] {$z^{\hh}$};
\draw (4,-4) node {\rotatebox{90}{$\ddots$}};
\draw (5.5,-3.8) node {$\cdots$};
\draw (3.8,-0) node {$\vdots$};
\end{tikzpicture}
\end{equation}
where the $z^{\hh}$ are denoted by $\bullet$.
The diagram indicates that all edge variables are summed over $\{0,1\}$. 
The variables on the right outgoing edges are fixed as shown, 
while there is no boundary condition along the bottom.
This depiction should be understood as having an additional arrow 
at each vertex, perpendicular to the layer, corresponding to the 
$t$-oscillator acting on its own Fock space.
\end{definition}

\begin{example}
The operators $X_{\alpha}(z)$ can be written as
\begin{equation}\label{a:ef11}
\begin{array}{c@{\hspace{50pt}}c}
\mathop{\quad\scalebox{1.5}{$\sum\;\;$}}\limits_{i_1,i_2=0}^1  z^{i_1 + i_2}
\begin{tikzpicture}[scale=0.65,>=latex,baseline=-38,rotate=270]
\foreach \x in {1,2}
	\draw[->, rounded corners] (2.7,-\x) node[anchor=north] {$i_{\x}$} 
	-- (\x,-\x) -- (\x,-0.2) node[anchor=west] {$\delta_{\alpha\x}$};
\end{tikzpicture}
&
\mathop{\quad\scalebox{1.5}{$\sum\;\;$}}\limits_{i_1,i_2,i_3=0}^1 z^{i_1 + i_2 + i_3}
\begin{tikzpicture}[scale=0.65,>=latex,baseline=-50,rotate=-90]
\foreach \x in {1,2,3}
	\draw[->, rounded corners] (3.7,-\x) node[anchor=north] {$i_{\x}$}
	-- (\x,-\x) -- (\x,-0.2) node[anchor=west] {$\delta_{\alpha\x}$};
\end{tikzpicture}
\\
n = 2 & n = 3
\end{array}
\end{equation}
\end{example}

\noindent
Obviously this $X_\alpha(z)$ reduces to the $X_\alpha$ considered previously when $z=1$.
It is an element of $\mathrm{End}(F^{\otimes \frac{n(n-1)}{2}})$.
We number the vertices in (\ref{a:ef12})  
from top to bottom in the rightmost column, and then similarly in the second rightmost column, and so on, as 
$1,\ldots, \frac{n(n-1)}{2}$ in this order.
The $t$-oscillator generators acting on the $i$th copy of the Fock space $F$ in this numbering
will be distinguished by the subscript $i$ when $n \ge 3$.
Thus, the generators with different subscripts commute.
To summarize, we have proved the following.

\begin{theorem}[Five vertex CTM interpretation of the MLQ construction]\label{a:th:pxx}
The MLQ construction  in Section~\ref{sec:mlq} yields the matrix product formula~\eqref{a:mho} of the unnormalized stationary probability, where the operators $X_0,\ldots, X_n$ are given as $X_\alpha = X_\alpha(z=1)$ for $X_\alpha(z)$ defined by~\eqref{a:ef12} and the trace is taken over $F^{\otimes \frac{n(n-1)}{2}}$.
\end{theorem}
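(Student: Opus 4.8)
The plan is to chain together the ingredients already assembled in this section — Theorem~\ref{t:th:ms} ($M(q,t)=S(q,t)$), Corollary~\ref{t:co:P} expressing $\ket{P_{\text{MLQ}}(\mm)_q}$ as $\Pi$ applied to $\mathbb{S}(q,t)$ acting on a sum of ball systems, and the corner transfer matrix picture \eqref{a:mod2}, \eqref{a:fig9}, \eqref{a:fig10} of $\mathbb{S}(1,t)$ — and then to read off the coefficient of each basis vector $\ket{\boldsymbol{\sigma}}$ by slicing the resulting two-dimensional diagram into $L$ layers indexed by the lattice sites. First I would specialize to $q=1$, where by Corollary~\ref{t:co:P}
\[
\ket{P_{\text{MLQ}}(\mm)_{q=1}}
= \Pi\Bigl(\mathbb{S}(1,t)\sum_{\bb_n\otimes\cdots\otimes\bb_1\in B(\mm)} v_{\bb_n}\otimes\cdots\otimes v_{\bb_1}\Bigr),
\]
with $\mathbb{S}(1,t)$ the staircase composition \eqref{a:mod2} of the $\frac{n(n-1)}{2}$ operators $\widecheck{S}(1,t)$, each of which is by \eqref{t:Sop} the trace over a single copy of $F$ of a product of $L$ five vertex weights $S^{ab}_{ij}$. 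Assembling these factors as prescribed by \eqref{a:mod2} arranges the $L\cdot\frac{n(n-1)}{2}$ weights on a lattice shaped like the NW corner of \eqref{eq:f8}, thickened in the third direction by the $\frac{n(n-1)}{2}$ Fock lines, which run horizontally through every layer and close into loops; the product of the individual $F$-traces of the $\widecheck{S}$ factors is then a single trace over $F^{\otimes n(n-1)/2}$.

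Next I would fix $\boldsymbol{\sigma}\in\Sigma(\mm)$ and extract the coefficient of $\ket{\boldsymbol{\sigma}}$. Under the projection \eqref{a:Pi}, selecting $\ket{\boldsymbol{\sigma}}$ fixes the $n$ output legs of $\mathbb{S}(1,t)$ at lattice site $i$ to $\bigl((\cc_1)_i,\ldots,(\cc_n)_i\bigr)=(\delta_{1,\sigma_i},\ldots,\delta_{n,\sigma_i})$. Regrouping the diagram by lattice site, the sub-diagram at site $i$ consists of one five vertex weight from each of the $\frac{n(n-1)}{2}$ factors $\widecheck{S}(1,t)$, arranged in the staircase of \eqref{a:mod2}, with its $n$ right-outgoing edges fixed to $\delta_{r,\sigma_i}$, its bottom edges free (carrying the components of the input ball system), and $\frac{n(n-1)}{2}$ perpendicular Fock legs sitting on the global loops; this is precisely the corner diagram \eqref{a:ef12} at $z=1$, namely the operator $X_{\sigma_i}(1)\in\mathrm{End}(F^{\otimes n(n-1)/2})$. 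The free sum over the bottom edges built into \eqref{a:ef12} collapses to the constrained sum $\sum_{\bb\in B(\mm)}$, since by the weight conservation \eqref{a:S0} and the table \eqref{t:s5V} any bottom pattern outside $B(\mm)$ produces a vanishing weight or a non-returning Fock loop. Contracting the Fock loops then yields $\stprob(\boldsymbol{\sigma})=\tr_{F^{\otimes n(n-1)/2}}\bigl(X_{\sigma_1}(1)\cdots X_{\sigma_L}(1)\bigr)$, up to an overall $\boldsymbol{\sigma}$-independent scalar coming from the factors $1-t^{m-\ell}$ in \eqref{t:Sop}, which is irrelevant for an unnormalized stationary probability. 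Since $\ket{P_{\text{MLQ}}(\mm)_{q=1}}$ is known to be the stationary state by~\cite{Martin20}, this gives the matrix product formula \eqref{a:mho} with $X_\alpha=X_\alpha(1)$.

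The step I expect to be the real work is the diagram bookkeeping just invoked: checking that the staircase composition \eqref{a:mod2} of maps $V_l\otimes V_m\to V_{m-l}\otimes V_l$ matches, column by column, the single corner \eqref{a:ef12}, so that regrouping by lattice site genuinely yields exactly one $X_{\sigma_i}(1)$ per site with the $\frac{n(n-1)}{2}$ Fock tensor factors correctly distributed among the vertices. The specialization $q=1$ is essential here: it makes the insertions $z^{\hh}$ on the Fock lines equal to the identity, so it is harmless that a literal slicing would put one such insertion in each of the $L$ layers whereas \eqref{a:mod2} carries only $\frac{n(n-1)}{2}$ of them in total, and there is no ordering ambiguity around the loops. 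All the remaining identifications — in particular the comparison of \eqref{a:S0} with \eqref{a:m1} underlying Theorem~\ref{t:th:ms} — are already in hand.
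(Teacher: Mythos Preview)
Your proposal is correct and follows the same route as the paper, which does not give a separate formal proof but rather presents Theorem~\ref{a:th:pxx} as a summary of the discussion in Section~\ref{subsec:S} and the paragraphs surrounding \eqref{a:fig10}--\eqref{a:ef12}. Your write-up is in fact more explicit than the paper on the points that deserve care: the disposal of the scalar prefactors $1-t^{m-\ell}$ from \eqref{t:Sop} as $\boldsymbol{\sigma}$-independent, the reason why the free bottom sum in \eqref{a:ef12} matches the constrained sum over $B(\mm)$ (via the Fock-loop closure and \eqref{a:c23}), and the role of $q=1$ in making the single $q^{\hh}$ insertion per Fock line compatible with the per-layer slicing.
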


In the next section, we will establish the same matrix product formula by the ZF algebra without relying on the MLQ construction.

\begin{example}\label{a:ex:X}
For $n=2$, $X_\alpha(z)$ is given by
\begin{align*}
\raisebox{2pt}{$X_0(z)$} & \begin{array}{@{\;=}c@{\;+\;}c} \begin{tikzpicture}[scale=.65,>=latex,rounded corners,baseline=-1cm,rotate=-90]
\draw[->] (1.7,0) node[anchor=north] {$0$} -- (0,0) -- (0,0.7) node[anchor=west] {$0$};
\draw[->] (1.7,-1) node[anchor=north] {$0$} -- (1,-1) -- (1,0.7) node[anchor=west] {$0$};
\end{tikzpicture}
& \begin{tikzpicture}[scale=.65,>=latex,rounded corners,baseline=-1cm,rotate=-90]
\draw[->] (1.7,0) node[anchor=north] {$0$} -- (0,0) -- (0,0.7) node[anchor=west] {$0$};
\draw[->] (1.7,-1) node[anchor=north] {$1$} -- (1,-1) -- (1,0.7) node[anchor=west] {$0$};
\end{tikzpicture}
z
\\
1 & z \aaa^+,
\end{array}
&
\raisebox{2pt}{$X_1(z)$} & \begin{array}{@{\;=}c} \begin{tikzpicture}[scale=.65,>=latex,rounded corners,baseline=-1cm,rotate=-90]
\draw[->] (1.7,0) node[anchor=north] {$1$} -- (0,0) -- (0,0.7) node[anchor=west] {$1$};
\draw[->] (1.7,-1) node[anchor=north] {$0$} -- (1,-1) -- (1,0.7) node[anchor=west] {$0$};
\end{tikzpicture}
z
\\
z \kk,
\end{array}
&
\raisebox{2pt}{$X_2(z)$} & \begin{array}{@{\;=}c@{\;+\;}c} \begin{tikzpicture}[scale=.65,>=latex,rounded corners,baseline=-1cm,rotate=-90]
\draw[->] (1.7,0) node[anchor=north] {$1$} -- (0,0) -- (0,0.7) node[anchor=west] {$0$};
\draw[->] (1.7,-1) node[anchor=north] {$0$} -- (1,-1) -- (1,0.7) node[anchor=west] {$1$};
\end{tikzpicture}
z
& \begin{tikzpicture}[scale=.65,>=latex,rounded corners,baseline=-1cm,rotate=-90]
\draw[->] (1.7,0) node[anchor=north] {$1$} -- (0,0) -- (0,0.7) node[anchor=west] {$0$};
\draw[->] (1.7,-1) node[anchor=north] {$1$} -- (1,-1) -- (1,0.7) node[anchor=west] {$1$};
\end{tikzpicture}
z^2
\\
z \aaa^- & z^2.
\end{array}
\end{align*}These formulas agree with~\cite[Eq.(46)]{CDW} 
under the formal transformation $(\ap, \am, \kk, z) \rightarrow (a, a^\dagger,k, x)$.
For $n=3$, $X_\alpha(z)$ is given by
\begin{align*}
\raisebox{9pt}{$X_0(z)$} & \begin{array}{@{\;=}c@{\;+\;}c@{\;+\;}c@{\;+\;}c@{\;+\;}c} \begin{tikzpicture}[scale=.6,>=latex,rounded corners,baseline=-1cm,rotate=-90]
\draw[->] (2.7,0) node[anchor=north] {$0$} -- (0,0) -- (0,0.7) node[anchor=west] {$0$};
\draw[->] (2.7,-1) node[anchor=north] {$0$} -- (1,-1) -- (1,0.7) node[anchor=west] {$0$};
\draw[->] (2.7,-2) node[anchor=north] {$0$} -- (2,-2) -- (2,0.7) node[anchor=west] {$0$};
\draw (1.5,0) node[anchor=west,scale=.7] {$0$};
\draw (1.5,-1) node[anchor=west,scale=.7] {$0$};
\draw (2,-.5) node[anchor=north,scale=.7] {$0$};
\end{tikzpicture}
&
\begin{tikzpicture}[scale=.6,>=latex,rounded corners,baseline=-1cm,rotate=-90]
\draw[->] (2.7,0) node[anchor=north] {$0$} -- (0,0) -- (0,0.7) node[anchor=west] {$0$};
\draw[->] (2.7,-1) node[anchor=north] {$1$} -- (1,-1) -- (1,0.7) node[anchor=west] {$0$};
\draw[->] (2.7,-2) node[anchor=north] {$0$} -- (2,-2) -- (2,0.7) node[anchor=west] {$0$};
\draw (1.5,0) node[anchor=west,scale=.7] {$0$};
\draw (1.5,-1) node[anchor=west,scale=.7] {$1$};
\draw (2,-.5) node[anchor=north,scale=.7] {$0$};
\end{tikzpicture}
z
&
\begin{tikzpicture}[scale=.6,>=latex,rounded corners,baseline=-1cm,rotate=-90]
\draw[->] (2.7,0) node[anchor=north] {$0$} -- (0,0) -- (0,0.7) node[anchor=west] {$0$};
\draw[->] (2.7,-1) node[anchor=north] {$1$} -- (1,-1) -- (1,0.7) node[anchor=west] {$0$};
\draw[->] (2.7,-2) node[anchor=north] {$0$} -- (2,-2) -- (2,0.7) node[anchor=west] {$0$};
\draw (1.5,0) node[anchor=west,scale=.7] {$0$};
\draw (1.5,-1) node[anchor=west,scale=.7] {$0$};
\draw (2,-.5) node[anchor=north,scale=.7] {$1$};
\end{tikzpicture}
z
&
\begin{tikzpicture}[scale=.6,>=latex,rounded corners,baseline=-1cm,rotate=-90]
\draw[->] (2.7,0) node[anchor=north] {$0$} -- (0,0) -- (0,0.7) node[anchor=west] {$0$};
\draw[->] (2.7,-1) node[anchor=north] {$0$} -- (1,-1) -- (1,0.7) node[anchor=west] {$0$};
\draw[->] (2.7,-2) node[anchor=north] {$1$} -- (2,-2) -- (2,0.7) node[anchor=west] {$0$};
\draw (1.5,0) node[anchor=west,scale=.7] {$0$};
\draw (1.5,-1) node[anchor=west,scale=.7] {$0$};
\draw (2,-.5) node[anchor=north,scale=.7] {$0$};
\end{tikzpicture}
z
&
\begin{tikzpicture}[scale=.6,>=latex,rounded corners,baseline=-1cm,rotate=-90]
\draw[->] (2.7,0) node[anchor=north] {$0$} -- (0,0) -- (0,0.7) node[anchor=west] {$0$};
\draw[->] (2.7,-1) node[anchor=north] {$1$} -- (1,-1) -- (1,0.7) node[anchor=west] {$0$};
\draw[->] (2.7,-2) node[anchor=north] {$1$} -- (2,-2) -- (2,0.7) node[anchor=west] {$0$};
\draw (1.5,0) node[anchor=west,scale=.7] {$0$};
\draw (1.5,-1) node[anchor=west,scale=.7] {$0$};
\draw (2,-.5) node[anchor=north,scale=.7] {$1$};
\end{tikzpicture}
z^2
\\
1 & z \aaa_1^+ \kk_3 & z \aaa_2^+ \aaa_3^- & z \aaa_3^+ & z^2 \aaa_2^+,
\end{array}
\allowdisplaybreaks \\
\raisebox{9pt}{$X_1(z)$} & \begin{array}{@{\;=}c@{\;+\;}c}  \begin{tikzpicture}[scale=.6,>=latex,rounded corners,baseline=-1cm,rotate=-90]
\draw[->] (2.7,0) node[anchor=north] {$1$} -- (0,0) -- (0,0.7) node[anchor=west] {$1$};
\draw[->] (2.7,-1) node[anchor=north] {$0$} -- (1,-1) -- (1,0.7) node[anchor=west] {$0$};
\draw[->] (2.7,-2) node[anchor=north] {$0$} -- (2,-2) -- (2,0.7) node[anchor=west] {$0$};
\draw (1.5,0) node[anchor=west,scale=.7] {$1$};
\draw (1.5,-1) node[anchor=west,scale=.7] {$0$};
\draw (2,-.5) node[anchor=north,scale=.7] {$0$};
\end{tikzpicture}
z
&
\begin{tikzpicture}[scale=.6,>=latex,rounded corners,baseline=-1cm,rotate=-90]
\draw[->] (2.7,0) node[anchor=north] {$1$} -- (0,0) -- (0,0.7) node[anchor=west] {$1$};
\draw[->] (2.7,-1) node[anchor=north] {$0$} -- (1,-1) -- (1,0.7) node[anchor=west] {$0$};
\draw[->] (2.7,-2) node[anchor=north] {$1$} -- (2,-2) -- (2,0.7) node[anchor=west] {$0$};
\draw (1.5,0) node[anchor=west,scale=.7] {$1$};
\draw (1.5,-1) node[anchor=west,scale=.7] {$0$};
\draw (2,-.5) node[anchor=north,scale=.7] {$0$};
\end{tikzpicture}
z^2
\\
z \kk_1 \kk_2 & z^2 \kk_1 \kk_2 \aaa_3^+,
\end{array}
\allowdisplaybreaks \\
\raisebox{9pt}{$X_2(z)$} & \begin{array}{@{\;=}c@{\;+\;}c@{\;+\;}c} \begin{tikzpicture}[scale=.6,>=latex,rounded corners,baseline=-1cm,rotate=-90]
\draw[->] (2.7,0) node[anchor=north] {$1$} -- (0,0) -- (0,0.7) node[anchor=west] {$0$};
\draw[->] (2.7,-1) node[anchor=north] {$0$} -- (1,-1) -- (1,0.7) node[anchor=west] {$1$};
\draw[->] (2.7,-2) node[anchor=north] {$0$} -- (2,-2) -- (2,0.7) node[anchor=west] {$0$};
\draw (1.5,0) node[anchor=west,scale=.7] {$1$};
\draw (1.5,-1) node[anchor=west,scale=.7] {$0$};
\draw (2,-.5) node[anchor=north,scale=.7] {$0$};
\end{tikzpicture}
z
&
\begin{tikzpicture}[scale=.6,>=latex,rounded corners,baseline=-1cm,rotate=-90]
\draw[->] (2.7,0) node[anchor=north] {$1$} -- (0,0) -- (0,0.7) node[anchor=west] {$0$};
\draw[->] (2.7,-1) node[anchor=north] {$0$} -- (1,-1) -- (1,0.7) node[anchor=west] {$1$};
\draw[->] (2.7,-2) node[anchor=north] {$1$} -- (2,-2) -- (2,0.7) node[anchor=west] {$0$};
\draw (1.5,0) node[anchor=west,scale=.7] {$1$};
\draw (1.5,-1) node[anchor=west,scale=.7] {$0$};
\draw (2,-.5) node[anchor=north,scale=.7] {$0$};
\end{tikzpicture}
z^2
&
\begin{tikzpicture}[scale=.6,>=latex,rounded corners,baseline=-1cm,rotate=-90]
\draw[->] (2.7,0) node[anchor=north] {$1$} -- (0,0) -- (0,0.7) node[anchor=west] {$0$};
\draw[->] (2.7,-1) node[anchor=north] {$1$} -- (1,-1) -- (1,0.7) node[anchor=west] {$1$};
\draw[->] (2.7,-2) node[anchor=north] {$0$} -- (2,-2) -- (2,0.7) node[anchor=west] {$0$};
\draw (1.5,0) node[anchor=west,scale=.7] {$1$};
\draw (1.5,-1) node[anchor=west,scale=.7] {$1$};
\draw (2,-.5) node[anchor=north,scale=.7] {$0$};
\end{tikzpicture}
z^2
\\
z \aaa_1^- \kk_2 & z^2 \aaa_1^- \kk_2 \aaa_3^+ & z^2 \kk_2 \kk_3,
\end{array}
\allowdisplaybreaks \\
\raisebox{9pt}{$X_3(z)$} & \begin{array}{@{\;=}c@{\;+\;}c@{\;+\;}c@{\;+\;}c@{\;+\;}c} \begin{tikzpicture}[scale=.6,>=latex,rounded corners,baseline=-1cm,rotate=-90]
\draw[->] (2.7,0) node[anchor=north] {$1$} -- (0,0) -- (0,0.7) node[anchor=west] {$0$};
\draw[->] (2.7,-1) node[anchor=north] {$0$} -- (1,-1) -- (1,0.7) node[anchor=west] {$0$};
\draw[->] (2.7,-2) node[anchor=north] {$0$} -- (2,-2) -- (2,0.7) node[anchor=west] {$1$};
\draw (1.5,0) node[anchor=west,scale=.7] {$0$};
\draw (1.5,-1) node[anchor=west,scale=.7] {$0$};
\draw (2,-.5) node[anchor=north,scale=.7] {$0$};
\end{tikzpicture}
z
&
\begin{tikzpicture}[scale=.6,>=latex,rounded corners,baseline=-1cm,rotate=-90]
\draw[->] (2.7,0) node[anchor=north] {$1$} -- (0,0) -- (0,0.7) node[anchor=west] {$0$};
\draw[->] (2.7,-1) node[anchor=north] {$0$} -- (1,-1) -- (1,0.7) node[anchor=west] {$0$};
\draw[->] (2.7,-2) node[anchor=north] {$1$} -- (2,-2) -- (2,0.7) node[anchor=west] {$1$};
\draw (1.5,0) node[anchor=west,scale=.7] {$0$};
\draw (1.5,-1) node[anchor=west,scale=.7] {$0$};
\draw (2,-.5) node[anchor=north,scale=.7] {$0$};
\end{tikzpicture}
z^2
&
\begin{tikzpicture}[scale=.6,>=latex,rounded corners,baseline=-1cm,rotate=-90]
\draw[->] (2.7,0) node[anchor=north] {$1$} -- (0,0) -- (0,0.7) node[anchor=west] {$0$};
\draw[->] (2.7,-1) node[anchor=north] {$1$} -- (1,-1) -- (1,0.7) node[anchor=west] {$0$};
\draw[->] (2.7,-2) node[anchor=north] {$0$} -- (2,-2) -- (2,0.7) node[anchor=west] {$1$};
\draw (1.5,0) node[anchor=west,scale=.7] {$0$};
\draw (1.5,-1) node[anchor=west,scale=.7] {$0$};
\draw (2,-.5) node[anchor=north,scale=.7] {$1$};
\end{tikzpicture}
z^2&
\begin{tikzpicture}[scale=.6,>=latex,rounded corners,baseline=-1cm,rotate=-90]
\draw[->] (2.7,0) node[anchor=north] {$1$} -- (0,0) -- (0,0.7) node[anchor=west] {$0$};
\draw[->] (2.7,-1) node[anchor=north] {$1$} -- (1,-1) -- (1,0.7) node[anchor=west] {$0$};
\draw[->] (2.7,-2) node[anchor=north] {$0$} -- (2,-2) -- (2,0.7) node[anchor=west] {$1$};
\draw (1.5,0) node[anchor=west,scale=.7] {$0$};
\draw (1.5,-1) node[anchor=west,scale=.7] {$1$};
\draw (2,-.5) node[anchor=north,scale=.7] {$0$};
\end{tikzpicture}
z^2
&
\begin{tikzpicture}[scale=.6,>=latex,rounded corners,baseline=-1cm,rotate=-90]
\draw[->] (2.7,0) node[anchor=north] {$1$} -- (0,0) -- (0,0.7) node[anchor=west] {$0$};
\draw[->] (2.7,-1) node[anchor=north] {$1$} -- (1,-1) -- (1,0.7) node[anchor=west] {$0$};
\draw[->] (2.7,-2) node[anchor=north] {$1$} -- (2,-2) -- (2,0.7) node[anchor=west] {$1$};
\draw (1.5,0) node[anchor=west,scale=.7] {$0$};
\draw (1.5,-1) node[anchor=west,scale=.7] {$0$};
\draw (2,-.5) node[anchor=north,scale=.7] {$1$};
\end{tikzpicture}
z^3
\\
z \aaa_2^- & z^2 \aaa_2^- \aaa_3^+ & z^2 \aaa_3^- & z^2 \aaa_1^+ \aaa_2^- \kk_3 & z^3.
\end{array}
\end{align*}
\end{example}

\subsection{Recursion relation of $X_\alpha(z)$}

For $0\le i\le n-1,0\le j\le n$, define $T(z)_{ij}$ to be the $t$-oscillator valued weight for the following configuration:
\begin{equation}
T(z)_{ij} = 
\begin{tikzpicture}[scale=.6,>=latex, baseline=3.5cm]
\draw[-] (0,-1) -- (0,1);
\draw[-] (0,2) -- (0,5);
\draw[-] (0,6) -- (0,9);
\draw[->,rounded corners] (0,10) -- (0,12) -- (1,12) node[anchor=west] {$0$};
\foreach \y in {1,5,9}
	\draw[dashed] (0,\y) -- (0,\y+1);
\foreach \y in {0,3,7,11}
	\draw[->] (-1,\y) node[anchor=east] {$0$} -- (1,\y) node[anchor=west] {$0$};
\draw[->] (-1,4) node[anchor=east] {$0$} -- (1,4) node[anchor=west] {$1 \quad \leftarrow j$};
\draw[->] (-1,8) node[anchor=east] {$i \rightarrow \quad 1$} -- (1,8) node[anchor=west] {$0$};
\fill[black] (0,-.5) circle (.1) node[anchor=east, scale=.8] {$z^{\hh}$};
\foreach \y in {0,1,3,4}
	\draw (0,\y-.5) node[anchor=west, scale=.7] {$1$};
\foreach \y in {5,7,8,9,11}
	\draw (0,\y-.5) node[anchor=west, scale=.7] {$0$};
\end{tikzpicture}
\end{equation}
It is a column consisting of $(n-1)$ vertices, and the $t$-oscillators attached to the $r$th one from the top is 
denoted by $\ap_r, \am_r, \kk_r$. 
The indices $i$ (resp.\ $j$) specifies the position of the unique 1 on the left (resp.\ right) 
horizontal edges, and $i=0$ (resp.\ $j=0$) means  
that all the left (resp.\ right) horizontal edges assume 0. 
The figure corresponds to a case $i,j \ge 1$.
Explicitly one has 
\begin{equation}\label{a:Tdef}
T(z)_{i0}=
\ap_i \; (0 \le i \le n-1) \quad \text{ and }\quad 
T(z)_{ij} = \begin{cases}
z\kk_j\cdots \kk_{n-1}&(j=i+1)\\
z\ap_i\am_{j-1}\kk_j\cdots \kk_{n-1}&(j\ge i+2)\\
0&(j\le i)
\end{cases}
\quad \text{ for }j\ge1,
\end{equation}
where we regard $\aaa^+_0$ as 1.
Note that $T(z)_{ij}$ depends on $\apm_r, \kk_r$ with $r=1,\ldots, n-1$.

\begin{proposition}[Recursion relation of $X_\alpha(z)$ with respect to rank]\label{a:pr:rt}
Let $X_0(z), \ldots, X_n(z)$ be the operators defined in~\eqref{a:ef12} for the $n$-ASEP.
Let $\widetilde{X}_0(z), \ldots, \widetilde{X}_{n-1}(z)$ be those for the $(n\!-\!1)$-ASEP with 
$\aaa^\pm_i, \kk_i$ relabelled as $\aaa^\pm_{i+n-1}, \kk_{i+n-1}$.
Then the following recursion relation is valid:
\begin{align}\label{a:xxt}
X_\alpha(z) = \sum_{i=0}^{n-1}\widetilde{X}_i(z) T(z)_{i\alpha}\qquad (0 \le \alpha \le n).
\end{align}
\end{proposition}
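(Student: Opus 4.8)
The plan is to read the recursion~(\ref{a:xxt}) off the corner transfer matrix (CTM) diagram~(\ref{a:ef12}) defining $X_\alpha(z)$, by stripping away its outermost column. That diagram is a triangular array of $\binom{n}{2}$ five-vertices~(\ref{t:s5V}), one for each crossing of the $n$ strands whose right ends carry the fixed boundary values $\delta_{1\alpha},\dots,\delta_{n\alpha}$ (cf.~(\ref{eq:f8}), and the cases $n=2,3$ in~(\ref{a:ef11})). First I would single out the $n-1$ vertices numbered $1,\dots,n-1$ forming the rightmost column, carrying the oscillators $\aaa^{\pm}_1,\kk_1,\dots,\aaa^{\pm}_{n-1},\kk_{n-1}$: these are exactly the crossings of the outer strand (the one with right end $\delta_{1\alpha}$, topmost in~(\ref{eq:f8})) with the strands $2,\dots,n$. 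Deleting that column leaves the remaining $\binom{n-1}{2}$ vertices, which by inspection form precisely the diagram~(\ref{a:ef12}) for the $(n\!-\!1)$-ASEP built on strands $2,\dots,n$, with its oscillators $\aaa^{\pm}_i,\kk_i$ relabelled $\aaa^{\pm}_{i+n-1},\kk_{i+n-1}$ — the numbering convention stated after~(\ref{a:ef12}) makes the two vertex orderings agree. The weighting $z^{i_1+\dots+i_n}$ of~(\ref{a:ef11}) by the free incoming edges splits as $z^{i_1}$ (outer strand) times $z^{i_2+\dots+i_n}$ (strands $2,\dots,n$).

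Next I would insert a complete set of $\{0,1\}$-states on the $n-1$ internal horizontal edges $e_2,\dots,e_n$ between the two pieces, writing $X_\alpha(z)=\sum_{(e_2,\dots,e_n)}\big(\text{$(n\!-\!1)$-CTM with right boundary }e_2,\dots,e_n\big)\,\big(\text{stripped column with left edges }e_2,\dots,e_n\big)$, where the stripped column also has right edges $\delta_{2\alpha},\dots,\delta_{n\alpha}$ and top edge $\delta_{1\alpha}$; the two factors commute, since $T$ acts through the oscillators $1,\dots,n-1$ and the other piece through $n,\dots,\binom{n}{2}$. The conservation~(\ref{a:S0}) — $a+b=j$ at each vertex — constrains the admissible configurations: along the vertical line of the stripped column it forces a $1$ to run from the bottom up to the crossing with strand $\alpha$, where it is deposited onto that strand, and $0$ above (vacuously all $1$ if $\alpha=1$, all $0$ if $\alpha=0$), and in particular forces the outer incoming bit to $i_1=\theta(\alpha\neq0)$, yielding the overall factor $z^{\theta(\alpha\neq0)}$ that separates $T(z)_{i0}$ from $T(z)_{ij}$ with $j\ge1$ in~(\ref{a:Tdef}). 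Applying the same conservation along the vertical of the outer strand of the $(n\!-\!1)$-CTM, which is the last crossing that each of the other strands passes through before reaching the internal edges, gives $e_2+\dots+e_n=i_2\in\{0,1\}$; so $(e_2,\dots,e_n)=(\delta_{2,i+1},\dots,\delta_{n,i+1})$ for a unique $i\in\{0,1,\dots,n-1\}$, for which the $(n\!-\!1)$-CTM factor is by definition $\widetilde{X}_i(z)$, and the sum collapses to $\sum_{i=0}^{n-1}\widetilde{X}_i(z)\cdot\big(\text{stripped column, left edges }\delta_{\bullet,i+1}\big)$.

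Finally I would evaluate the stripped column: with the forced vertical profile it is a product of $n-1$ definite vertices of~(\ref{t:s5V}), namely $\aaa^{-}_{\alpha-1}$ or $1$ at the strand-$\alpha$ crossing (present only for $\alpha\ge2$) according as its left edge is $0$ or $1$; $\kk_r$ or $0$ (the forbidden $(1,0,1,1)$) at each vertex below it according as its left edge is $0$ or $1$; and $\aaa^{+}_r$ or $1$ at each vertex above it according as its left edge is $1$ or $0$. Multiplying, with left edges $\delta_{\bullet,i+1}$ and the factor $z^{\theta(\alpha\neq0)}$, gives $\aaa^{+}_i$ (with $\aaa^{+}_0:=1$) when $\alpha=0$; $z\,\kk_\alpha\cdots\kk_{n-1}$ when $\alpha=i+1$; $z\,\aaa^{+}_i\aaa^{-}_{\alpha-1}\kk_\alpha\cdots\kk_{n-1}$ when $\alpha\ge i+2$; and $0$ when $1\le\alpha\le i$. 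These coincide with the entries $T(z)_{i\alpha}$ of~(\ref{a:Tdef}), proving~(\ref{a:xxt}).

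I expect the second step to be the main obstacle: making the splitting of the CTM rigorous and pinning down exactly which internal edge vectors $(e_2,\dots,e_n)$ contribute. The local rule~(\ref{a:S0}) alone does \emph{not} bound $e_2+\dots+e_n$, since a $1$ on a horizontal edge can be absorbed into or emitted from a vertex's own Fock space via the $\aaa^{\pm}$ moves; what makes it work is the geometric fact that in~(\ref{a:ef12}) the outer strand's vertical is the final crossing on every other strand's horizontal run, so telescoping $a+b=j$ along it gives $(\text{sum of outgoing edges})=(\text{outer incoming bit})\in\{0,1\}$. If one would rather not re-derive this, it follows by induction on $n$, peeling one column at a time so that the claim for the $(n\!-\!1)$-CTM is available, with the trivial base $n=1$ where $X_0=1$ and $X_1=z$. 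A convenient check on the argument is that~(\ref{a:xxt}) reproduces the $n=2$ and $n=3$ operators $X_\alpha(z)$ tabulated in Example~\ref{a:ex:X}.
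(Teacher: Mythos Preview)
Your proposal is correct and follows essentially the same approach as the paper: strip off the rightmost column of the CTM diagram~\eqref{a:ef12}, constrain the internal edge states $(e_2,\dots,e_n)$ by the conservation~\eqref{a:S0} (the paper phrases this as ``close inspection of the rightmost two columns''), identify the column weight with $T(z)_{i\alpha}$, and recognize the remainder as $\widetilde X_i(z)$. Your telescoping argument along the vertical of strand~$2$ to force $e_2+\cdots+e_n=i_2\in\{0,1\}$ is exactly what the paper's terse ``close inspection'' amounts to, spelled out in full; the illustration in Example~\ref{a:ex:15} makes the same determination step by step.
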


\begin{proof}
Consider the diagram~\eqref{a:ef12}, and let $s_\beta\, (\beta=1,\ldots, n-1)$ be the variable on the left side of the crossing for the $\beta$th horizontal edge from the top in the rightmost column.
As demonstrated in Example~\ref{a:ex:15},  a close inspection of the rightmost two columns in~\eqref{a:ef12} shows that the configurations that make non-zero contributions to $X_\alpha(z)$ are only those satisfying $s_\beta = \delta_{\beta, i}$ for some $i=0,1,\ldots, \alpha-1$. 
The corresponding term is equal to $\widetilde{X}_i(z) T(z)_{i\alpha}$, where the factors $T(z)_{i\alpha}$ is the weight from the rightmost  $(n\!-\!1)$ vertices and $\widetilde{X}_i(z)$ emerges from the remaining part. 
 \end{proof}
 
 In view of (\ref{a:Tdef}), the sum (\ref{a:xxt}) for the case $\alpha \ge 1$ is actually restricted as
 $\sum_{i=0}^{\alpha-1}\widetilde{X}_i(z) T(z)_{i\alpha}$.
Clearly, $\widetilde{X}_i(z)$ and $T(z)_{i\alpha}$ in (\ref{a:xxt}) are commutative. 

\begin{example}\label{a:ex:15}
Let us consider $X_{\alpha=7}(z)$ for $n=9$ given as a configuration sum~\eqref{a:ef12}. 
We illustrate how the $i=3$ term in the RHS of~\eqref{a:xxt} shows up.
A crucial property is the strange weight conservation of  $S^{ab}_{ij}$ in~\eqref{t:s5V}. 
\newcommand{\basepicture}[1]{
\begin{tikzpicture}[scale=.7,>=latex]
\draw[->,rounded corners] (0,-1) -- (0,7) -- (2,7) node[anchor=west] {$0$};
\draw[->,rounded corners] (1,-1) -- (1,8) -- (2,8) node[anchor=west] {$0$};
\foreach \y in {0,1,3,4,5,6}
	\draw[->] (-1,\y) -- (2,\y) node[anchor=west] {$0$};
\draw[->] (-1,2) -- (2,2) node[anchor=west] {$1$};
\fill[black] (0,-.5) circle (.1);
\fill[black] (1,-.5) circle (.1);
#1
\end{tikzpicture}
}
\[
\basepicture{}
\qquad
\basepicture{
\foreach \y in {3,4,...,7}
	\draw (1,\y-.5) node[anchor=west,scale=.7] {$0$};
\foreach \y in {0,1,2}
	\draw (1,\y-.5) node[color=red,anchor=west,scale=.7] {$1$};
}
\qquad
\basepicture{
\foreach \y in {3,4,...,7}
	\draw (1,\y-.5) node[anchor=west,scale=.7] {$0$};
\foreach \y in {0,1,2}
	\draw (1,\y-.5) node[anchor=west,scale=.7] {$1$};
\foreach \y in {0,1,2}
	\draw (.5,\y) node[color=red,anchor=north,scale=.7] {$0$};
}
\qquad
\basepicture{
\foreach \y in {3,4,...,7}
	\draw (1,\y-.5) node[anchor=west,scale=.7] {$0$};
\foreach \y in {0,1,2}
	\draw (1,\y-.5) node[anchor=west,scale=.7] {$1$};
\foreach \y in {0,1,2}
	\draw (.5,\y) node[anchor=north,scale=.7] {$0$};
\foreach \y in {0,1,...,5}
	\draw (0,\y-.5) node[color=red,anchor=west,scale=.7] {$1$};
\draw (0,5.5) node[color=red,anchor=west,scale=.7] {$0$};
\draw (.5,5) node[color=blue,anchor=north,scale=.7] {$\underline{1}$};
\draw (.5,6) node[color=red,anchor=north,scale=.7] {$0$};
\draw (.5,7) node[color=red,anchor=north,scale=.7] {$0$};
}
\qquad
\basepicture{
\foreach \y in {3,4,...,7}
	\draw (1,\y-.5) node[anchor=west,scale=.7] {$0$};
\foreach \y in {0,1,2}
	\draw (1,\y-.5) node[anchor=west,scale=.7] {$1$};
\foreach \y in {0,1,2,6,7}
	\draw (.5,\y) node[anchor=north,scale=.7] {$0$};
\foreach \y in {0,1,...,5}
	\draw (0,\y-.5) node[anchor=west,scale=.7] {$1$};
\draw (0,5.5) node[anchor=west,scale=.7] {$0$};
\draw (.5,5) node[color=blue,anchor=north,scale=.7] {$\underline{1}$};
\draw (.5,4) node[color=red,anchor=north,scale=.7] {$0$};
\draw (.5,3) node[color=red,anchor=north,scale=.7] {$0$};
}
\]
From left to right, we are performing the following steps:
\begin{enumerate}
\item The rightmost two columns in the diagram for $X_7(z)$.
\item Red vertical edge variables are determined.
\item Red horizontal edge variables are determined.
\item{\em Assuming} that the highest non-zero variable is the blue underlined one, the variables shown in red are determined.
\item  Red horizontal edge variables are determined.
\end{enumerate}
The weight of the eight rightmost vertices and $\bullet$ is $z \aaa^+_3\aaa^-_6 \kk_7 \kk_8 = T(z)_{37}$.
The remaining part of the configuration sum can be identified with  $\widetilde{X}_3(z)$.
\end{example}

\begin{example}
A diagrammatic representation of (\ref{a:xxt}) for $n=3$ is as follows:
\begin{align*}
\raisebox{9pt}{$X_0(z)$} & \begin{array}{@{\;=}c@{\;+\;}c@{\;+\;}c} \begin{tikzpicture}[scale=.6,>=latex,rounded corners,baseline=-1cm,rotate=-90]
\draw[->] (3,0) node[anchor=north] {$0$} -- (0,0) -- (0,1) node[anchor=west] {$0$};
\draw[->] (3,-1) node[anchor=north] {} -- (1,-1) -- (1,1) node[anchor=west] {$0$};
\draw[->] (3,-2) node[anchor=north] {} -- (2,-2) -- (2,1) node[anchor=west] {$0$};
\draw (1.5,0) node[anchor=west,scale=.7] {$0$};
\draw (1.5,-1) node[anchor=west,scale=.7] {$0$};
\draw (2,-.5) node[anchor=north,scale=.7] {$0$};
\fill[black] (2.6,-2) circle (.1);
\fill[black] (2.6,-1) circle (.1);
\end{tikzpicture}
&
\begin{tikzpicture}[scale=.6,>=latex,rounded corners,baseline=-1cm,rotate=-90]
\draw[->] (3,0) node[anchor=north] {$0$} -- (0,0) -- (0,1) node[anchor=west] {$0$};
\draw[->] (3,-1) node[anchor=north] {} -- (1,-1) -- (1,1) node[anchor=west] {$0$};
\draw[->] (3,-2) node[anchor=north] {} -- (2,-2) -- (2,1) node[anchor=west] {$0$};
\draw (1.5,0) node[anchor=west,scale=.7] {$0$};
\draw (1.5,-1) node[anchor=west,scale=.7] {$1$};
\draw (2,-.5) node[anchor=north,scale=.7] {$0$};
\fill[black] (2.6,-2) circle (.1);
\fill[black] (2.6,-1) circle (.1);
\end{tikzpicture}
&
\begin{tikzpicture}[scale=.6,>=latex,rounded corners,baseline=-1cm,rotate=-90]
\draw[->] (3,0) node[anchor=north] {$0$} -- (0,0) -- (0,1) node[anchor=west] {$0$};
\draw[->] (3,-1) node[anchor=north] {} -- (1,-1) -- (1,1) node[anchor=west] {$0$};
\draw[->] (3,-2) node[anchor=north] {} -- (2,-2) -- (2,1) node[anchor=west] {$0$};
\draw (1.5,0) node[anchor=west,scale=.7] {$0$};
\draw (1.5,-1) node[anchor=west,scale=.7] {$0$};
\draw (2,-.5) node[anchor=north,scale=.7] {$1$};
\fill[black] (2.6,-2) circle (.1);
\fill[black] (2.6,-1) circle (.1);
\end{tikzpicture}
\\
\widetilde{X}_0(z) & \aaa_1^+ \widetilde{X}_1(z) & \aaa_2^+ \widetilde{X}_2(z),
\end{array}
\allowdisplaybreaks \\
\raisebox{9pt}{$X_1(z)$} & \begin{array}{@{\;=}c}  \begin{tikzpicture}[scale=.6,>=latex,rounded corners,baseline=-1cm,rotate=-90]
\draw[->] (3,0) node[anchor=north] {$1$} -- (0,0) -- (0,1) node[anchor=west] {$1$};
\draw[->] (3,-1) node[anchor=north] {} -- (1,-1) -- (1,1) node[anchor=west] {$0$};
\draw[->] (3,-2) node[anchor=north] {} -- (2,-2) -- (2,1) node[anchor=west] {$0$};
\draw (1.5,0) node[anchor=west,scale=.7] {$1$};
\draw (1.5,-1) node[anchor=west,scale=.7] {$0$};
\draw (2,-.5) node[anchor=north,scale=.7] {$0$};
\fill[black] (2.6,-2) circle (.1);
\fill[black] (2.6,-1) circle (.1);
\end{tikzpicture}
z
\\
z \kk_1 \kk_2 \widetilde{X}_0(z),
\end{array}
\allowdisplaybreaks \\
\raisebox{9pt}{$X_2(z)$} & \begin{array}{@{\;=}c@{\;+\;}c} \begin{tikzpicture}[scale=.6,>=latex,rounded corners,baseline=-1cm,rotate=-90]
\draw[->] (3,0) node[anchor=north] {$1$} -- (0,0) -- (0,1) node[anchor=west] {$0$};
\draw[->] (3,-1) node[anchor=north] {} -- (1,-1) -- (1,1) node[anchor=west] {$1$};
\draw[->] (3,-2) node[anchor=north] {} -- (2,-2) -- (2,1) node[anchor=west] {$0$};
\draw (1.5,0) node[anchor=west,scale=.7] {$1$};
\draw (1.5,-1) node[anchor=west,scale=.7] {$0$};
\draw (2,-.5) node[anchor=north,scale=.7] {$0$};
\fill[black] (2.6,-2) circle (.1);
\fill[black] (2.6,-1) circle (.1);
\end{tikzpicture}
z
&
\begin{tikzpicture}[scale=.6,>=latex,rounded corners,baseline=-1cm,rotate=-90]
\draw[->] (3,0) node[anchor=north] {$1$} -- (0,0) -- (0,1) node[anchor=west] {$0$};
\draw[->] (3,-1) node[anchor=north] {} -- (1,-1) -- (1,1) node[anchor=west] {$1$};
\draw[->] (3,-2) node[anchor=north] {} -- (2,-2) -- (2,1) node[anchor=west] {$0$};
\draw (1.5,0) node[anchor=west,scale=.7] {$1$};
\draw (1.5,-1) node[anchor=west,scale=.7] {$1$};
\draw (2,-.5) node[anchor=north,scale=.7] {$0$};
\fill[black] (2.6,-2) circle (.1);
\fill[black] (2.6,-1) circle (.1);
\end{tikzpicture}
z
\\
z \aaa_1^- \kk_2 \widetilde{X}_0(z) & z \kk_2 \widetilde{X}_1(z),
\end{array}
\allowdisplaybreaks \\
\raisebox{9pt}{$X_3(z)$} & \begin{array}{@{\;=}c@{\;+\;}c@{\;+\;}c} \begin{tikzpicture}[scale=.6,>=latex,rounded corners,baseline=-1cm,rotate=-90]
\draw[->] (3,0) node[anchor=north] {$1$} -- (0,0) -- (0,1) node[anchor=west] {$0$};
\draw[->] (3,-1) node[anchor=north] {} -- (1,-1) -- (1,1) node[anchor=west] {$0$};
\draw[->] (3,-2) node[anchor=north] {} -- (2,-2) -- (2,1) node[anchor=west] {$1$};
\draw (1.5,0) node[anchor=west,scale=.7] {$0$};
\draw (1.5,-1) node[anchor=west,scale=.7] {$0$};
\draw (2,-.5) node[anchor=north,scale=.7] {$0$};
\fill[black] (2.6,-2) circle (.1);
\fill[black] (2.6,-1) circle (.1);
\end{tikzpicture}
z
&
\begin{tikzpicture}[scale=.6,>=latex,rounded corners,baseline=-1cm,rotate=-90]
\draw[->] (3,0) node[anchor=north] {$1$} -- (0,0) -- (0,1) node[anchor=west] {$0$};
\draw[->] (3,-1) node[anchor=north] {} -- (1,-1) -- (1,1) node[anchor=west] {$0$};
\draw[->] (3,-2) node[anchor=north] {} -- (2,-2) -- (2,1) node[anchor=west] {$1$};
\draw (1.5,0) node[anchor=west,scale=.7] {$0$};
\draw (1.5,-1) node[anchor=west,scale=.7] {$1$};
\draw (2,-.5) node[anchor=north,scale=.7] {$0$};
\fill[black] (2.6,-2) circle (.1);
\fill[black] (2.6,-1) circle (.1);
\end{tikzpicture}
z
&
\begin{tikzpicture}[scale=.6,>=latex,rounded corners,baseline=-1cm,rotate=-90]
\draw[->] (3,0) node[anchor=north] {$1$} -- (0,0) -- (0,1) node[anchor=west] {$0$};
\draw[->] (3,-1) node[anchor=north] {} -- (1,-1) -- (1,1) node[anchor=west] {$0$};
\draw[->] (3,-2) node[anchor=north] {} -- (2,-2) -- (2,1) node[anchor=west] {$1$};
\draw (1.5,0) node[anchor=west,scale=.7] {$0$};
\draw (1.5,-1) node[anchor=west,scale=.7] {$0$};
\draw (2,-.5) node[anchor=north,scale=.7] {$1$};
\fill[black] (2.6,-2) circle (.1);
\fill[black] (2.6,-1) circle (.1);
\end{tikzpicture}
z
\\
z \aaa_2^- \widetilde{X}_0(z) & z \aaa_1^+ \aaa_2^- \widetilde{X}_1(z) & z \widetilde{X}_2(z).
\end{array}
\end{align*}
One can also check them directly by using Example~\ref{a:ex:X}.
A matrix form of (\ref{a:xxt}) is 
\begin{align}
(X_0(z), X_1(z), X_2(z), X_3(z)) = 
(\widetilde{X}_0(z), \widetilde{X}_1(z), \widetilde{X}_2(z))
\begin{pmatrix}\label{a:rec3}
1 & z \kk_1\kk_2 & z \am_1\kk_2 & z \am_2
\\
\ap_1 & 0 & z \kk_2 & z \ap_1 \am_2
\\
\ap_2 & 0 & 0 & z
\end{pmatrix}.
\end{align}
The transposition of the matrix here reproduces the $4 \times 3$ matrix in~\cite[Eq.(73)]{CDW}
under the conventional change $(\ap_i, \am_i, \kk_i, z) \rightarrow (a_{i+1}, a^\dagger_{i+1},k_{i+1},z)$.
The matrix $(T(z)_{i j})_{0 \le i \le n-1, 0 \le j \le n}$ for $n=4$ is available in~\eqref{a:ex:t4}.
The recursion relation of the matrix product operators of the form~\eqref{a:xxt} without a spectral parameter appeared earlier in~\cite{PEM}.
\end{example}

\begin{remark}\label{a:re:nn}
The diagram (\ref{a:ef12}) makes sense  only for $n \ge 2$.
We extend the definition of $X_0(z),\ldots, X_n(z)$ to $n=0,1$ as follows:
\begin{equation}
\begin{split}
\text{$n=1$:} \;\;&X_0(z) = 1,\quad X_1(z) = z,
\\
\text{$n=0$:} \; \; & X_0(z) = 1. 
\end{split}
\end{equation}
Then the recursion relation (\ref{a:xxt}) also holds at $n=1$. 
The $n=1$ case agrees with~\cite[Eq.(42)]{CDW}.
\end{remark}

Several diagrammatic representations of the matrix product operators $X_\alpha$ 
or $X_\alpha(z)$ were devised in earlier works~\cite{CDW,PEM}. 
However, we find that the CTM representation in (\ref{a:ef12}) is the simplest 
and most systematic, offering a clear visualization of their evaluation and 
providing clarity to Proposition~\ref{a:pr:rt}. 
This advancement has been made possible through the introduction of the 
strange five vertex model.

\section{Proof of the Zamolodchikov--Faddeev algebra relation}\label{sec:zf}

The aim of this section is to show the ZF algebra relation~\eqref{a:zf}. 

\subsection{$RLL=LLR$ relation}
Recall that the bosonic Fock space $F$ is defined in (\ref{a:F}).
Set $\F=F^{\ot (n+1)}$. 
For a sequence of non-negative integers $\mm=(m_0,\ldots,m_n)$,\footnote{This ${\bf m}$ should not be confused with 
the multiplicity introduced in (\ref{a:VP}).}
we write $\ket{\mm}=\ket{m_0}\ot\cdots\ot\ket{m_n} \in \F$.
Set 
\begin{align}
\F_l & =\Q(t)\langle\ket{\mm}\mid |  \mm \in D_l \rangle,
\\
D_l &= \{\mm = (m_0,\ldots, m_n) \in (\Z_{\ge 0})^{n+1}\mid m_0+\cdots +  m_n = l\}.
\label{a:dl}
\end{align}
Denote by ${\bf e}_i \in D_{l=1}$ the $i$th standard basis vector.
For $0\le\alpha,\beta\le n$
and $\aaa, \bb \in D_l$,  set
\begin{align}
L(z)^{\beta, \bb}_{\alpha, \aaa} = \delta_{ {\bf e}_\alpha +\aaa}^{ {\bf e}_\beta+ \bb}\,
t^{a_{\beta+1}+ \cdots + a_n}(1-t^{a_\beta}z^{\theta(\alpha=\beta)})z^{\theta(\alpha>\beta)},
\label{a:L1}
\end{align}
and define a linear operator 
\begin{equation}\label{a:L}
\begin{aligned}
L(z)\colon \quad  \F_1 \otimes \F_l & \longrightarrow \F_1 \otimes \F_l
\\
\ket{\eb_\alpha} \otimes \ket{\aaa} & \longmapsto
\sum_{\beta\in \{0,\ldots, n\}, \bb \in D_l}L(z)^{\beta, \bb}_{\alpha, \aaa}
\ket{\eb_\beta} \otimes \ket{\bb}.
\end{aligned}
\end{equation}
The dependence on $t$ and $l$ has been suppressed in the notation.
We also introduce the components
\begin{equation}\label{a:Lab}
\begin{aligned}
L(z)_\alpha^\beta \colon \;\; \F_l & \longrightarrow  \F_l\quad (0\le \alpha, \beta \le n)
\\
\ket{\aaa} & \longmapsto
\sum_{\bb \in D_l}L(z)^{\beta, \bb}_{\alpha, \aaa}\ket{\bb} = 
L(z)^{\beta, \aaa+\eb_\alpha-\eb_\beta}_{\alpha, \aaa}\ket{\aaa+\eb_\alpha-\eb_\beta},
\end{aligned}
\end{equation}
where the RHS is to be understood as 0 unless $\aaa+{\bf e}_\alpha-{\bf e}_{\beta} \in D_l$.

\begin{remark}\label{a:re:L}
Let $R_{\rm KMMO}(z)^{e_k, \delta}_{e_j, \beta}$ denote the elements 
of the $R$ matrix given in the first equation in~\cite[App.A]{KMMO} with 
$m$ replaced with $l$.\footnote{Indices $1,\ldots, n+1$ in~\cite{KMMO} should also be 
replaced with $0,\ldots, n$.}
According to~\cite[Eqs.(15),(16)]{KMMO},  its stochastic gauge is given by 
$S_{\rm KMMO}(z)^{e_k, \delta}_{e_j, \beta}:=
q^\eta R_{\rm KMMO}(z)^{e_k, \delta}_{e_j, \beta}$ with 
$\eta = \delta_0+\cdots + \delta_{k-1} - (\beta_{j+1}+ \cdots + \beta_n)$.
Then one has  
\begin{align}\label{a:SK}
(1-q^{2l}z)S_{\rm KMMO}(q^{1-l}z^{-1})^{\beta, \bb}_{\alpha, \aaa}
=
\delta_{ {\bf e}_\alpha +\aaa}^{ {\bf e}_\beta+ \bb}\,q^{2(a_0+ \cdots + a_{\beta-1})}(1-q^{2a_\beta}z^{\theta(\alpha=\beta)})z^{\theta(\alpha<\beta)}.
\end{align}
The element (\ref{a:L1}) is obtained from (\ref{a:SK})  by reversing the indices as 
$(\alpha, \beta) \rightarrow (n-\alpha, n-\beta)$, 
$\aaa \rightarrow (a_n,\ldots, a_0)$, 
$\bb \rightarrow (b_n,\ldots, b_0)$ and setting $q^2 \rightarrow t$.
When $l=1$, it reduces to the scalar multiple of (\ref{a:rmat}) 
as $L(z)^{\gamma, \eb_\delta}_{\alpha, \eb_\beta}=(1-tz)R(z)^{\gamma, \delta}_{\alpha, \beta}$.
The operator $L(z)$ is ``stochastic" in the sense that 
$\sum_{\beta, \bb} L(z)^{\beta, \bb}_{\alpha, \aaa} = 1-z t^l$ is independent of 
$\alpha$ and $\aaa$.
\end{remark}

\begin{proposition}[$RLL=LLR$ relation]\label{a:pr:RLL}
For any $a,b,i,j \in \{0,\ldots, n\}$, the following equality is valid:
\begin{equation}\label{a:rll}
\sum_{a',b'=0}^nR(x/y)_{i, j}^{a', b'}L(y)_{b'}^bL(x)_{a'}^a
=\sum_{i',j'=0}^nL(x)_i^{i'}L(y)_j^{j'}R(x/y)_{i', j'}^{a, b}.
\end{equation}
\end{proposition}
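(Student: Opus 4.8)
The plan is to read~\eqref{a:rll} as the standard $RLL=LLR$ (mixed Yang--Baxter) relation for $U_t(\widehat{sl}_{n+1})$, in which the matrix acting on the two auxiliary spaces $\F_1\otimes\F_1$ is the vector-representation $R$-matrix $R(x/y)$ of~\eqref{a:rmat}, while $L(z)$ is the intertwiner between the vector representation on $\F_1$ and the $l$th symmetric tensor representation realized on $\F_l$ (of dimension $\binom{l+n}{n}$). Indeed, letting $\mathcal{L}(z)$ be the matrix over $\mathrm{End}(\F_l)$ with entries $L(z)^\beta_\alpha$ from~\eqref{a:Lab}, the relation~\eqref{a:rll} is exactly $R_{12}(x/y)\,\mathcal{L}_{13}(x)\,\mathcal{L}_{23}(y)=\mathcal{L}_{23}(y)\,\mathcal{L}_{13}(x)\,R_{12}(x/y)$ on $\F_1\otimes\F_1\otimes\F_l$, an instance of the Yang--Baxter equation with mixed representations. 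So I would not verify~\eqref{a:rll} by brute force but deduce it from the known Yang--Baxter relations of the $U_t(\widehat{sl}_{n+1})$ $R$-matrices, via the dictionary to the stochastic higher-spin $R$-matrix $S_{\mathrm{KMMO}}$ of~\cite{KMMO} set up in Remark~\ref{a:re:L}.

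Concretely, I would start from the $RLL=LLR$ relation satisfied by $S_{\mathrm{KMMO}}$ (a known property of those $R$-matrices), with the $l=1$ instance of $S_{\mathrm{KMMO}}$ in the auxiliary channel and the $l$-component in the quantum space $\F_l$, and then apply to all three tensor factors at once the operations recorded in Remark~\ref{a:re:L}: factoring out the index-independent scalar $1-q^{2l}z$ appearing in~\eqref{a:SK}, the spectral-parameter change $z\mapsto q^{1-l}z^{-1}$, the specialization $q^2\mapsto t$, and the Dynkin-index reversal $\alpha\mapsto n-\alpha$ together with the corresponding reversal of Fock labels. Each of these preserves the form of an $RLL$ relation: a common index-independent scalar occurs with equal multiplicity on both sides and cancels, rescaling/inverting the spectral parameters is a change of variables turning the auxiliary argument into $x/y$, $q^2\mapsto t$ is a substitution in the base field, and the index reversal is an isomorphism of the relevant modules. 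Under these operations the $l$-component of $S_{\mathrm{KMMO}}$ becomes $L(z)$ by definition, while --- by the $l=1$ case recorded at the end of Remark~\ref{a:re:L} --- the auxiliary component becomes precisely $R(x/y)$ of~\eqref{a:rmat}, the residual scalar cancelling; assembling everything yields~\eqref{a:rll}.

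If one prefers a self-contained argument not invoking~\cite{KMMO}, I would instead prove~\eqref{a:rll} by direct computation. The weight conservation $\delta^{\eb_\alpha+\aaa}_{\eb_\beta+\bb}$ built into~\eqref{a:L1} forces both sides of~\eqref{a:rll}, applied to a basis vector $\ket{\aaa}\in\F_l$, to be scalar multiples of the single vector $\ket{\aaa+\eb_i+\eb_j-\eb_a-\eb_b}$ (understood as $0$ when it lies outside $D_l$), so~\eqref{a:rll} collapses to a finite list of rational-function identities in $x,y,t$ and the components of $\aaa$. Sorting these by the coincidence pattern of $a,b,i,j$, all but a handful of patterns give $0=0$ or a one-term tautology; the genuinely non-trivial patterns are the off-diagonal ones, where at most two terms on one side must be matched against at most two on the other, and each reduces to an elementary identity obtained from~\eqref{a:rmat} and~\eqref{a:L1}.

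The main obstacle along the preferred route is bookkeeping rather than mathematics: one must align the index conventions, the location of the gauge factors, and the spectral-parameter substitution between~\cite{KMMO} and~\eqref{a:L1} carefully enough to be certain that the auxiliary-channel matrix emerges as $R(x/y)$ of~\eqref{a:rmat} --- not a transpose, an ``opposite'' ordering, or $R(y/x)$ --- and that the scalar prefactors really cancel on both sides. Along the direct route the obstacle is instead the case analysis, in particular keeping the exponents $\theta(\alpha<\beta)$ and $\theta(\alpha>\beta)$ in~\eqref{a:rmat} and~\eqref{a:L1} consistent across the non-diagonal cases.
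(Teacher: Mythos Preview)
Your proposal is correct and follows essentially the same route as the paper: both identify~\eqref{a:rll} as the component form of $R_{12}(x/y)L_{13}(x)L_{23}(y)=L_{23}(y)L_{13}(x)R_{12}(x/y)$ on $\F_1\otimes\F_1\otimes\F_l$ and deduce it from the Yang--Baxter equation for the stochastic $R$-matrices of~\cite{KMMO} via the dictionary in Remark~\ref{a:re:L}. The paper's version is terser---it simply cites~\cite[Prop.~4]{KMMO} with the specialization $(k,l,m)\to(1,1,l)$ and the spectral reparametrization---while you spell out the bookkeeping (scalar cancellation, index reversal, $q^2\to t$) and add a self-contained direct-verification alternative that the paper omits.
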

\begin{proof}
From Remark~\ref{a:re:L}, one has the $RLL=LLR$ relation
$R_{12}(x/y)L_{13}(x)L_{23}(y) = L_{23}(y)L_{13}(x)R_{12}(x/y)$ 
in $\mathrm{End}(\F_1 \otimes \F_1 \otimes \F_l)$
by  setting $(k,l,m) \rightarrow (1,1,l)$ and 
$(x,y)\rightarrow (x/y,y)$ in the Yang--Baxter equation for the stochastic $R$'s 
in~\cite[Prop.4]{KMMO}.
The relation (\ref{a:rll}) is a component of it corresponding to the transition
$\ket{{\bf e}_i} \otimes \ket{{\bf e}_j}  \mapsto \ket{{\bf e}_a} \otimes \ket{{\bf e}_b}$ 
in the $\F_1 \otimes \F_1$ part.
\end{proof}

Consider the constant part of (\ref{a:Lab}):
\begin{equation} \label{m:Lt}
L(0)_\alpha^\beta \ket{{\bf m}} = \begin{cases}
t^{m_{\beta+1}+\cdots+m_n}\ket{{\bf m}}&(\alpha=\beta),\\
t^{m_{\beta+1}+\cdots+m_n}(1-t^{m_\beta}) \ket{{\bf m}+{\bf e}_\alpha-{\bf e}_\beta} &(\alpha<\beta),\\
0&(\alpha>\beta).
\end{cases}
\end{equation}
In Proposition~\ref{a:pr:RLL}, replace $(x,y)$ by $(cx^{-1},cy^{-1})$ and take the limit $c\rightarrow 0$.
The result reads
\begin{equation} \label{m:RLtLt}
\sum_{a',b'=0}^nR(y/x)_{i, j}^{a', b'}L(0)_{b'}^bL(0)_{a'}^a
=\sum_{i',j'=0}^nL(0)_i^{i'}L(0)_j^{j'}R(y/x)_{i', j'}^{a, b}.
\end{equation}

We would like to reformulate \eqref{m:RLtLt} as a $t$-oscillator valued equation. 
On $F$, $t$-oscillators $\kk,\ap,\am$ act as~\eqref{a:F}.
We let $n+1$ copies of the $t$-oscillators act on $\F$. 
We distinguish them by putting a subscript
as $\ap_i$ to signify which component it acts on. Note that we label the components as
$0,1,\ldots,n$ from the left. 
Introduce further the space without the $0$~th component in $\F$ and the projection onto it as 
\begin{align}
\overline{\F} &= \bigoplus_{(m_1,\ldots, m_n) \in (\Z_{\ge 0})^n}
\mathbb{Q}(t) |m_1\rangle \otimes \cdots \otimes |m_n\rangle,
\\
\iota \colon & \;  \F \rightarrow \overline{\F}\,;\;\;  
|m_0\rangle \otimes |m_1\rangle \otimes \cdots \otimes |m_n\rangle  
\mapsto  |m_1\rangle \otimes \cdots \otimes |m_n\rangle.
\end{align}
Then by taking the image of (\ref{m:RLtLt}) by $\iota$, we obtain
\begin{equation} \label{a:nr}
\sum_{a',b'=0}^nR(y/x)_{i, j}^{a', b'}\mathcal{L}_{b'}^b\mathcal{L}_{a'}^a
=\sum_{i',j'=0}^n\mathcal{L}_i^{i'}\mathcal{L}_j^{j'}R(y/x)_{i', j'}^{a, b},
\end{equation}
where $\mathcal{L}_\alpha^\beta \in \mathrm{End}(\overline{\F})$ is given 
in terms of the $t$-oscillators $\apm_i,\kk_i \, (i=1,\ldots, n)$ as
\begin{equation}\label{a:cL}
\mathcal{L}_\alpha^\beta= \begin{cases}
\kk_{\beta+1}\cdots\kk_n&(\alpha=\beta),\\
\ap_\alpha\am_\beta\kk_{\beta+1}\cdots\kk_n&(\alpha<\beta),\\
0&(\alpha>\beta),
\end{cases}
\end{equation}
for  $0 \le \alpha, \beta \le n$ with $\ap_0=1$ in the middle case.
As mentioned in Remark~\ref{a:re:tau}, 
the formula (\ref{a:cL}) can be interpreted as  the ``Holstein--Primakov representation'' of the $L$-operators. 
Analogous results in the ``crystal'' gauge as opposed to the stochastic one adopted here 
have been obtained in~\cite{IKO} for $A^{(1)}_n$ and also $D^{(1)}_n$
in the study of quantized box-ball systems.

\subsection{Rank reducing $RTT=TTR$ relation}
Recall that \( T(z)_{\alpha \beta} \) is defined in (\ref{a:Tdef})
 for \( 0 \leq \alpha \leq n-1 \) and \( 0 \leq \beta \leq n \).
By direct comparison with (\ref{a:cL}), we obtain the relation
\begin{equation} \label{m:LtT}
\mathcal{L}_\alpha^\beta = 
T(z)_{\alpha,\beta+1} 
(\am_n)^{\delta_{\beta n}} (z^{-1}\kk_n)^{\theta(\beta \neq n)}
\quad (0 \leq \alpha \leq n-1, \, 0 \leq \beta \leq n),
\end{equation}
where \( T(z)_{\alpha, n+1} := T(z)_{\alpha 0} \).
Operator ordering does not matter in (\ref{m:LtT}) 
since \( T(z)_{ij} \) involves only \( \apm_\alpha \) and \( \kk_\alpha \) 
with \( \alpha = 1, \ldots, n-1 \).
The relation (\ref{m:LtT}) serves as the crucial link between the $RLL=LLR$ relation (Proposition~\ref{a:pr:RLL}) 
and the forthcoming rank-reducing $RTT=TTR$ relation (Proposition~\ref{a:pr:rr}), 
a connection that has previously gone unnoticed in the literature.
\begin{example}\label{a:ex:L}
From (\ref{a:cL}), the LHS of (\ref{m:LtT})  for $n=4$ in the matrix form is given as
\begin{equation}
\bigl(\mathcal{L}_\alpha^\beta\bigr)_{0 \le \alpha \le 3, 0 \le \beta \le 4}
=\begin{pmatrix}
 \kk_1 \kk_2 \kk_3 \kk_4 &  \am_1\kk_2\kk_3\kk_4 &  \am_2\kk_3\kk_4 &  \am_3\kk_4 &  \am_4
\\
0 &  \kk_2 \kk_3\kk_4 & \ap_1\am_2 \kk_3\kk_4 &  \ap_1 \am_3 \kk_4 & \ap_1 \am_4
\\
0 & 0 &  \kk_3 \kk_4 &  \ap_2\am_3 \kk_4 &  \ap_2\am_4 
\\
0 & 0 & 0 &  \kk_4 &  \ap_3\am_4
\end{pmatrix}.
\end{equation}
Similarly, the RHS of (\ref{m:LtT}) for $n=4$ reads
\begin{align}
\begin{pmatrix}
T(z)_{01}z^{-1}\kk_4 & T(z)_{02}z^{-1}\kk_4 & T(z)_{03}z^{-1}\kk_4 & T(z)_{04}z^{-1}\kk_4 & T(z)_{00}\am_4
\\
T(z)_{11}z^{-1}\kk_4 & T(z)_{12}z^{-1}\kk_4 & T(z)_{13}z^{-1}\kk_4 & T(z)_{14}z^{-1}\kk_4 & T(z)_{10}\am_4
\\
T(z)_{21}z^{-1}\kk_4 & T(z)_{22}z^{-1}\kk_4 & T(z)_{23}z^{-1}\kk_4 & T(z)_{24}z^{-1}\kk_4 & T(z)_{20}\am_4
\\
T(z)_{31}z^{-1}\kk_4 & T(z)_{32}z^{-1}\kk_4 & T(z)_{33}z^{-1}\kk_4 & T(z)_{34}z^{-1}\kk_4 & T(z)_{30}\am_4
\end{pmatrix}.
\end{align}
They indeed coincide due to (\ref{a:Tdef}) with $n=4$, which reads
\begin{align}\label{a:ex:t4}
(T(z)_{\alpha \beta})_{0 \le \alpha \le 3, 0 \le \beta \le 4}
=\begin{pmatrix}
1 & z \kk_1\kk_2 \kk_3 & z \am_1\kk_2 \kk_3 & z \am_2 \kk_3  & z\am_3
\\
\ap_1 & 0 & z \kk_2\kk_3 & z \ap_1 \am_2 \kk_3 & z \ap_1\am_3
\\
\ap_2 & 0 & 0 & z \kk_3 & z \ap_2\am_3
\\
\ap_3 & 0 & 0 & 0 & z
\end{pmatrix}.
\end{align} 
\end{example}

The relation (\ref{a:nr}) depends on \( x \) and \( y \) 
only through the ratio \( y/x \). 
However, switching to the description with \( T(z)_{\alpha \beta} \) 
via (\ref{m:LtT}) appropriately reinstates a non-trivial dependence 
on the spectral parameters, as we will demonstrate below.

\begin{proposition}[Rank-reducing $RTT=TTR$ relation]\label{a:pr:rr}
For $0 \le a,b \le n$ and $0 \le i,j \le n-1$, we have
\begin{equation}\label{a:rrr}
\sum_{a',b'=0}^{n-1}R(y/x)_{i, j}^{a', b'}T(y)_{b'b}T(x)_{a'a}
=\sum_{i',j'=0}^nT(x)_{ii'}T(y)_{jj'}R(y/x)_{i', j'}^{a, b}.
\end{equation}
\end{proposition}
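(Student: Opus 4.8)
The plan is to deduce \eqref{a:rrr} from the $R\mathcal{L}\mathcal{L}=\mathcal{L}\mathcal{L}R$ relation \eqref{a:nr} by inserting the factorization \eqref{m:LtT} and then stripping off the $n$th Fock space. Write $M(z)_\gamma:=(\am_n)^{\delta_{\gamma n}}(z^{-1}\kk_n)^{\theta(\gamma\neq n)}$ for $0\le\gamma\le n$, so that \eqref{m:LtT} reads $\mathcal{L}_\alpha^\beta=T(z)_{\alpha,\beta\oplus1}M(z)_\beta$, where $\beta\oplus1$ denotes $\beta+1$ if $\beta<n$ and $0$ if $\beta=n$; since the left side carries no $z$, this may be applied with an independent value of $z$ at each occurrence of $\mathcal{L}$. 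Recall that each $T(z)_{\alpha\gamma}$ involves only $\apm_r,\kk_r$ with $1\le r\le n-1$ while each $M(z)_\gamma$ involves only $\apm_n,\kk_n$, so all the $T$'s commute with all the $M$'s. First I would substitute \eqref{m:LtT} into \eqref{a:nr}, taking the spectral parameter to be $y$ in $\mathcal{L}_{b'}^b$ and $\mathcal{L}_j^{j'}$ and $x$ in $\mathcal{L}_{a'}^a$ and $\mathcal{L}_i^{i'}$, and push every $M$ to the right. Since $R(y/x)_{ij}^{a'b'}\neq 0$ already forces $\{a',b'\}=\{i,j\}\subseteq\{0,\dots,n-1\}$, writing $A:=a\oplus1$ and $B:=b\oplus1$ this yields
\[
\Bigl[\sum_{a',b'=0}^{n-1}R(y/x)_{ij}^{a'b'}T(y)_{b'B}T(x)_{a'A}\Bigr]\otimes M(y)_bM(x)_a
=\sum_{\{i',j'\}=\{a,b\}}\bigl[T(x)_{i,i'\oplus1}T(y)_{j,j'\oplus1}\,R(y/x)_{i'j'}^{ab}\bigr]\otimes M(x)_{i'}M(y)_{j'},
\]
an identity in $\mathrm{End}(\overline{\F})$ in which the $T$-factors act only on the first $n-1$ tensor slots of $\overline{\F}$ and the $M$-factors only on the $n$th.

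The second step is to cancel that $n$th slot. For fixed $a,b$, using $\kk_n\am_n=t^{-1}\am_n\kk_n$ from \eqref{a:aak}, each of $M(y)_bM(x)_a$, $M(x)_aM(y)_b$, $M(x)_bM(y)_a$ is a nonzero scalar multiple of one common operator $\Omega_{a,b}$ on the $n$th Fock space --- equal to $\kk_n^2$, to $\am_n\kk_n$, or to $\am_n^2$ according as zero, one, or two of $a,b$ equal $n$; say $M(y)_bM(x)_a=\lambda\Omega_{a,b}$, $M(x)_aM(y)_b=\mu\Omega_{a,b}$, $M(x)_bM(y)_a=\nu\Omega_{a,b}$. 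As $\Omega_{a,b}\neq 0$, an identity $X\otimes\Omega_{a,b}=Y\otimes\Omega_{a,b}$ forces $X=Y$, so cancelling $\Omega_{a,b}$ in the display leaves
\[
\lambda\sum_{a',b'=0}^{n-1}R(y/x)_{ij}^{a'b'}T(y)_{b'B}T(x)_{a'A}
=\mu\,R(y/x)_{ab}^{ab}\,T(x)_{iA}T(y)_{jB}+\nu\,R(y/x)_{ba}^{ab}\,T(x)_{iB}T(y)_{jA},
\]
the $\nu$-term being absent when $a=b$. Since $\gamma\mapsto\gamma\oplus1$ is a bijection of $\{0,\dots,n\}$ sending $\{a,b\}$ to $\{A,B\}$, comparing this with \eqref{a:rrr} for the upper-index pair $(A,B)$ reduces the statement to the scalar identities $\tfrac{\mu}{\lambda}R(y/x)_{ab}^{ab}=R(y/x)_{AB}^{AB}$ and $\tfrac{\nu}{\lambda}R(y/x)_{ba}^{ab}=R(y/x)_{BA}^{AB}$; as $(a,b)$ ranges over $\{0,\dots,n\}^2$ so does $(A,B)$, so these cover all of \eqref{a:rrr}.

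The main obstacle is exactly these last scalar identities, where the cyclic shift $\gamma\mapsto\gamma\oplus1$ inherited from \eqref{m:LtT} must be reconciled with the fact that the stochastic $R$ matrix \eqref{a:rmat} is \emph{not} invariant under cyclic relabelling of $\{0,\dots,n\}$: the exponents $\theta(a<b)$ attached to $t$ and to $y/x$ jump when $0$ wraps around to $n$. I would finish with the finite case analysis according to how many of $a,b$ equal $n$, reading $\lambda,\mu,\nu$ off the definition of $M(z)_\gamma$ and the $R$-entries off \eqref{a:rmat}. When $a,b\neq n$ one has $\lambda=\mu=\nu$ and $\theta(a<b)=\theta(A<B)$, making both identities immediate; when $a=b=n$ every relevant factor equals $1$; and when exactly one of $a,b$ equals $n$ --- the only genuinely non-formal case --- the ratios $\mu/\lambda$ and $\nu/\lambda$ turn out to be precisely the power of $t$ and the power of $y/x$ that offset the change of $\theta(a<b)$ into $\theta(A<B)$. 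Assembling the three cases establishes \eqref{a:rrr}.
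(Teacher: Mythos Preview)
Your proof is correct and follows essentially the same route as the paper's: substitute \eqref{m:LtT} into \eqref{a:nr}, push the $n$th-Fock operators to the right, cancel them (your $\Omega_{a,b}$ is what the paper calls ``the $t$-oscillator parts of $A$ and $B$''), and then absorb the leftover scalar factors by a cyclic-shift identity for the $R$-matrix entries. The only cosmetic difference is that the paper packages your final case analysis as a separate ``quasi-periodicity'' lemma $R(z)_{i',j'}^{a,b}=z^{\delta_{j'0}-\delta_{b0}}t^{-\theta(i'\ne0,j'=0)+\theta(a=0,b\ne0)}R(z)_{i'-1,j'-1}^{a-1,b-1}$, whereas you verify the needed instances $\tfrac{\mu}{\lambda}R(y/x)_{ab}^{ab}=R(y/x)_{AB}^{AB}$ and $\tfrac{\nu}{\lambda}R(y/x)_{ba}^{ab}=R(y/x)_{BA}^{AB}$ directly.
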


\begin{proof}
In \eqref{a:nr},  
substitute \eqref{m:LtT} with $z$ taken as $y,x,x,y$ for 
$\mathcal{L}_{b'}^b,  \mathcal{L}_{a'}^a, \mathcal{L}_i^{i'}, \mathcal{L}_j^{j'}$, respectively.
Restrict the range of $i,j$ to $0\le i,j\le n-1$. 
Then from the weight preservation, one can also restrict the summation for $a',b'$ to $0\le a',b'\le n-1$.
Hence we get
\begin{align*}
&\sum_{a',b'=0}^{n-1}R(y/x)_{i, j}^{a', b'}y^{\delta_{bn}}T(y)_{b',b+1}(\am_n)^{\delta_{bn}}\kk_n^{\theta(b\ne n)}
\times x^{\delta_{an}}T(x)_{a',a+1}(\am_n)^{\delta_{an}}\kk_n^{\theta(a\ne n)}\\
=&\sum_{i',j'=0}^nx^{\delta_{i'n}}T(x)_{i,i'+1}(\am_n)^{\delta_{i'n}}\kk_n^{\theta(i'\ne n)}
\times y^{\delta_{j'n}}T(y)_{j,j'+1}(\am_n)^{\delta_{j'n}}\kk_n^{\theta(j'\ne n)}R(y/x)_{i', j'}^{a, b}.
\end{align*}
Sending the $t$-oscillators with subscript $n$ to the right using the commutativity  
with $T(z)_{rs}$, we have
\[
\sum_{a',b'=0}^{n-1}x^{\delta_{an}}y^{\delta_{bn}}R(y/x)_{i, j}^{a', b'}T(y)_{b',b+1}T(x)_{a',a+1} A
=\sum_{i',j'=0}^nx^{\delta_{i'n}}y^{\delta_{j'n}}T(x)_{i,i'+1}T(y)_{j,j'+1} B R(y/x)_{i', j'}^{a, b},
\]
where 
\[
A=(\am_n)^{\delta_{bn}}\kk_n^{\theta(b\ne n)}(\am_n)^{\delta_{an}}\kk_n^{\theta(a\ne n)},\qquad
B=(\am_n)^{\delta_{i'n}}\kk_n^{\theta(i'\ne n)}(\am_n)^{\delta_{j'n}}\kk_n^{\theta(j'\ne n)}.
\]
Using the commutation relation $\kk_n\am_n=t^{-1}\am_n\kk_n$, we obtain
\[
A=t^{-\theta(a=n,b\ne n)}(\am_n)^{\delta_{an}+\delta_{bn}}\kk_n^{\theta(a\ne n)+\theta(b\ne n)},\qquad
B=t^{-\theta(i'\ne n,j'=n)}(\am_n)^{\delta_{i'n}+\delta_{j'n}}\kk_n^{\theta(i'\ne n)+\theta(j'\ne n)}.
\]
The $t$-oscillator parts of $A$ and $B$ are equal when $R(y/x)^{a, b}_{i', j'}\ne0$.
Taking the coefficients of $A$ and $B$ acting on the $n$th component, we arrive at the following:
\begin{align*}
&\sum_{a',b'=0}^{n-1}R(y/x)_{i, j}^{a', b'}T(y)_{b',b}T(x)_{a',a} \\
& \hspace{30pt} = \sum_{i',j'=0}^nT(x)_{i,i'}T(y)_{j,j'} x^{\delta_{i'0}-\delta_{a0}}y^{\delta_{j'0}-\delta_{b0}}
t^{-\theta(i'\ne0,j'=0)+\theta(a=0,b\ne0)}R(y/x)_{i'-1,j'-1}^{a-1,b-1}.
\end{align*}
Note that we have decreased the indices $a,b,i',j'$ by 1.
Additionally note that when $R(y/x)_{i', j'}^{a, b}\neq 0$, one has $\delta_{i'0}-\delta_{a0}=-(\delta_{j'0}-\delta_{b0})$.
Hence, the coefficients in the RHS can be expressed in terms of $z = y/x$ (see~\eqref{eq:qp} below).
Thus, the proof is attributed to the next lemma.
\end{proof}

\begin{lemma}[Quasi-periodicity of the $R$ matrix]
For $a,b,i',j' \in \Z_{n+1}$, the following relation holds
\begin{equation}
\label{eq:qp}
R(z)_{i', j'}^{a, b} = z^{\delta_{j'0}-\delta_{b0}} t^{-\theta(i'\ne0,j'=0)+\theta(a=0,b\ne0)}R(z)_{i'-1,j'-1}^{a-1,b-1},
\end{equation}
where all indices of $R(z)$ should be taken to be in $\{0,\ldots, n\}$ for the inequalities in~\eqref{a:rmat}.
\end{lemma}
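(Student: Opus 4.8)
The plan is to verify~\eqref{eq:qp} by a direct case analysis organised around the support of the $R$ matrix~\eqref{a:rmat}. Recall that $R(z)_{i',j'}^{a,b}$ vanishes unless $\{i',j'\}=\{a,b\}$ as multisets. Since the shift $\gamma\mapsto\gamma-1$ on $\Z_{n+1}$ is a bijection, this multiset condition is preserved by decrementing all four indices simultaneously; hence the entry $R(z)_{i',j'}^{a,b}$ and the entry $R(z)_{i'-1,j'-1}^{a-1,b-1}$ on the right-hand side of~\eqref{eq:qp} vanish together, and in that situation~\eqref{eq:qp} is the trivial identity $0 = (\text{power of }z)(\text{power of }t)\cdot 0$. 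It therefore suffices to treat the three families of non-zero entries separately: the diagonal entry $R(z)_{\alpha,\alpha}^{\alpha,\alpha}=1$, the transmission entry $R(z)_{\alpha,\beta}^{\alpha,\beta}$, and the reflection entry $R(z)_{\alpha,\beta}^{\beta,\alpha}$, for $\alpha\neq\beta$.

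In each family, after cancelling the common scalar ($1$ in the diagonal case, $(1-z)/(1-tz)$ for transmission, $(1-t)/(1-tz)$ for reflection) the claim reduces to an elementary identity among exponents. Write $\alpha'=\alpha-1$ and $\beta'=\beta-1$ for the decremented indices, reduced mod $n+1$ to $\{0,\ldots,n\}$. In the diagonal case both exponents are $0$. In the transmission case $(a,b)=(i',j')=(\alpha,\beta)$, so the $z$-exponent $\delta_{j'0}-\delta_{b0}$ vanishes and what remains is
\begin{equation*}
\theta(\alpha<\beta)=\theta(\alpha'<\beta')-\theta(\alpha\neq 0,\,\beta=0)+\theta(\alpha=0,\,\beta\neq 0).
\end{equation*}
In the reflection case $(a,b)=(j',i')=(\beta,\alpha)$, so the $t$-exponent $-\theta(i'\neq 0,j'=0)+\theta(a=0,b\neq 0)$ equals $-\theta(\alpha\neq 0,\beta=0)+\theta(\alpha\neq 0,\beta=0)=0$, and what remains is
\begin{equation*}
\theta(\alpha>\beta)=\theta(\alpha'>\beta')+\delta_{\beta 0}-\delta_{\alpha 0}.
\end{equation*}

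Finally, each of these two scalar identities is checked by splitting into the subcases $\alpha,\beta\ge 1$ (the decrement does not wrap, the linear order is unchanged, and every correction term vanishes), $\alpha=0<\beta$ (the decrement sends $\alpha$ to $n$, flipping the comparison, which is compensated exactly by the $+\theta(\alpha=0,\beta\neq 0)$ term, resp.\ the $-\delta_{\alpha 0}$ term), and $\beta=0<\alpha$ (the mirror situation, compensated by $-\theta(\alpha\neq 0,\beta=0)$, resp.\ $+\delta_{\beta 0}$). The whole computation is mechanical and I do not expect a genuine obstacle; the one real risk is an off-by-one or sign error in the exponent arithmetic, which is exactly why it pays to isolate the three vertex types and the three ``wrap / no-wrap'' subcases before doing the bookkeeping. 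As a consistency check, note that the $t$-correction is symmetric under $\alpha\leftrightarrow\beta$ — which is precisely why it is invisible on the reflection vertex — reflecting the fact that it tracks the asymmetric factor $t^{a_{\beta+1}+\cdots+a_n}$ built into $L(z)$ in~\eqref{a:L1}.
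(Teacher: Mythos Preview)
Your proof is correct and follows essentially the same approach as the paper: a direct case check against the explicit formula~\eqref{a:rmat}. The paper's proof is more compressed, displaying only the four identities that arise when a single index wraps (one of $\alpha,\beta$ is $0$), whereas you organize the computation by vertex type (diagonal, transmission, reflection) and then by wrap/no-wrap subcases; the content is the same. One small quibble: your closing ``consistency check'' remark that the $t$-correction is symmetric under $\alpha\leftrightarrow\beta$ is not quite right as stated (the expression $-\theta(\alpha\neq 0,\beta=0)+\theta(\alpha=0,\beta\neq 0)$ is in fact odd under that swap); the vanishing on the reflection vertex is rather because $(a,b)=(\beta,\alpha)$ makes $\theta(a=0,b\neq 0)=\theta(\beta=0,\alpha\neq 0)$ coincide with $\theta(i'\neq 0,j'=0)$, but this is an aside and does not affect the argument.
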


\begin{proof}
This is immediately checked from (\ref{a:rmat}) as   
\[
R(z)^{\alpha+1,0}_{\alpha+1,0}=t^{-1}R(z)^{\alpha, n}_{\alpha, n},
\quad
R(z)^{0,\alpha+1}_{0,\alpha+1}=tR(z)^{n,\alpha}_{n,\alpha},
\quad
R(z)^{\alpha+1,0}_{0,\alpha+1} = z^{-1}R(z)^{\alpha, n}_{n,\alpha},
\quad
R(z)_{\alpha+1,0}^{0,\alpha+1} = zR(z)_{\alpha, n}^{n,\alpha}
\]
for $0 \le \alpha \le n-1$.
\end{proof}

\subsection{Proof of the Zamolodchikov--Faddeev algebra relation}

\begin{theorem}\label{a:th:zf}
The set of operators $\{X_0(z), \ldots, X_n(z)\}$ defined by the CTM diagram~\eqref{a:ef12} for $n \ge 2$ and in Remark~\ref{a:re:nn} for $n=0,1$ satisfy the ZF algebra relation~\eqref{a:zf} with the structure function given by the $R$ matrix in~\eqref{a:rmat}.
\end{theorem}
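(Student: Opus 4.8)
The plan is to prove Theorem~\ref{a:th:zf} by induction on the rank $n$, using the rank recursion of Proposition~\ref{a:pr:rt} to peel off one ``layer'' of the CTM, the rank-reducing $RTT=TTR$ relation of Proposition~\ref{a:pr:rr} to propagate the newly exposed $T$-factor, and the inductive hypothesis (the ZF relation for the $(n-1)$-ASEP) to propagate the remaining $\widetilde{X}$-factor. For the base cases $n=0,1$ the operators are scalars by Remark~\ref{a:re:nn} ($X_0(z)=1$ for $n=0$, and $X_0(z)=1,\ X_1(z)=z$ for $n=1$), so \eqref{a:zf} reduces to a short direct verification against the explicit entries of the $R$ matrix in \eqref{a:rmat}; nothing conceptual happens there.

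For the inductive step, fix $n\ge 2$ and assume \eqref{a:zf} for the $(n\!-\!1)$-ASEP operators $\widetilde{X}_0(z),\dots,\widetilde{X}_{n-1}(z)$. Two structural observations drive the argument. First, by Proposition~\ref{a:pr:rt} one has $X_\alpha(z)=\sum_{i=0}^{n-1}\widetilde{X}_i(z)\,T(z)_{i\alpha}$, and under the relabeling used there the $\widetilde{X}_i(z)$ act only on the Fock factors numbered $n,\dots,\frac{n(n-1)}{2}$, whereas the $T(z)_{ij}$ involve only the oscillators $\aaa^\pm_r,\kk_r$ with $r=1,\dots,n-1$; hence $\widetilde{X}_i(z)$ commutes with $T(z')_{j\beta}$ for all $z,z'$. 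Second, the entries of $R(z)$ are scalars (so they commute with every operator), and the restriction of $R(z)$ in \eqref{a:rmat} to indices in $\{0,\dots,n-1\}$ is exactly the structure function for the $(n\!-\!1)$-ASEP, since its entries depend only on the order relations among the indices. Inserting the recursion twice and moving each $T$-factor past the adjacent $\widetilde{X}$-factor gives
\[
X_\alpha(y)X_\beta(x)=\sum_{i,j=0}^{n-1}\widetilde{X}_i(y)\widetilde{X}_j(x)\,T(y)_{i\alpha}\,T(x)_{j\beta},
\]
where the factor $\widetilde{X}_i(y)\widetilde{X}_j(x)$ lives in $\mathrm{End}\bigl(F^{\otimes\frac{(n-1)(n-2)}{2}}\bigr)$ and the factor $T(y)_{i\alpha}\,T(x)_{j\beta}$ in $\mathrm{End}\bigl(F^{\otimes(n-1)}\bigr)$, so the two may be transformed independently.

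Now apply the inductive hypothesis to the first factor, $\widetilde{X}_i(y)\widetilde{X}_j(x)=\sum_{i',j'=0}^{n-1}R(y/x)^{j,i}_{i',j'}\widetilde{X}_{i'}(x)\widetilde{X}_{j'}(y)$, and apply Proposition~\ref{a:pr:rr} with $(a,b)=(\beta,\alpha)$ to the second,
\[
\sum_{i,j=0}^{n-1}R(y/x)^{j,i}_{i',j'}\,T(y)_{i\alpha}\,T(x)_{j\beta}=\sum_{\gamma,\delta=0}^{n}T(x)_{i'\gamma}\,T(y)_{j'\delta}\,R(y/x)^{\beta,\alpha}_{\gamma,\delta}.
\]
Substituting, pulling the scalar $R$-entries to the front, commuting $T(x)_{i'\gamma}$ past $\widetilde{X}_{j'}(y)$, and reading the recursion \eqref{a:xxt} in reverse, one arrives at $X_\alpha(y)X_\beta(x)=\sum_{\gamma,\delta=0}^{n}R(y/x)^{\beta,\alpha}_{\gamma,\delta}X_\gamma(x)X_\delta(y)$, which is \eqref{a:zf}; this closes the induction.

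The main obstacle I anticipate is organizational rather than conceptual: matching the various intermediate double sums to the precise index placement of Proposition~\ref{a:pr:rr} (which $T$ carries which spectral parameter, which slot of $R$ is summed, and the swap $i\leftrightarrow j'$, $j\leftrightarrow i'$ in the exponents) has to be done carefully, since a single transposed index or swapped $x\leftrightarrow y$ would spoil the cancellation. A secondary point needing a clean statement is the tensor-factorization/commutativity of the $\widetilde{X}_i$'s with the $T_{jk}$'s; once that is in place, the only inputs are Propositions~\ref{a:pr:rt} and~\ref{a:pr:rr} together with the inductive hypothesis, and the associativity of the ZF algebra is automatic from the Yang--Baxter equation \eqref{a:ybR}, hence not needed here.
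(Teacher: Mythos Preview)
Your proposal is correct and follows essentially the same approach as the paper: induction on $n$ via the recursion of Proposition~\ref{a:pr:rt}, the rank-reducing $RTT=TTR$ relation of Proposition~\ref{a:pr:rr}, and the commutativity of $\widetilde{X}_i(z)$ with $T(z')_{j\beta}$. The only cosmetic difference is the direction of the computation: the paper expands both sides of \eqref{a:zf} and applies \eqref{a:rrr} to the right-hand side so that the comparison reduces to the rank $n-1$ ZF relation, whereas you start from the left-hand side, apply the inductive hypothesis first and then \eqref{a:rrr}; these are the same identity read in opposite orders.
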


\begin{proof}
We prove (\ref{a:zf})  by induction on $n$.
Substituting the recursion relation (\ref{a:xxt}) into it, we get 
\begin{equation}
\sum_{i,j=0}^{n-1} \widetilde{X}_i(y)\widetilde{X}_j(x) T(y)_{i \alpha}T(x)_{j \beta}
 =\sum_{\gamma, \delta=0}^n  R(y/x)^{\beta, \alpha}_{\gamma, \delta}
 \sum_{\gamma', \delta'=0}^{n-1}
 \widetilde{X}_{\gamma'}(x)\widetilde{X}_{\delta'}(y) 
T(x)_{\gamma', \gamma}T(y)_{\delta', \delta}.
 \end{equation}
By means of the rank reducing $RTT=TTR$ relation (\ref{a:rrr}), 
the sum $\sum_{\gamma, \delta=0}^n$ in the RHS can be taken. 
The result reads
\begin{equation}
\sum_{i,j=0}^{n-1} \widetilde{X}_i(y)\widetilde{X}_j(x) T(y)_{i \alpha}T(x)_{j \beta}
 =
\sum_{\gamma', \delta'=0}^{n-1}
\widetilde{X}_{\gamma'}(x)\widetilde{X}_{\delta'}(y) 
\sum_{i,j=0}^{n-1} 
R(y/x)^{j,i}_{\gamma', \delta'}
 T(y)_{i \alpha}T(x)_{j \beta}.
\end{equation}
This follows from the ZF algebra relation with one lower rank:
\[
\widetilde{X}_i(y) \widetilde{X}_j(x) = \sum_{\gamma', \delta'=0}^{n-1} \widetilde{X}_{\gamma'}(x)\widetilde{X}_{\delta'}(y) R(y/x)^{j,i}_{\gamma', \delta'}.
\]
Therefore the proof reduces to the $n=0$ case, which is straightforward from Remark~\ref{a:re:nn}.
\end{proof}

Up to convention, the matrix product formula (\ref{a:mho}) was first established 
in~\cite{PEM} by a direct, albeit quite tedious, verification of the hat relation 
(\ref{a:hr1}). The proof was later simplified in~\cite{CDW} by introducing 
the Yang--Baxterizations \(X_0(z), \ldots, X_n(z)\) and the ZF algebra, 
where Proposition~\ref{a:pr:rr} was also derived based on a few lemmas; 
see~\cite[Eq.(36)]{CDW}. Our proof, however, is the most intrinsic from the 
perspective of quantum integrable systems, as it directly stems from the 
Yang--Baxter equation for the stochastic \(R\) matrices, highlighted 
by the key connection (\ref{m:LtT}).

\section{Concluding remarks}\label{sec:cr}

We have unveiled several new insights into the construction of the 
stationary states in the multispecies ASEP by invoking the strange five vertex model.
Let us conclude the paper with two remarks.

(i) The strange five vertex model in this paper differs from~\cite{KMO1,KMO2} 
(see Remark~\ref{a:re:st}), where 
the totally asymmetric simple exclusion process (TASEP) corresponding to $t=0$
was treated using CTMs and the tetrahedron equation.
In particular the tetrahedron equation leads to an elegant proof of the ZF algebra relation 
without going through the inductive steps on the rank $n$~\cite{KMO2,K22}.
It remains an open question whether such a superior variant of the quantum 
oscillator/CTM approach can be formulated that smoothly interpolates 
TASEP and ASEP.
Regarding this issue, an alternative queueing construction, 
as discussed in \cite[Sec.7]{Martin20}, may offer valuable insights.

(ii) Stationary probabilities are connected to Macdonald polynomials,
particularly when an additional parameter $q$ 
and the weight variables are supplemented~\cite{CDW,CMW,DW}.
We have touched upon the implications of the parameter $q$ 
in relation to MLQs and $t$-oscillators in this paper.
However, a comprehensive treatment of their applications 
requires further investigation.

\section*{Acknowledgments}

The authors thank Arvind Ayyer for useful communications.

A.K.\ was supported by Grants-in-Aid for Scientific Research No.~24K06882 from JSPS.
M.O.\ was partly supported by MEXT Promotion of Distinctive Joint Research Center Program JPMXP0723833165.
T.S.\ was supported by Grant-in-Aid for Scientific Research for Early-Career Scientists 23K12983.

\bibliographystyle{alpha}

\begin{thebibliography}{KMMO16}

\bibitem[AKSS09]{AKSS}
Chikashi Arita, Atsuo Kuniba, Kazumitsu Sakai, and Tsuyoshi Sawabe.
\newblock Spectrum of a multi-species asymmetric simple exclusion process on a
  ring.
\newblock {\em J. Phys. A}, 42(34):345002, 41, 2009.

\bibitem[ANP23]{ANP}
Amol Aggarwal, Matthew Nicoletti, and Leonid Petrov.
\newblock Colored interacting particle systems on the ring: {S}tationary
  measures from {Y}ang--{B}axter equation.
\newblock Preprint, \arxiv{2309.11865}, 2023.

\bibitem[AS24]{AS24}
Willie Aboumrad and Travis Scrimshaw.
\newblock On the structure and representation theory of {$q$}-deformed
  {C}lifford algebras.
\newblock {\em Math. Z.}, 306(1):Paper No. 10, 28, 2024.

\bibitem[Bax89]{Bax}
Rodney~J. Baxter.
\newblock {\em Exactly solved models in statistical mechanics}.
\newblock Academic Press, Inc. [Harcourt Brace Jovanovich, Publishers], London,
  1989.
\newblock Reprint of the 1982 original.

\bibitem[BW22]{BW}
Alexei Borodin and Michael Wheeler.
\newblock Colored stochastic vertex models and their spectral theory.
\newblock {\em Ast\'erisque}, (437):ix+225, 2022.

\bibitem[CdGW15]{CDW}
Luigi Cantini, Jan de~Gier, and Michael Wheeler.
\newblock Matrix product formula for {M}acdonald polynomials.
\newblock {\em J. Phys. A}, 48(38):384001, 25, 2015.

\bibitem[CMW22]{CMW}
Sylvie Corteel, Olya Mandelshtam, and Lauren Williams.
\newblock From multiline queues to {M}acdonald polynomials via the exclusion
  process.
\newblock {\em Amer. J. Math.}, 144(2):395--436, 2022.

\bibitem[CRV14]{CRV}
N.~Crampe, E.~Ragoucy, and M.~Vanicat.
\newblock Integrable approach to simple exclusion processes with boundaries.
  {R}eview and progress.
\newblock {\em J. Stat. Mech. Theory Exp.}, (11):P11032, 42, 2014.

\bibitem[dGW16]{DW}
Jan de~Gier and Michael Wheeler.
\newblock A summation formula for {M}acdonald polynomials.
\newblock {\em Lett. Math. Phys.}, 106(3):381--394, 2016.

\bibitem[Dri87]{D86}
V.~G. Drinfel'd.
\newblock Quantum groups.
\newblock In {\em Proceedings of the {I}nternational {C}ongress of
  {M}athematicians, {V}ol. 1, 2 ({B}erkeley, {C}alif., 1986)}, pages 798--820.
  Amer. Math. Soc., Providence, RI, 1987.

\bibitem[FM07]{FM}
Pablo~A. Ferrari and James~B. Martin.
\newblock Stationary distributions of multi-type totally asymmetric exclusion
  processes.
\newblock {\em Ann. Probab.}, 35(3):807--832, 2007.

\bibitem[Hay90]{Hayashi90}
Takahiro Hayashi.
\newblock {$q$}-analogues of {C}lifford and {W}eyl algebras---spinor and
  oscillator representations of quantum enveloping algebras.
\newblock {\em Comm. Math. Phys.}, 127(1):129--144, 1990.

\bibitem[HP40]{HP}
T.~Holstein and H.~Primakoff.
\newblock Field dependence of the intrinsic domain magnetization of a
  ferromagnet.
\newblock {\em Phys. Rev.}, 58(12):1098--1113, 1940.

\bibitem[IKO04]{IKO}
R.~Inoue, A.~Kuniba, and M.~Okado.
\newblock A quantization of box-ball systems.
\newblock {\em Rev. Math. Phys.}, 16(10):1227--1258, 2004.

\bibitem[IMO24]{IMO}
S.~Iwao, K.~Motegi, and R.~Ohkawa.
\newblock Tetrahedron equation and {S}chur functions.
\newblock Preprint, \arxiv{2405.10011}, 2024.

\bibitem[Jim85]{J1}
Michio Jimbo.
\newblock A {$q$}-difference analogue of {$U(\mathfrak{g})$} and the
  {Y}ang--{B}axter equation.
\newblock {\em Lett. Math. Phys.}, 10(1):63--69, 1985.

\bibitem[Kas90]{Kashiwara90}
Masaki Kashiwara.
\newblock Crystalizing the {$q$}-analogue of universal enveloping algebras.
\newblock {\em Comm. Math. Phys.}, 133(2):249--260, 1990.

\bibitem[KMMO16]{KMMO}
A.~Kuniba, V.~V. Mangazeev, S.~Maruyama, and M.~Okado.
\newblock Stochastic {$R$} matrix for {$U_q(A_n^{(1)})$}.
\newblock {\em Nuclear Phys. B}, 913:248--277, 2016.

\bibitem[KMO15]{KMO1}
Atsuo Kuniba, Shouya Maruyama, and Masato Okado.
\newblock Multispecies {TASEP} and combinatorial {$R$}.
\newblock {\em J. Phys. A}, 48(34):34FT02, 19, 2015.

\bibitem[KMO16]{KMO2}
Atsuo Kuniba, Shouya Maruyama, and Masato Okado.
\newblock Multispecies {TASEP} and the tetrahedron equation.
\newblock {\em J. Phys. A}, 49(11):114001, 22, 2016.

\bibitem[Kun22]{K22}
Atsuo Kuniba.
\newblock {\em Quantum groups in three-dimensional integrability}.
\newblock Theoretical and Mathematical Physics. Springer, Singapore, [2022]
  \copyright2022.

\bibitem[Mar20]{Martin20}
James~B. Martin.
\newblock Stationary distributions of the multi-type {ASEP}.
\newblock {\em Electron. J. Probab.}, 25:Paper No. 43, 41, 2020.

\bibitem[MGP68]{MGP}
J.~T. MacDonald, J.~H. Gibbs, and A.~C. Pipkin.
\newblock Kinetics of biopolymerization on nucleic acid templates.
\newblock {\em Biopolymers}, 6(1):1--25, 1968.

\bibitem[PEM09]{PEM}
S.~Prolhac, M.~R. Evans, and K.~Mallick.
\newblock The matrix product solution of the multispecies partially asymmetric
  exclusion process.
\newblock {\em J. Phys. A}, 42(16):165004, 25, 2009.

\bibitem[Spi70]{Sp}
Frank Spitzer.
\newblock Interaction of {M}arkov processes.
\newblock {\em Advances in Math.}, 5:246--290, 1970.

\end{thebibliography}

\end{document}